\newtheorem{propo}{Proposition}[section]
\newtheorem{lemma}[propo]{Lemma}
\newtheorem{definition}[propo]{Definition}
\newtheorem{coro}[propo]{Corollary}
\newtheorem{thm}[propo]{Theorem}
\def\hG{\widehat{G}}
\def\hY{\widehat{Y}}
\def\hZ{\widehat{Z}}
\def\hprob{\widehat{\mathbb P}}
\def\hE{\widehat{\mathbb E}}
\def\graph{{\cal G}}
\def\hZ{\widehat{Z}}
\def\ve{\varepsilon}
\def\<{\langle}
\def\>{\rangle}
\def\ed{\stackrel{{\rm d}}{=}}
\def\da{{\partial a}}
\def\di{{\partial i}}
\def\dj{{\partial j}}
\def\dn{{\partial n}}
\def\d2n{{\partial^2 n}}
\def\naturals{{\mathbb N}}
\def\reals{{\mathbb R}}
\def\de{{\rm d}}
\def\tE{\widetilde{\mathbb E}}
\def\E{{\mathbb E}}
\def\prob{{\mathbb P}}
\def\tprob{\widetilde{\mathbb P}}
\def\hprob{\widehat{\mathbb P}}
\def\qprob{{\mathbb Q}}
\def\Var{{\rm Var}}
\def\cX{{\cal X}}
\def\cS{{\cal S}}
\def\hcS{\widehat{\cal S}}
\def\hp{\widehat{p}}
\def\meas{{\sf M}}
\def\sE{{\sf E}}
\def\sP{{\sf P}}
\def\me{\nu}
\def\mh{\widehat{\nu}}
\def\ind{{\mathbb I}}
\def\eps{\epsilon}
\def\Sp{{\sf S}}
\def\Qp{{\sf Q}}
\def\Ball{{\sf B}}
\def\cBall{\overline{\sf B}}
\def\dBall{{\sf D}}
\def\Tree{{\sf T}}
\def\Comp{{\mathfrak C}}
\def\H{{\mathbb H}}
\def\oX{X'}
\def\ox{x'}
\def\uth{\underline{\theta}}
\def\Fbp{{\sf F}^n}
\def\Fbps{{\sf F}^{n_*}}
\def\Fde{{\sf F}^{\infty}}
\def\sTV{\mbox{\tiny\rm TV}}
\def\1t{{\tt 1}}
\def\0t{{\tt 0}}
\def\ogamma{\overline{\gamma}}
\newcommand{\eqnsection}{\renewcommand{\theequation}{\thesection.\arabic{equation}}
     \makeatletter \csname @addtoreset\endcsname{equation}{section}\makeatother}
\begin{document}
\eqnsection

\title{Estimating Random Variables from Random Sparse Observations}

\author{Andrea Montanari\thanks{Departments of Electrical Engineering 
and Statistics,
Stanford University,
{\tt montanari@stanford.edu}}}

\date{\today}
\maketitle

\abstract{Let  $X_1,\dots, X_n$ be a collection of iid discrete 
random variables, and $Y_1,\dots, Y_m$ a set of noisy observations
of such variables. Assume each observation $Y_a$ to be a random function of
some a random subset of the $X_i$'s, and consider the conditional
distribution of $X_i$ given the observations, namely 
$\mu_i(x_i)\equiv\prob\{X_i=x_i|Y\}$
(\emph{a posteriori probability}). 

We establish a general decoupling principle among the $X_i$'s, as well 
as a relation between the distribution of $\mu_i$,
and the fixed points of the associated density evolution operator.
These results hold asymptotically in the large system limit,
provided the average number of variables an observation depends on
is bounded.
We discuss the relevance of our result to a number of applications, 
ranging from
sparse graph codes, to multi-user detection, to group testing.}
%
%
\section{Introduction}

Sparse graph structures have proved useful in a number of information 
processing tasks, from channel coding \cite{RiU05}, 
to source coding \cite{CaireEtAl}, to sensing
and signal processing \cite{Donoho,CandesRombergTao}. Recently
similar design ideas have been proposed for code division multiple access
(CDMA) communications \cite{MonTse,TanakaSparse06,SaadSparse}, and group testing (a classical 
technique in statistics) \cite{MezardGroup}. 

The computational problem underlying many of these developments
can be described as follows: \emph{infer the values of a large collection 
of random variables,
given a set of constraints, or observations, that induce relations among them.}
While such a task is generally computationally hard 
\cite{Berlekamp,NP_CDMA}), sparse graphical structures allow for 
low-complexity algorithms (for instance iterative message passing
algorithms as belief propagation) that were revealed to be
very effective in practice. A precise analysis of these 
algorithms and of their gap to optimal (computationally intractable)
inference is however a largely open problem.

In this paper we consider an idealized setting in which 
we aim at estimating $n$ iid discrete random variables $X=(X_1,\dots,X_n)$ 
based on noisy observations. We will focus on the large system
limit $n\to\infty$, with the number of observations scaling like $n$.
We further restrict our system to be \emph{sparse} in the sense that 
each observation depends on a bounded (on average) number of variables.
A schematic representation is given in Fig.~\ref{fig:Factor}.

If $i\in[n]$, and $Y$ denotes collectively the observations,
a sufficient statistics for estimating $X_i$ is 
\begin{eqnarray}
\mu_i(x_i) = \prob\{X_i=x_i|Y\}\, .
\end{eqnarray}
This paper establishes two main results: an asymptotic decoupling among
the $X_i$'s, and a characterization of the asymptotic distribution
of $\mu_i(\,\cdot\,)$ when $Y$ is drawn according to the source and channel 
model. In the remainder of the introduction we will discuss a few (hopefully)
motivating examples, and we will give an informal summary of our results.
Formal definitions, statements and proofs can be found in Sections
\ref{sec:Def} to \ref{sec:MainProof}.
%
%
\subsection{Motivating examples}
\label{sec:Examples}

In this Section we present a few examples that fit within the 
mathematical framework developed in the present paper.
The main restrictions imposed by this framework are: 
$(i)$ The `hidden variables' $X_i$'s are independent;
$(ii)$ The bipartite graph $G$ connecting hidden variables and observations
lacks  any geometrical structure. 

Our results crucially rely on 
these two features. Some further technical assumptions 
will be made that partially rule out some of these examples
below. However  we expect these assumptions to be removable by 
generalizing the arguments presented in the next sections.

\begin{figure}[t]
\center{\includegraphics[width=9.0cm]{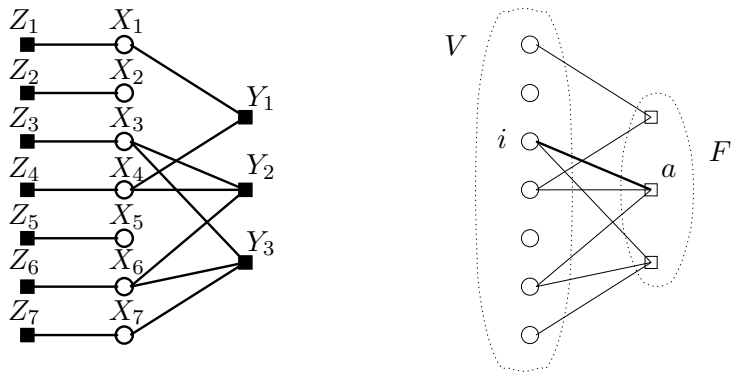}}
\put(-260,130){$Z_1$}
\put(-222,130){$X_1$}
\put(-260,111){$Z_2$}
\put(-222,111){$X_2$}
\put(-260,93){$Z_3$}
\put(-222,93){$X_3$}
\put(-260,74){$Z_4$}
\put(-222,74){$X_4$}
\put(-260,56){$Z_5$}
\put(-222,56){$X_5$}
\put(-260,39){$Z_6$}
\put(-222,39){$X_6$}
\put(-260,20){$Z_7$}
\put(-222,20){$X_7$}
\put(-170,101){$Y_1$}
\put(-170,75){$Y_2$}
\put(-170,46){$Y_3$}
\put(-95,120){$V$}
\put(5,80){$F$}
\put(-75,85){$i$}
\put(-13,74){$a$}
\caption{{\small Factor graph representation of a simple sparse observation 
systems with $n=7$ hidden variables $\{X_1,\dots,X_7\}$ and $m=3$
`multi-variable' observations $\{Y_1,\dots,Y_3\}$. On the right: the 
bipartite graph $G$. Highlighted is the edge $(i,a)$.}}
\label{fig:Factor}
\end{figure}
%
\vspace{0.25cm}

{\bf Source coding through sparse graphs.}
Let $(X_1, \dots, X_n)$ be iid Bernoulli$(p)$. 
Shannon's theorem implies that such a vector can be stored in 
$nR$ bits for any $R>h(p)$ (with $h(p)$ the binary entropy function), 
provided we allow for a vanishingly small failure probability.
The authors of 
Refs.~\cite{Murayama,MurayamaISIT,CaireEtAl} proposed to implement this 
compression through a sparse linear transformation. Given a source 
realization $X=x = (x_1,\dots,x_n)$, the stored vector reads 
\begin{eqnarray*}
y = \H x\;\;\;\;\; \mod 2\, ,
\end{eqnarray*}
with $\H$ a sparse $\{0,1\}$ valued 
random matrix of dimensions $m\times n$, and $m=nR$.
According to our general model, each of the coordinates
of $y$ is a function (mod $2$ sum) of a bounded (on average)
subset of the source buts $(x_1,\dots,x_n)$.

The $i$-th information bit can be reconstructed
from the stored information by computing the conditional distribution 
$\mu_i(x_i)=\prob\{X_i=x_i|Y\}$.
In practice, belief propagation provides a rough estimate of $\mu_i$.
Determining the distribution of $\mu_i$ (which is the main topic of the 
present paper) allows to determine
the optimal performances (in terms of bit error rate) of such a system.

\vspace{0.25cm}

{\bf Low-density generator matrix (LDGM) codes.}
We take $(X_1,\dots, X_n)$ iid Bernoulli$(1/2)$,
encode them in a longer vector $\oX=
(\oX_1,\dots,\oX_m)$ via the mapping $\ox = \H x \mod 2$, 
and transmit the encoded bits through a noisy memoryless channel, 
thus getting output $(Y_1,\dots,Y_m)$
\cite{LubyTransform}. One can for instance think of
a binary symmetric channel BSC$(p)$, 
whereby $Y_a= \oX_a$ with probability $1-p$
and $Y_a=\oX_a\oplus\1t$ with probability $1-p$. 
Again, decoding can be realized through a belief propagation estimate
of the conditional probabilities $\mu_i(x_i)= \prob\{X_i=x_i|Y\}$.

If the matrix $\H$ is random and sparse, this problem fits 
in our framework with the information (uncoded) bits $X_i$'s being hidden
variables, while the $y_a$'s correspond to observations.
\vspace{0.25cm}

{\bf Low-density parity-check (LDPC) codes.}
With LDPC codes, one sends through a noisy channel a codeword
$X=(X_1,\dots,X_n)$  that is a uniformly random
vector in the null space of a random sparse matrix $\H$
\cite{GallagerThesis,RiU05}.
While in general this does not fit our setting, one can construct
an equivalent problem (for analysis purposes) which does,
provided the communication channel is binary memoryless symmetric, 
say BSC$(p)$.

Within the equivalent problem $(X_1,\dots,X_n)$ are iid
Bernoulli$(1/2)$ random bits. Given one realization $X=x$ of these
bits, one computes its syndrome $y=\H x \mod 2$ and transmits it through
a noiseless channel. Further, each of the $X_i$'s is transmitted
through the original noisy channel (in our example BSC$(p)$) yielding
output $Z_i$. If we denote the observations collectively as
$(Y,Z)$, it is not hard to show that the conditional
probability $\mu_i(x_i)=\prob\{X_i=x_i|Y,Z\}$ has in fact the same 
distribution of the a posteriori probabilities in the original LDPC model.

Characterizing this distribution allows to determine the 
information capacity of such coding systems, and their performances under
MAP decoding \cite{Montanari05,KudekarMacris06,KoradaKudekarMacris07}.
\vspace{0.25cm}

{\bf Compressed sensing.}
In compressed sensing the real vector $x=(x_1,\dots,x_n)\in\reals^{n}$ 
is measured through a set of linear projections $y_1=h_1^{T}x$,
$\dots, y_m=h_m^Tx$. In this literature no assumption 
is made on the distribution of $x$, which is only constrained to be
sparse in a properly chosen basis \cite{Donoho,CandesRombergTao}. 
Further, unlike in our setting, the vector components $x_i$ do not 
belong to any finite alphabet. However, some applications justify the 
study of a probabilistic version, whereby the basic variables are 
quantized. An example is provided by the next item.
\vspace{0.25cm}

{\bf Network measurements.} 
The size of flows in the Internet can vary from a few  (as 
in acknowledgment messages) to several million packets (as in content 
downloads). Keeping track of the sizes of flows passing through a router can be
useful for a number of reasons, such as billing, or security, 
or traffic engineering \cite{Varghese03}. 

Flow sizes can be  modeled as iid random integers $X=(X_1,\dots,X_n)$. 
Their common distribution is often assumed to be a heavy-tail one.
As a consequence, the largest flow is typically of size $n^{a}$
for some $a>0$. It is
therefore highly inefficient to keep a separate counter of capacity 
$n^a$ for each flow. It was proposed in \cite{CountAllerton07}
to store instead a shorter vector $Y= \H X$, with $\H$ 
a properly designed sparse random matrix. The problem of 
reconstructing the $X_i$'s is, once more, analogous to the 
above.
\vspace{0.25cm}

{\bf Group testing.} Group testing was proposed during World War II
as a technique for reducing the costs of syphilis tests in the 
army by simultaneously testing \emph{groups} of soldiers
\cite{GTesting}.
The variables $X=(X_1,\dots,X_n)$ represent the individuals status
($1=$ infected, $0=$ healthy) and are modeled as iid Bernoulli$(p)$
(for some small $p$). Test $a\in\{1,\dots, m\}$ involves a subset 
$\da\subseteq [n]$ of the individuals and returns positive value
$Y_a=1$ if $X_i=1$ for some $i\in\da$ and $Y_a=0$ 
otherwise\footnote{The problem can be enriched by allowing for a small 
false negative (or false positive) probability.}.
It is interesting to mention that the problem has a connection with 
random multi-access channels, that was exploited in \cite{MAccess1,MAccess2}.

One is interested in the conditional probability for
the $i$-th individual to be infected given the observations:
$\mu_i(1) = \prob\{X_i=1|Y\}$. Choices of the groups (i.e. of the subsets
$\da$) based on random graph structures where recently studied 
and optimized in \cite{MezardGroup}. 

\vspace{0.25cm}

{\bf Multi-user detection.} In a general vector channel,
one or more users communicate symbols $X=(X_1,\dots,X_n)$
(we assume, for the sake of simplicity, perfect synchronization).
The receiver is given a channel output $Y=(Y_1,\dots,Y_m)$, that 
is usually modeled as a linear function of the input, plus 
gaussian noise $Y = \H X+W$, where $W=(W_1,\dots,W_m)$ are 
normal iid random variables.
Examples are CDMA or multiple-input multiple-output channels 
(with perfect channel state information)
\cite{VerduBook,TseBook}. 

The analysis simplifies considerably if the $X_i$'s are assumed to be 
normal as well \cite{TseHanly,VerduShamaiCDMA}. 
However, in many circumstances a binary
or quaternary modulation is used, and the normal assumption is therefore 
unrealistic.
The non-rigorous `replica method' from statistical physics have been used
to compute the channel capacity in these cases
\cite{TanakaCDMA}. A proof of replicas
predictions have been obtained in \cite{MonTse} in under some condition on the 
spreading factor. The same techniques were applied in more general settings
in \cite{GuoVerdu,GuoWang1,GuoWang2}.

However, specific assumptions on the spreading factor (and noise parameters) 
were necessary. Such assumptions ensured an appropriate density 
evolution operator to have unique fixed point.
The results of the present paper should allow to
prove replica results without conditions on the spreading.
\vspace{0.25cm}

As mentioned above we shall make a few technical assumptions 
on the structure of the sparse observation system.
These will concern the distribution of the bipartite graph connecting 
hidden variables and observations, 
as well as the dependency of the noisy observations on the $X$'s.
While such assumptions rule out some of the example above (for
instance, they exclude general irregular LDPC ensembles), we do not think they 
are crucial for the results to hold. 
%
%
\subsection{An informal overview}

We consider two types of observations: single variable observations
$Z=(Z_1,\dots, Z_n)$, and multi-variable observations 
$Y=(Y_1,\dots,Y_m)$.
For each $i\in [n]$, $Z_i$ is the result of observing $X_i$
through a memoryless noisy channel. Further for each $a$,
$Y_a$ is an independent noisy function of a subset
$\{X_j:\, j\in\da\}$ of the hidden variables. 
By this we mean that $Y_a$ is conditionally independent from all
the other variables, given $\{X_j:\, j\in\da\}$.
The subset $\da \subseteq [n]$ is itself random with, for each $i\in[n]$,
$i\in\da$ independently with probability $\gamma/n$.
\vspace{0.25cm}

Generalizing the above, we consider the conditional distribution of $X_i$, 
given $Y$ \emph{and} $Z$:
\begin{eqnarray}
\mu_i(x_i) \equiv \prob\{X_i=x_i|Y,Z\}\, .\label{eq:Marginal}
\end{eqnarray}
One may wonder whether additional information can be extracted by 
considering the correlation among hidden variables. 
Our first result is that for a generic subset of the variables,
these correlation vanish. This is stated informally below:
\begin{quote}
\emph{
For any uniformly random set of variable indices $i(1),\dots, i(k)\in [n]$
and any $\xi_1,\dots ,\xi_k\in \cX$
\begin{eqnarray}
\prob\{X_{i(1)}=\xi_1,\dots,X_{i(k)}=\xi_k|Y,Z\}\approx
\prob\{X_{i(1)}=\xi_1|Y,Z\}\cdots\prob\{X_{i(k)}=\xi_k|Y,Z\}\, .
\end{eqnarray}}
\end{quote}
This can be regarded as a generalization of the `decoupling 
principle' postulated in \cite{GuoVerdu}.
Here the $\approx$ symbols hides the large system ($n,m\to\infty$)
limit, and a `smoothing procedure' to be discussed below.
\vspace{0.25cm}

Locally, the graph $G$ converges to a random bipartite tree.
The locally tree-like structure of $G$ suggests the use of message passing 
algorithms, in particular belief propagation, 
for estimating the marginals $\mu_i$.
Consider the subgraph including $i$ as well as all the function nodes
$a$ such that $Y_a$ depends on $i$, and the other variables 
these observations depend on. Refer to the latter as to the 
`neighbors of $i$.' In belief propagation one assumes these to be independent 
\emph{in absence of $i$, and of its neighborhood}.

For any $j$, neighbor of $i$, let $\mu_{j\to i}$ denote the conditional
distribution of $X_j$ in the modified graph where $i$ (and the neighboring 
observations) have been taken out. Then BP provides a 
prescription\footnote{The mapping $\Fbp_i(\,\cdot\,)$ returns the 
marginal at $i$ with respect to the subgraph induced by $i$ and
its neighbors, when the latter are biased according to $\mu_{j\to i}$. 
For a more detailed description, we refer to Section \ref{sec:BP}.}
for computing $\mu_i$ in terms of the `messages' $\mu_{j\to i}$,
of the form $\mu_i=\Fbp_i(\{\mu_{j\to i}\})$.
We shall prove that this prescription is asymptotically correct.
\begin{quote}\emph{
Let $i$ be a uniformly random variable node and
$i(1),\dots, i(k)$ its neighbors. Then
\begin{eqnarray}
\mu_i \approx \Fbp_i(\mu_{i(1)\to i},\dots,\mu_{i(k)\to i})\, .
\label{eq:BPSymbolical}
\end{eqnarray}
}
\end{quote}

The neighborhood of $i$ converges to a Galton-Watson tree, with Poisson 
distributed degrees of mean $\gamma\alpha$ (for variable nodes, 
corresponding to variables $X_i$'s) 
and $\gamma$ (for function nodes corresponding to observation $Y_a$'s).
Such a tree is generated as follows. 
Start from a root variable node, generate a Poisson$(\gamma\alpha)$
number of function node descendants, and for each of
them an independent Poisson$(\gamma)$ number of variable node descendants.
This procedure is then repeated recursively.

In such a situation, consider again the BP equation (\ref{eq:BPSymbolical}).
Then the function $\Fbp_i(\,\cdots\,)$ can be approximated by a random function
corresponding to a random Galton-Watson neighborhood, do be denoted as $\Fde$. 
Further, 
one can hope that, if the graph $G$ is random, then the $\mu_{i(j)\to i}$
become iid random variables. Finally (and this is a specific
property of Poisson degree distributions) the residual graph with the 
neighborhood of $i$ taken out, has the same distribution (with slightly 
modified parameters) as the original one.
Therefore, one might imagine that the distribution of $\mu_i$
is the same as the one of the $\mu_{i(j)\to i}$'s.
Summarizing these observations one is lead to think that 
the distribution of $\mu_i$ must be (asymptotically for large systems)
a fixed point of the following distributional equation
\begin{eqnarray}
\nu \ed \Fde(\nu_1,\dots,\nu_l)\, .
\end{eqnarray}
This is an equation for the distribution of $\nu$ (the latter taking 
values in the set of distributions over the hidden variables $X_i$)
and is read as follows.
When $\nu_1,\dots,\nu_l$ are random variables with common 
distribution $\rho$, then $\Fde(\nu_1,\dots,\nu_l)$
has itself distribution $\rho$ (here $l$ and $\Fde$ are also
random according to the Galton-Watson model for the neighborhood of $i$).
It is nothing but the fixed point equation for density evolution, and
can be written more explicitly as
\begin{eqnarray}
\rho(\nu\in A) =\int \ind\big( \Fde(\nu_1,\dots,\nu_l)\in A\big)
\,\rho(\de\nu_1)\cdots\,\rho(\de\nu_l)\, ,
\end{eqnarray}
where $\ind(\,\cdots\,)$ is the indicator function.
In fact our main result tells that: $(i)$ The distribution
of $\mu_i$ must be a convex combination of the solutions of the above 
distributional equation; $(ii)$ If such convex combination is 
nontrivial (has positive weight on more than one solution) then
the correlations among the $\mu_i$'s have a peculiar structure.
\begin{quote}\emph{
Assume density evolution to admit the fixed point distributions
$\rho_{1},\dots, \rho_{r}$ for some fixed $r$. Then there 
exists probabilities $w_1,\dots,w_r$ (which add up to $1$) 
such that, for $i(1)\dots i(k)\in[n]$ uniformly random variable nodes,
\begin{eqnarray}
\prob\{\mu_{i(1)}\in A_1, \dots \mu_{i(k)}\in A_k\}
\approx \sum_{\alpha=1}^{r} w_{\alpha}\, 
\rho_{\alpha}(\mu\in A_1)\cdots \rho_{\alpha}(\mu\in A_k)\, .
\end{eqnarray}
}\end{quote}
In the last statement we have hidden one more technicality:
the stated asymptotic behavior might hold only along 
a subsequence of system sizes.
In fact in many cases it can be proved that the above convex 
combination is trivial, and that no subsequence needs to be taken.
Tools for proving this will be developed in a forthcoming publication.
%
%
\section{Definitions and main results}
\label{sec:Def}

In this section we provide formal definitions and statements.

\subsection{Sparse systems of observations}

We consider systems defined on a bipartite graph 
$G = (V,F,E)$, whereby $V$ and $F$ are vertices corresponding
(respectively) to variables and observations
(`variable' and `function nodes').
The edge set is $E\subseteq V\times F$.
For greater clarity, we shall use $i,j,k,\dots\in V$ to denote 
variable nodes and $a,b,c,\dots\in F$ for function nodes.
For $i\in V$, we let  $\di\equiv \{a\in F:\; (i,a)\in E\}$ denote
its neighborhood (and define analogously $\da$ for $a\in F$).
Further, if we let $n\equiv|V|$, and $m\equiv|F|$, we are interested
in the limit $n,m\to\infty$ with $\alpha = m/n$ kept fixed
(often we will identify $V = [n]$ and $F = [m]$).

A family of iid random variables $\{X_i:\, i\in V\}$,
taking values in a finite alphabet $\cX$, is associated with the 
vertices of $V$.
The common distribution of the $X_i$ will be denoted by 
$\prob\{X_i = x\}=p(x)$.
Given $U\subseteq V$, we let 
$X_U \equiv \{X_i:\, i\in U\}$ (the analogous convention is adopted
for other families of variables). Often we shall write $X$ for $X_V$.

Random variables $\{Y_a:\, a\in F\}$ are associated with the function nodes,
with $Y_a$ conditionally independent of $Y_{F\setminus a}$, 
$X_{V\setminus \da}$, given $X_{\da}$. Their joint distribution is 
defined by a set of probability kernels $Q^{(k)}$ indexed by
$k\in\naturals$, whereby, for $|\da| = k$, 
\begin{eqnarray}
\prob\{Y_a\in \,\cdot\, |X_{\da} = x_{\da}\}=
Q^{(k)}(\,\cdot\,|x_{\da})\, .
\end{eqnarray}
We shall assume $Q^{(k)}(\,\cdot\,|x_1,\dots,x_k)$ to be invariant 
under permutation of its arguments $x_1,\dots,x_k$
(an assumption that is implicit in the above equation).
Further, whenever clear from the context, we shall drop the 
superscript $(k)$. Without loss of generality, one can assume $Y_a$ 
to take values in $\reals^{b}$ for some $b$ which only depends on $k$.

A second collection of real random 
variables $\{Z_i:\; i\in V\}$ is associated with
the variable nodes, with $Z_i$ conditionally independent of
$Z_{V\setminus i}$, $X_{V\setminus i}$ and $Y$, conditional on
$X_i$. The associated probability
kernel will be denoted by $R$:
\begin{eqnarray}
\prob\{Z_i\in\,\cdot\, |X_i=x_i\} = R(\,\cdot\,|x_i)\, .
\end{eqnarray}

Finally, the graph $G$ itself will be random. All the above distributions 
have to be interpreted as conditional to a given realization of $G$.
We shall follow the convention of using $\prob\{\,\cdots\,\}$,
$\E\{\,\cdots\,\}$ etc, for conditional probability, expectation, etc.
given $G$ (without writing explicitly the conditioning) and write 
$\prob_G\{\,\cdots\,\}$, $\E_G\{\,\cdots\,\}$ for probability and expectation
with respect to $G$. The graph distribution is defined as follows. 
Both node sets $V$ and $F$ are given. Further, for any $(i,a)\in V\times F$,
we let $(i,a)\in E$ independently with probability $p_{\rm edge}$.
If we let $n\equiv|V|$, and $m\equiv|F|$
(often identifying $V = [n]$ and $F = [m]$),
such a random graph ensemble will be denoted as 
$\graph(n,m,p_{\rm edge})$.
We are interested
in the limit $n,m\to\infty$ with $\alpha = m/n$ kept fixed 
and $p_{\rm edge}=\gamma/n$.

In particular, we will be concerned with the problem of
determining the conditional distribution of $X_i$ given $Y$ and $Z$
cf. Eq.~(\ref{eq:Marginal}).
Notice that $\mu_i$ is a random variable taking values
in $\meas(\cX)$ (the set of probability measures over $\cX$).

In order to establish our main result we need to `perturb' the system
as follows. Given a perturbation parameter $\theta\in[0,1]$
(that should be thought as `small'), and a symbol $\ast\not\in\cX$, we let
\begin{eqnarray}
Z_i(\theta) = \left\{
\begin{array}{ll}
(Z_i,X_i)  &\mbox{ with probability $\theta$,}\\
(Z_i,\ast)  &\mbox{ with probability $1-\theta$.} 
\end{array}
\right.
\end{eqnarray}
In words, we reveal a random subset of the hidden variables. 
Obviously $Z(0)$ is equivalent to  $Z$ and $Z(1)$ to  $X$.
The corresponding probability kernel is defined by
(for $A\subseteq \reals$ measurable, and $\overline{x}\in\cX\cup\{\ast\}$) 
\begin{eqnarray}
R^{\theta}(\overline{x},A|x_i) = [(1-\theta)\ind(\overline{x} = \ast)+
\theta\, \ind(\overline{x}=x_i)]
R(A|x_i)\, ,
\end{eqnarray}
where $\ind(\,\cdots\,)$ is the indicator function.
We will denote by $\mu^{\theta}_i$ the analogous of $\mu_i$, 
cf. Eq.~(\ref{eq:Marginal}), with $Z$ being replaced by $Z(\theta)$.

It turns out that introducing such a perturbation is necessary 
for our result to hold. 
The reason is that there can be specific choices of the 
system `parameters' $\alpha$, $\gamma$, and of the kernels
$Q$ and $R$ for which the variables $X_i$'s are strongly correlated.
This happens for instance at thresholds noise levels in coding. 
Introducing a perturbation allows to remove this non-generic
behaviors.

We finally need to introduce a technical regularity condition 
on the laws of $Y_a$ and $Z_a$ (notice that this concerns the
\emph{unperturbed model}). 
\begin{definition}
We say that a probability kernel $T$ from $\cX$ to a measurable 
space $\cS$
(i.e., a set of probability measures $T(\,\cdot\,|x)$
indexed by $x\in\cX$) is \emph{soft} if: 
$(i)$  $T(\,\cdot\,|x_1)$ is absolutely continuous with respect to 
 $T(\,\cdot\,|x_2)$ for any $x_1,x_2\in\cX$; $(ii)$
We have, for some $M<\infty $, and all $x\in\cX$
(the derivative being in the Radon-Nikodyn sense)
\begin{eqnarray}
\int \frac{\de T(y|x_1)}{\de T(y|x_2)}\; T(\de y|x)\le M\, .
\end{eqnarray}

A system of observations is said to have 
\emph{soft noise} (or \emph{soft noisy observations}), 
if there exists $M<\infty$ such that 
the kernels $R$ and and $Q^{(k)}$ for all $k\ge 1$ are $M$-soft.
\end{definition}
In the case of a finite output alphabet  the above definition simplifies
considerably: a kernel is soft if all its entries are non-vanishing.
Although there exist interesting examples of non-soft kernels
(see, for instance, Section \ref{sec:Examples})
they can often be treated as limit cases of soft ones.
%
%
\subsection{Belief propagation and density evolution}
\label{sec:BP}

Belief propagation (BP) is frequently used in practice  
to estimate the marginals (\ref{eq:Marginal}). Messages 
$\me^{(t)}_{i\to a}$, $\mh^{(t)}_{a\to i}\in\meas(\cX)$ are exchanged at time 
$t$ along edge $(i,a)\in E$, where $i\in V$, $a\in F$.
The update rules follow straightforwardly from the general factor
graph formalism \cite{KFL01}
\begin{eqnarray}
\me^{(t+1)}_{i\to a}(x_i) & \propto &
p(x_i)R^{\theta}(z_i|x_i)\prod_{b\in\di\setminus a}
\mh^{(t)}_{b\to i}(x_i)\, ,\label{eq:BP1}\\
\mh^{(t)}_{a\to i}(x_i) & \propto & \sum_{x_{\da\setminus i}} Q(y_a|x_{\da})
\prod_{j\in \da\setminus i} \me^{(t)}_{j\to a}(x_j)\, .
\label{eq:BP2}
\end{eqnarray}
Here and below we denote by $\propto$ equality among measures
on the same space `up to a normalization\footnote{Explicitly,
$q_1(x)\propto q_2(x)$ if there exists a constant $C> 0$
such that $q_1(x) = q_2(x)$ for all $x$}.'
The BP estimate for the marginal of variable $X_i$ is (after $t$
iterations)
\begin{eqnarray}
\me^{(t+1)}_{i}(x_i) & \propto &
p(x_i)R^{\theta}(z_i|x_i)\prod_{b\in\di}
\mh^{(t)}_{b\to i}(x_i)\, .\label{BPMarginal}
\end{eqnarray}

Combining Eqs.~(\ref{eq:BP1}) and (\ref{BPMarginal}),
the BP marginal at variable node $i$ can be expressed as a function 
of variable-to-function node messages at neighboring
variable nodes. We shall write
\begin{eqnarray}
\me^{(t+1)}_{i} &= &\Fbp_i(\{\me^{(t)}_{j\to b}:\, 
j\in\partial b\setminus i; \, b\in \di\})\, ,\label{eq:BPMarginalNew}\\
\Fbp_i(\cdots)(x_i) & \propto &
p(x_i)R^{\theta}(z_i|x_i)\prod_{a\in\di}
\left\{\sum_{x_{\da\setminus i}} Q(y_a|x_{\da})
\prod_{j\in \da\setminus i} \me^{(t)}_{j\to a}(x_j)\right\}\, .
\end{eqnarray}

Notice that the mapping $\Fbp_i(\,\cdots\,)$ only depends on the graph
$G$ and on the observations $Y,Z(\theta)$, through the subgraph
including function nodes adjacent to $i$ and the corresponding variable nodes.
Denoting such neighborhood as $\Ball$, the corresponding observations
as $Y_{\Ball}$, $Z_\Ball(\theta)$, and letting 
$\me^{(t)}_{\dBall}=\{\me^{(t)}_{j\to b}:\,j\in\partial b\setminus i; 
\, b\in \di\}$, we can rewrite Eq.~(\ref{eq:BPMarginalNew}) in the form
\begin{eqnarray}
\me^{(t+1)}_{i} &= &\Fbp(\me^{(t)}_{\dBall}\, ;\Ball,Y_{\Ball}, 
Z_{\Ball}(\theta) ) \, .
\end{eqnarray}
Here we made explicit all the dependence upon the graph and the observations.
If $G$ is drawn randomly from the $\graph(n,\alpha n,\gamma/n)$
ensemble, the neighborhood $\Ball$, as well as the corresponding
observations converge in the large system limit,
to a well defined limit distribution. 
Further, the messages $\{\me^{(t)}_{j\to b}\}$ above become iid and are 
distributed as $\mu_i^{(t)}$ (this is a consequence of the fact that the 
edge degrees are asymptotically Poisson). Their common distribution satisfies 
the density evolution distributional recursion
\begin{eqnarray}
\me^{(t+1)} &\ed &\Fde(\me^{(t)}_{\dBall}\, ;\Ball,Y_{\Ball}, 
Z_{\Ball}(\theta) ) \, ,\label{eq:DensityEvolution}
\end{eqnarray}
where $\me^{(t+1)}_{\dBall} = \{\me^{(t)}_e:\, e\in \dBall\}$ are iid copies
of $\me^{(t)}$, and $\Ball,Y_{\Ball}, Z_{\Ball}(\theta)$ are understood 
to be taken from their asymptotic distribution.
We will be particularly concerned with the set of \emph{fixed points}
of the above distributional recursion. This is just the set of distributions
$\rho$ over $\meas(\cX)$ such that, if $\nu^{(t)}$ has distribution $\rho$,
then $\nu^{(t+1)}$ has distribution $\rho$ as well.
%
%
\subsection{Main results}

For stating our first result, 
it is convenient to introduce a shorthand notation.
For any $U\subseteq V$, we note
\begin{eqnarray}
\tprob_U\{x_U\} \equiv \prob\{X_U=x_U|Y,Z(\theta)\}\, .
\end{eqnarray}
Notice that, being a function of $Y$ and $Z(\theta)$, 
$\tprob_U\{x_U\}$ is a random variable. 
The theorem below shows that, if $U$ is a random subset 
of $V$ of bounded size, then $\tprob_U$ factorizes approximately over 
the nodes $i\in U$. The accuracy of this is measured in terms of 
\emph{total variation distance}. Recall that, given two distributions
$q_1$ and $q_2$ on the same finite set $\cS$, their total variation distance is
\begin{eqnarray}
||q_1-q_2||_{\sTV} = \frac{1}{2}\sum_{x\in\cS}|q_1(x)-q_2(x)|\, .
\end{eqnarray}
\begin{thm}\label{thm:Correlation}
Consider an observation system on the  graph $G=(V,F,E)$. Let 
$k\in \naturals$, and $i(1),\dots, i(k)$ be uniformly random in
$V$. Then, for any $\eps>0$  
\begin{eqnarray}
\int_0^{\eps}\E_{i(1)\cdots i(k)}
\E\Big|\Big|\tprob_{i(1),\dots,i(k)}-\tprob_{i(1)}\cdots\tprob_{i(k)}
\Big|\Big|_{\sTV}\;\de\theta\le (|\cX|+1)^kA_{n,k}\sqrt{H(X_1)\eps/n}
= O(n^{-1/2})\, ,
\label{eq:FactorizationThm}
\end{eqnarray}
where $A_{n,k}\le \exp\left(\frac{k^2}{2n}\right)$ for $k<n/2$,
and the asymptotic behavior $O(n^{-1/2})$ holds as $n\to\infty$ with
if $k$ and $\cX$ fixed.
\end{thm}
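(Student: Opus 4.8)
The plan is an \emph{entropy--budget} argument. As $\theta$ increases, $Z(\theta)$ reveals more hidden variables and, for every fixed $i$, the posterior entropy $H(X_i\,|\,Y,Z(\theta))$ decreases monotonically over a range bounded by $H(X_1)$. Because \eqref{eq:FactorizationThm} averages over a uniformly random index and integrates over the short window $[0,\eps]$, each unit of this budget gets divided by $n$: this is the source of the $H(X_1)\eps/n$ under the square root. The square root itself will come from Pinsker's inequality, while $(|\cX|+1)^k$ and $A_{n,k}$ are generous constants accounting, respectively, for the number of terms in a chain--rule expansion and for sampling the $k$ indices with replacement. (All entropies and mutual informations below are, like $\E$, conditional on $G$; the bounds used are uniform in $G$.)

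\emph{Step 1 (total variation $\to$ conditional mutual information).} First I would note that for each realization of $(Y,Z(\theta))$ and of the indices, $D\big(\tprob_{i(1),\dots,i(k)}\,\big\|\,\tprob_{i(1)}\otimes\cdots\otimes\tprob_{i(k)}\big)$ is finite (bounded by $k\log|\cX|$), so Pinsker gives $\big\|\tprob_{i(1),\dots,i(k)}-\tprob_{i(1)}\cdots\tprob_{i(k)}\big\|_{\sTV}\le\sqrt{\tfrac12 D(\cdots)}$ pointwise. Taking $\E$ over $(Y,Z(\theta))$, using concavity of $\sqrt{\,\cdot\,}$ together with the standard identity expressing the expected posterior relative entropy as a conditional multi--information, and then expanding by the chain rule,
\[
\E\big\|\tprob_{i(1),\dots,i(k)}-\tprob_{i(1)}\cdots\tprob_{i(k)}\big\|_{\sTV}\ \le\ \Big(\tfrac12\sum_{l=2}^{k} I\big(X_{i(l)};X_{i(1)},\dots,X_{i(l-1)}\,\big|\,Y,Z(\theta)\big)\Big)^{1/2}.
\]
Averaging over the indices (concavity once more), integrating in $\theta$, and applying Cauchy--Schwarz, the theorem reduces to proving $\int_0^{\eps}\sum_{l=2}^{k}\E_{i(1)\cdots i(k)}\,I\big(X_{i(l)};X_{i(1)},\dots,X_{i(l-1)}\,\big|\,Y,Z(\theta)\big)\,\de\theta\le C_k\,H(X_1)/n$ for a constant $C_k$ polynomial in $k$ (up to a factor $\le A_{n,k}^{2}$).

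\emph{Step 2 (the crux: one extra reveal is an infinitesimal increment of $\theta$).} A further chain--rule step writes each summand as $\sum_{s=1}^{l-1} I\big(X_{i(l)};X_{i(s)}\,\big|\,Y,Z(\theta),X_{i(1)},\dots,X_{i(s-1)}\big)$, so it remains to control, for a fixed target $i$ and a fixed set $W\subseteq V$ with $|W|<k$, the quantity $\int_0^{\eps}\E_{J}\,I\big(X_i;X_J\,\big|\,Y,Z(\theta),X_W\big)\,\de\theta$ with $J$ uniform in $V$. I would set $\eta(\theta):=H\big(X_i\,\big|\,Y,Z,X_W,\{Z_j(\theta)\}_{j\ne i}\big)$, which is the average of the set function $T\mapsto H(X_i\,|\,Y,Z,X_W,X_T)$ against a product $\mathrm{Bernoulli}(\theta)$ law on $V\setminus(W\cup\{i\})$, hence \emph{multilinear} in the reveal probabilities; differentiating,
\[
-\,\eta'(\theta)\ =\ \sum_{j\notin W\cup\{i\}}\E_{T}\big[\,I\big(X_i;X_j\,\big|\,Y,Z,X_W,X_T\big)\,\big]\ \ge\ 0 ,\qquad T\sim\mathrm{Bernoulli}(\theta)\ \text{on}\ V\setminus(W\cup\{i,j\}).
\]
Conditioning $I(X_i;X_j\,|\,Y,Z(\theta),X_W)$ on the independent coins defining $Z_i(\theta)$ and $Z_j(\theta)$ identifies it with $(1-\theta)^2\,\E_T[I(X_i;X_j\,|\,Y,Z,X_W,X_T)]$, which is dominated term by term by the summands of $-\eta'(\theta)$; hence $\sum_{j\notin W\cup\{i\}} I(X_i;X_j\,|\,Y,Z(\theta),X_W)\le-\eta'(\theta)$. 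Separating the $O(k/n)$ fraction of indices $j\in W\cup\{i\}$ (each contributing at most $H(X_1)$) gives $\E_J I(X_i;X_J\,|\,Y,Z(\theta),X_W)\le n^{-1}(-\eta'(\theta))+kH(X_1)/n$, so that $\int_0^{\eps}\E_J I(X_i;X_J\,|\,Y,Z(\theta),X_W)\,\de\theta\le n^{-1}(\eta(0)-\eta(\eps))+kH(X_1)/n\le(k+1)H(X_1)/n$, using $\eta(\eps)\ge 0$ and $\eta(0)=H(X_i\,|\,Y,Z,X_W)\le H(X_1)$.

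\emph{Step 3 (assembly).} Summing the Step~2 bound over the $\le k(k-1)/2$ pairs $(l,s)$ and over the choice of the $i(l)$'s, and tracking that the denominators are really $\ge n-k$ while the with--replacement sampling of $i(1),\dots,i(k)$ contributes the product correction $\prod_{j<k}(1-j/n)^{-1}\le\exp(k^2/n)=A_{n,k}^{2}$, I would obtain the reduced bound of Step~1 with $C_k=O(k^3)$; combined with Steps~1--2 and with $C_k^{1/2}$ absorbed into $(|\cX|+1)^k$, this gives \eqref{eq:FactorizationThm}, and the stated $O(n^{-1/2})$ decay follows for fixed $k,\cX$. The main obstacle is Step~2: making the multilinearity/influence identity for $\eta$ rigorous --- in particular identifying the discrete influence $H(X_i\,|\,\cdots,X_{T\cup\{j\}})-H(X_i\,|\,\cdots,X_T)$ with $-I(X_i;X_j\,|\,\cdots,X_T)$, and arguing that revealing one uniformly random extra variable costs no more entropy than the increment $\theta\mapsto\theta+O(1/n)$ --- without leaking the $1/n$ factor that makes the whole estimate $O(n^{-1/2})$ instead of $O(1)$.
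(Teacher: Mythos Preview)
Your approach is correct and yields the $O(n^{-1/2})$ bound, but it differs genuinely from the paper's proof.

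\textbf{What the paper does.} The paper introduces the centered indicators $\Sp_i(\xi)=\ind(X_i=\xi)-\tprob_i\{\xi\}$ and, via two conditionally independent replicas $X^{(1)},X^{(2)}$, the overlap $\Qp(\xi)=\frac{1}{n}\sum_i\Sp_i^{(1)}(\xi)\Sp_i^{(2)}(\xi)$. It first proves the pairwise bound $\frac{1}{n}\sum_{i,j}\int_0^\eps I(X_i;X_j|Y,Z(\theta))\,\de\theta\le 2H(X_1)$ from the identities $\partial_\theta H(X|Y,Z(\theta))=-\sum_i H(X_i|Y,Z^{(i)}(\theta))$ and $\partial_\theta^2 H(X|Y,Z(\theta))=\sum_{i\ne j}I(X_i;X_j|Y,Z^{(ij)}(\theta))$, then deduces $\int_0^\eps\Var(\Qp(\xi)|Y,Z(\theta))\,\de\theta\le 4H(X_1)/n$. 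The total variation distance is finally expanded combinatorially as a sum over subsets $J\subseteq[k]$ of moments $\tE\{\prod_{\alpha\in J}\Sp_{i(\alpha)}(\xi_\alpha)\}$, each controlled through $\Qp$; this is where both $(|\cX|+1)^k$ and the without--replacement correction $A_{n,k}=\sqrt{n^k/(k!\binom{n}{k})}$ enter naturally.

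\textbf{What you do differently.} You bypass the replica/overlap machinery entirely: Pinsker reduces the TV distance to the conditional multi--information, the chain rule breaks it into pairwise terms $I(X_{i(l)};X_{i(s)}|Y,Z(\theta),X_W)$, and your Step~2 is a conditional (on $X_W$) version of the paper's derivative identity---essentially a Russo--Margulis computation for $\eta(\theta)=H(X_i|Y,Z,X_W,\{Z_j(\theta)\}_{j\ne i})$. This is a strict generalization of the paper's Lemma~3.1, and your identification $I(X_i;X_j|Y,Z(\theta),X_W)=(1-\theta)^2\,\E_T[I(X_i;X_j|Y,Z,X_W,X_T)]$ together with $-\eta'(\theta)=\sum_{j\notin W\cup\{i\}}\E_T[I(\cdots)]$ is exactly right.

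\textbf{What each approach buys.} The paper's overlap route gives the explicit constant $(|\cX|+1)^kA_{n,k}$ on the nose and, as a by--product, control of all mixed moments $\tE\{\Sp_{i(1)}(\xi_1)\cdots\Sp_{i(k)}(\xi_k)\}$, which is of independent interest (and feeds into later sections). Your route is more elementary---no replicas, no combinatorial expansion---and in fact produces a \emph{better} $k$--dependence: the constant you obtain is $O(k^{3/2})$ rather than exponential in $k$, so your ``absorbing $C_k^{1/2}$ into $(|\cX|+1)^k$'' is an understatement. Your handling of $A_{n,k}$ in Step~3 is the one loose end: the correction you incur from replacing $J\sim\mathrm{Unif}(V\setminus\{\text{chosen}\})$ by $J\sim\mathrm{Unif}(V)$ is a factor $\le n/(n-k+1)$ per pair, not the product $\prod_{j<k}(1-j/n)^{-1}$, so you do not recover $A_{n,k}$ exactly---but the resulting factor is still $1+O(k^3/n)$ and is harmlessly absorbed into the stated bound for fixed $k$.
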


The next result establishes that the BP equation (\ref{eq:BPMarginalNew})
is approximately satisfied by the actual marginals.
For any $i$, $j\in V$, such that $i,j\in\partial b$ for some common 
function node $b\in F$, let
\begin{eqnarray}
\mu^{\theta(j)}_i(x_i)\equiv\prob\Big\{X_i=x_i\Big| 
Y_a: j\not\in\da;
\;Z_l(\theta): l \neq j\Big\}\, .
\end{eqnarray}
This is nothing but the conditional distribution of $X_i$ with
respect to the graph from which $j$ has been `taken out.'
\begin{thm}\label{thm:ApproximateMarginal}
Consider a sparse observation system on a random graph $G = (V,F,E)$ from
the $\graph_{n}(\gamma/n,\alpha n)$ ensemble. Assume the
noisy observations to be $M$-soft. Then there exists a constant 
${\sf A}$ depending on $t,\alpha,\gamma,M,|\cX|,\eps$, such that
for any $i\in V$, and any $n$
\begin{eqnarray}
\int_0^{\eps}\E_G\E||\mu_i^{\theta}-
\Fbp_i(\{\mu_{j}^{\theta,(i)}\}_{a\in\di, j\in\da
\setminus i})||_{\sTV}
\;\de\theta \le \frac{{\sf A}}{\sqrt{n}}\, .
\end{eqnarray}
\end{thm}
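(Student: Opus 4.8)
The plan is to prove Theorem~\ref{thm:ApproximateMarginal} by exploiting the locally tree-like structure of $G$ together with the soft-noise assumption, which prevents long-range correlations once the perturbation is averaged over. First I would recall the elementary identity that expresses $\mu_i^\theta$ exactly in terms of the ``cavity'' distributions $\mu_{j}^{\theta,(i)}$: by conditional independence of the observations $Y_a$ (for $a\in\di$) and $Z_i(\theta)$ given $X_{\da}$ and $X_i$, one has exactly
\begin{eqnarray}
\mu_i^\theta(x_i)\propto p(x_i)R^\theta(z_i|x_i)\prod_{a\in\di}\Big\{\sum_{x_{\da\setminus i}}Q(y_a|x_{\da})\,\tprob_{\da\setminus i}^{(i)}\{x_{\da\setminus i}\}\Big\}\, ,
\end{eqnarray}
where $\tprob^{(i)}$ denotes the posterior in the graph with $i$ removed. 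This is the exact analogue of $\Fbp_i$ except that the joint cavity law $\tprob_{\da\setminus i}^{(i)}$ appears in place of the \emph{product} of the marginals $\prod_{j\in\da\setminus i}\mu_j^{\theta,(i)}$. Hence the error $\|\mu_i^\theta-\Fbp_i(\{\mu_j^{\theta,(i)}\})\|_{\sTV}$ is controlled, via a Lipschitz bound on the (soft) map $\Fbp_i$, by the factorization defect of the posterior over the neighbors $\da\setminus i$ \emph{in the cavity graph}, summed over $a\in\di$.

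The second step is to bound each such defect. Here is where I would invoke Theorem~\ref{thm:Correlation}: the cavity graph (graph with $i$ and its incident function nodes removed) is, for $G$ from $\graph_n(\gamma/n,\alpha n)$, itself essentially an instance of the same ensemble on $n-1$ variables with asymptotically identical parameters, so the $\theta$-integrated expected total variation distance between the joint posterior on any bounded random set of its variables and the product of the individual posteriors is $O(n^{-1/2})$. The subtlety is that the neighbors $j\in\da\setminus i$ are \emph{not} uniformly random vertices — they are the vertices selected by the random function nodes $a\in\di$. I would handle this by conditioning on the (random, but with high probability bounded) neighborhood $\Ball$ of $i$: given $|\di|$ and $\{|\da|\}_{a\in\di}$, the identities $j\in\da\setminus i$ are, under the $\graph(n,\alpha n,\gamma/n)$ model, uniformly random among $V\setminus(\{i\}\cup\Ball)$, so Theorem~\ref{thm:Correlation} applies conditionally; then one averages over the (Poisson-tailed) degrees, which is where the constant $\sf A$ acquires its dependence on $\alpha,\gamma$.

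The third step is the Lipschitz/softness bookkeeping. The map $\Fbp_i$ composes a multilinear contraction (the sum $\sum_{x_{\da\setminus i}}Q(y_a|x_{\da})\prod\nu_j(x_j)$) with a renormalization $q\mapsto q/\sum q$; the $M$-softness of $Q^{(k)}$ and $R$ guarantees that all the numerators stay bounded away from $0$ uniformly (after $\theta$-averaging, or with high probability), so the renormalization is Lipschitz with a constant depending only on $M,|\cX|$ and the degrees. Thus a total-variation perturbation of size $\delta$ in any one argument $\nu_j$ produces a perturbation of size $C(M,|\cX|,|\da|)\,\delta$ in $\Fbp_i$, and replacing $\prod_j\mu_j^{\theta,(i)}$ by $\tprob_{\da\setminus i}^{(i)}$ costs at most $C$ times the factorization defect from step two. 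Summing over $a\in\di$ and taking $\E_G\E\int_0^\eps\!\de\theta$ yields the claimed $\mathsf{A}/\sqrt n$.

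The main obstacle I anticipate is precisely the softness/renormalization estimate: the normalization constant $\sum_{x_i}(\cdots)$ in $\Fbp_i$ can in principle be atypically small, which would blow up the Lipschitz constant. The soft-noise hypothesis is designed exactly to preclude this, but turning the integral bound $\int \frac{\de T(y|x_1)}{\de T(y|x_2)}T(\de y|x)\le M$ into a usable almost-sure (or $\theta$-averaged) lower bound on the relevant partition functions — uniformly over the random messages $\mu_j^{\theta,(i)}$, which themselves depend on $Y,Z(\theta)$ — will require care, likely a second-moment or Markov argument showing the normalization is $\Theta(1)$ except on an event of probability $O(n^{-1/2})$, on which one uses the trivial bound $\|\cdot\|_{\sTV}\le 1$. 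A secondary technical point is making the ``cavity graph is again from the ensemble'' reduction precise enough that the $O(n^{-1/2})$ rate of Theorem~\ref{thm:Correlation} survives the conditioning on $\Ball$; this is the step that genuinely uses the Poisson (rather than general irregular) degree distribution.
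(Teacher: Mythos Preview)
Your overall architecture is the same as the paper's: express $\mu_i^\theta$ via a Bayes identity involving the joint cavity law on the boundary $\dBall=\dBall(i,1)$, compare with $\Fbp_i$ which uses the product of cavity marginals, control the discrepancy by a Lipschitz-type bound times the factorization defect, and kill the latter via Corollary~\ref{coro:Correlation} applied to the residual graph (which, by Proposition~\ref{prop:poisson_independance}, is again a $\graph(\cdot,\cdot,\gamma/n)$ instance). Two points, however, need correction.

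First, your ``exact identity'' is not exact as written. The product over $a\in\di$ of factors $\sum_{x_{\da\setminus i}}Q(y_a|x_{\da})\,\tprob_{\da\setminus i}^{(i)}\{x_{\da\setminus i}\}$ already presupposes that $\tprob^{(i)}_{\dBall}$ factors across the blocks $\{\da\setminus i\}_{a\in\di}$, which is precisely part of what has to be proved. The genuinely exact identity keeps the \emph{joint} cavity law on all of $\dBall$, namely $\mu_i^\theta(x_i)\propto p(x_i)R^\theta(z_i|x_i)\sum_{x_{\dBall}}\big[\prod_{a\in\di}Q(y_a|x_{\da})\big]\tprob^{(i)}_{\dBall}\{x_{\dBall}\}$; the comparison is then with the same expression with $\tprob^{(i)}_{\dBall}$ replaced by $\prod_{j\in\dBall}\mu_j^{\theta,(i)}$, and the relevant factorization defect is the single quantity $\|\tprob^{(i)}_{\dBall}-\prod_{j\in\dBall}\mu_j^{\theta,(i)}\|_{\sTV}$ rather than a sum over $a$. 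This is how the paper sets things up (cf.\ the definition of $\mu_i^{\theta,t}$ in Eq.~(\ref{def:Mut})).

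Second, and more substantively, your proposed handling of the softness step via a Markov/second-moment argument on the normalization will not deliver the $n^{-1/2}$ rate. The ratio factor you need to control depends only on the observations $W_{\Ball}$ inside the ball, whose law is essentially $n$-independent, so the probability that it exceeds any fixed threshold is a constant, not $O(n^{-1/2})$; splitting into good/bad events therefore loses the rate. The paper avoids this entirely by a decoupling trick: with $W_{\Ball}$ and $W_{\cBall}$ defined as in Eqs.~(\ref{eq:WNotation1})--(\ref{eq:WNotation2}), one has conditional independence of $W_{\Ball}$ and $W_{\cBall}$ given $X_{\dBall}$, hence (Lemma~\ref{lemma:Elementary}) $\E\{f(W_{\Ball})g(W_{\cBall})\}\le \big(\max_{x_{\dBall}}\E\{f(W_{\Ball})|X_{\dBall}=x_{\dBall}\}\big)\,\E\{g(W_{\cBall})\}$. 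Applying this with $f$ the ratio $\max/\min$ of the local likelihood (Lemma~\ref{lemma:Simple}) and $g$ the factorization defect, the $M$-softness bounds the first factor by $M^{|\Ball|}$ \emph{in expectation conditional on any boundary value}, and Corollary~\ref{coro:Correlation} bounds the second by $O(n^{-1/2})$ after the $\theta$-integral. One then averages over $\Ball$ using the exponential tail of $|\Ball|$ (Proposition~\ref{prop:BoundNeighborhood}). This conditional-independence decoupling is the missing technical idea; once you insert it, the rest of your outline goes through.
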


Finally, we provide a characterization of the asymptotic distribution of 
the one variable marginals.
Recall that $\meas(\cX)$ denotes the set of probability distributions
over $\cX$, i.e., the $(|\cX|-1)$-dimensional standard  simplex. We further let
$\meas^2(\cX)$ be the set of probability measures over  
$\meas(\cX)$ ($\meas(\cX)$ being endowed with the Borel 
$\sigma$-field induced by 
$\reals^{|\cX|-1}$). This can be  equipped with the 
smallest $\sigma$-field that makes $F_A:\rho\mapsto \rho(A)$ measurable for any
Borel subset $A$ of $\meas(\cX)$. 
\begin{thm}\label{thm:Main}
Consider an observation system on a random graph $G = (V,F,E)$ from
the $\graph(n,\alpha n,\gamma/n)$ ensemble, and assume the
noisy observations to be soft.
Let $\varphi:\meas(\cX)^k\to\reals$ be a Lipschitz continuous
function on $\meas(\cX)^k=\meas(\cX)\times\cdots\times\meas(\cX)$
($k$ times).

Then for almost any $\theta\in [0,\ve]$ there exists 
an infinite subsequence $R_{\theta}\subseteq \naturals$ and a 
probability distribution $S_{\theta}$ over $\meas^2(\cX)$, supported on
the fixed points of the density evolution recursion 
(\ref{eq:DensityEvolution}), such that the following happens.
Given any fixed subset of variable nodes  $\{i(1),\dots,i(k)\}\subseteq V$
\begin{eqnarray}
\lim_{n\in R_{\theta}}\,  \E_G\E\; \big\{\varphi(\mu_{i(1)}^{\theta},\dots,
\mu_{i(k)}^{\theta})\big\}= \int \left\{\int
\varphi(\mu_1,\dots,\mu_k)\, \rho(\de\mu_1)\cdots\rho(\de\mu_k)
\right\}S(\de\rho)\, .\label{eq:Main}
\end{eqnarray}
\end{thm}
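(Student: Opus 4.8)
## Proof proposal for Theorem \ref{thm:Main}

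The plan is to combine the two preceding theorems with a compactness/martingale argument in the space of measures on $\meas^2(\cX)$. First I would set up the right object to track: for each $n$ and each $\theta$, the empirical distribution $\hat\rho_n^{\theta}$ of the one-variable marginals, i.e. the random (depending on $G,Y,Z(\theta)$) element of $\meas^2(\cX)$ defined by $\hat\rho_n^{\theta}(A) = \frac{1}{n}\sum_{i\in V}\ind(\mu_i^{\theta}\in A)$. Averaging over $G$ and the observations gives a deterministic element $\overline{S}_n^{\theta}\in\meas^2(\cX)$ — actually one wants the law of $\hat\rho_n^{\theta}$ itself, an element of $\meas(\meas^2(\cX))$, so that $S_{\theta}$ in the statement is its weak limit along a subsequence. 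Since $\meas(\cX)$ is compact, $\meas^2(\cX)$ is compact (weak topology), and hence $\meas(\meas^2(\cX))$ is compact; so along \emph{some} subsequence $R_{\theta}$ the laws converge to some $S_{\theta}$. The content of the theorem is then twofold: (a) the limit is supported on density-evolution fixed points, and (b) the $k$-point correlation decouples as an annealed mixture over $S_{\theta}$, which is exactly \eqref{eq:Main} once one checks that a Lipschitz $\varphi$ is tested against the product measure $\rho^{\otimes k}$.

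Step (b) is the place where Theorem \ref{thm:Correlation} enters. By that theorem, for $k$ \emph{uniformly random} indices $i(1),\dots,i(k)$ and after the $\int_0^{\eps}\de\theta$ smoothing, $\tprob_{i(1),\dots,i(k)}$ is within $O(n^{-1/2})$ in total variation of the product $\tprob_{i(1)}\cdots\tprob_{i(k)}$. Translating to marginals: $\E_G\E\,\varphi(\mu_{i(1)}^{\theta},\dots,\mu_{i(k)}^{\theta})$ differs by $o(1)$ (using that $\varphi$ is Lipschitz, hence bounded on the compact $\meas(\cX)^k$, and the total-variation bound controls the difference between the joint conditional law and the product of conditional laws of $(X_{i(1)},\dots,X_{i(k)})$) from $\E_G\E\prod_{j}\big(\text{single-marginal expectation}\big)$, which is precisely $\E\big[\int\varphi\,d\hat\rho_n^{\theta}\otimes\cdots\otimes d\hat\rho_n^{\theta}\big]$ up to the distinction between sampling $k$ indices with vs.\ without replacement (this costs a further $O(k^2/n)$). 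Passing $n\to\infty$ along $R_{\theta}$ and using weak convergence of the law of $\hat\rho_n^{\theta}$ to $S_{\theta}$ gives the right-hand side of \eqref{eq:Main}. The reduction from ``fixed $\{i(1),\dots,i(k)\}$'' to ``uniformly random $\{i(1),\dots,i(k)\}$'' is free by exchangeability of the vertex labels under the $\graph(n,\alpha n,\gamma/n)$ ensemble.

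Step (a) — identifying the support of $S_{\theta}$ with the fixed points of \eqref{eq:DensityEvolution} — is where Theorem \ref{thm:ApproximateMarginal} and the soft-noise hypothesis do the work, and I expect this to be the main obstacle. The idea: the local neighborhood of a uniformly random $i$ converges to the Galton--Watson tree with Poisson$(\gamma\alpha)$/Poisson$(\gamma)$ offspring; along that tree the ``cavity'' marginals $\mu_j^{\theta,(i)}$ become asymptotically iid with common law equal to the $n\to\infty$ limit of the law of $\mu_i^{\theta}$ (this uses the Poisson-degree self-similarity: removing $i$ and its adjacent function nodes leaves a graph from essentially the same ensemble, with $n$ decremented). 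Theorem \ref{thm:ApproximateMarginal} then says $\mu_i^{\theta}\approx \Fbp_i(\{\mu_j^{\theta,(i)}\})$ in $L^1(\sTV)$ after $\theta$-averaging, and $\Fbp_i$ converges to the random operator $\Fde$ on the limiting tree. Hence the limit law $\rho$ of $\mu_i^{\theta}$ must satisfy $\rho \ed \Fde(\rho^{\otimes l})$, i.e.\ $\rho$ is a density-evolution fixed point. Making this rigorous requires (i) continuity of $\Fbp_i/\Fde$ as a function of its message arguments in total variation — here $M$-softness is essential, since it bounds the likelihood ratios and prevents the normalization in $\propto$ from blowing up, giving a uniform Lipschitz (or at least continuity) modulus; (ii) justifying the asymptotic independence of the cavity messages, which is the delicate combinatorial heart and presumably proceeds by a local-weak-convergence / coupling argument on the bipartite graph, controlling how the residual graph after deleting a bounded neighborhood still looks like a fresh sample; and (iii) the ``almost every $\theta$'' caveat, which comes from the $\int_0^{\eps}\de\theta$ averages in Theorems \ref{thm:Correlation}–\ref{thm:ApproximateMarginal}: a bound on an integral over $\theta$ that is $O(n^{-1/2})$ forces, by Fatn/Fubini along the subsequence $R_{\theta}$, the integrand to vanish for Lebesgue-a.e.\ $\theta$, but only after passing to a subsequence that may depend on $\theta$ — which is exactly why the statement is phrased with a $\theta$-dependent subsequence $R_{\theta}$ rather than a single universal limit. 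Finally, to promote a fixed-point \emph{distribution} $\rho$ to the asserted \emph{mixture} $S_{\theta}$ over several fixed points $\rho_1,\dots,\rho_r$, one observes that $\hat\rho_n^{\theta}$ is itself a random element concentrated (in the limit) on the fixed-point set, so $S_{\theta}=\lim\mathrm{Law}(\hat\rho_n^{\theta})$ is automatically supported there; the mixture weights $w_{\alpha}$ of the informal statement are the masses $S_{\theta}$ assigns to the points $\rho_{\alpha}$.
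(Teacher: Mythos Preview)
Your overall architecture --- extract a subsequential limit $S_\theta$ of the law of the empirical marginal distribution by compactness, then argue the support lies on fixed points --- is close to the paper's (which uses de Finetti's theorem on the exchangeable sequence $(\mu_1,\dots,\mu_n)$ rather than the empirical measure, but these are equivalent). However, there are two substantive problems.

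First, in Step (b) you invoke Theorem \ref{thm:Correlation} to obtain the product structure $\rho(\de\mu_1)\cdots\rho(\de\mu_k)$ in \eqref{eq:Main}. This is a category error: Theorem \ref{thm:Correlation} concerns the conditional law $\tprob_{i(1),\dots,i(k)}$ of the \emph{hidden variables} $X_{i(j)}\in\cX$, not the joint law of the \emph{marginals} $\mu_{i(j)}^\theta\in\meas(\cX)$. A Lipschitz $\varphi$ on $\meas(\cX)^k$ is not controlled by the total-variation distance between $\tprob_{i(1),\dots,i(k)}$ and $\prod_j\tprob_{i(j)}$. The product structure you want is a consequence of exchangeability alone: the paper gets it via de Finetti (Lemma \ref{lemma:deFinetti}), and your empirical-measure formulation already contains it, since for exchangeable $(\mu_1,\dots,\mu_n)$ one has $\E\,\varphi(\mu_{i(1)},\dots,\mu_{i(k)}) = \E\int\varphi\,\de(\hat\rho_n)^{\otimes k} + O(k^2/n)$ with no appeal to Theorem \ref{thm:Correlation}. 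So Step (b) is salvageable, but not for the reason you give.

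Second, and more seriously, Step (a) as sketched only constrains the \emph{barycenter} $\int\rho\,S_\theta(\de\rho)$, not the support of $S_\theta$. Theorem \ref{thm:ApproximateMarginal} gives $\E_G\E\|\mu_i^\theta - \Fbp_i(\{\mu_j^{\theta,(i)}\})\|_{\sTV}\to 0$ for a.e.\ $\theta$, which after passing to the limit yields $\int\Delta(\rho)\,S_\theta(\de\rho)=0$, where $\Delta(\rho)=\int\varphi\,\de\rho - \E\int\varphi(\Fde(\mu_1,\dots,\mu_k))\,\rho^{\otimes k}(\de\mu)$. This does \emph{not} force $\Delta(\rho)=0$ $S_\theta$-a.s. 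The paper closes this gap with a second-moment trick: take two nodes $i=1,2$, use Theorem \ref{thm:ApproximateMarginal} and Cauchy--Schwarz to get
\[
\lim_{n\in R_\theta}\E_G\E\Big\{\big[\varphi(\mu_1^\theta)-\varphi(\Fbp_1(\cdots))\big]\cdot\big[\varphi(\mu_2^\theta)-\varphi(\Fbp_2(\cdots))\big]\Big\}=0,
\]
and then identify each of the four cross-terms with the corresponding term in the expansion of $\int\Delta(\rho)^2\,S_\theta(\de\rho)$ (using asymptotic independence of the neighborhoods of $1$ and $2$, local tree convergence, and Lemmas \ref{lemma:AvRemoveV}--\ref{lemma:AvRemoveF} to replace cavity marginals $\mu_j^{\theta,(i)}$ by $\mu_j^\theta$). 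From $\int\Delta(\rho)^2\,S_\theta(\de\rho)=0$ one concludes $\Delta=0$ $S_\theta$-a.s. Your sentence ``$\hat\rho_n^\theta$ is itself a random element concentrated (in the limit) on the fixed-point set'' is the right target, but proving it amounts to controlling $\E\big[(\int\varphi\,\de\hat\rho_n^\theta - \E\int\varphi(\Fde)\,\de(\hat\rho_n^\theta)^{\otimes k})^2\big]$, whose expansion into pairs of nodes is exactly the two-node computation above; your items (i)--(iii) list the ingredients but omit this mechanism.
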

%
%
\section{Proof of Theorem \ref{thm:Correlation} (correlations)}
\label{sec:Correlation}

\begin{lemma}\label{lemma:TwoPoints}
For any observation system and any $\eps>0$ 
\begin{eqnarray}
\frac{1}{n}\sum_{i,j\in V}\int_{0}^{\eps}\!I(X_i;X_j|Y,Z(\theta))\,\de\theta\le 2 H(X_1)\, .\label{eq:SumMutualInfo}
\end{eqnarray}
\end{lemma}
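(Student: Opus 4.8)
The plan is to bound the integral of the total mutual information by recognizing it as (twice) an integral of a derivative along the perturbation parameter $\theta$, and then to telescope. First I would fix the realization of the graph $G$ and work conditionally on it; the claimed bound has no $G$-dependence, so it suffices to prove it for each $G$. The key observation is that $\sum_{i,j} I(X_i;X_j|Y,Z(\theta))$ is naturally compared against a single-variable quantity by a smart use of the chain rule. Concretely, consider the function $\theta\mapsto \sum_{i\in V} H(X_i\mid Y,Z(\theta))$. At $\theta=1$ this vanishes (all $X_i$ are revealed), and at $\theta=0$ it equals $\sum_i H(X_i\mid Y,Z) \le n H(X_1)$. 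So the total decrease of this quantity over $[0,1]$ is at most $n H(X_1)$. I would then show that the derivative $-\frac{\de}{\de\theta}\sum_i H(X_i\mid Y,Z(\theta))$ controls $\frac{1}{n}\sum_{i,j} I(X_i;X_j\mid Y,Z(\theta))$.

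The mechanism for this is the following. Revealing one more coordinate, say $X_j$ (which is what raising $\theta$ does, coordinate by coordinate, each independently at rate $\de\theta$), decreases $H(X_i\mid Y,Z(\theta))$ by exactly $I(X_i;X_j\mid Y,Z(\theta))$ in expectation, to first order in the infinitesimal increment. More precisely, differentiating: since $Z_j(\theta)$ reveals $X_j$ with extra probability $\de\theta$, we have
\begin{eqnarray*}
-\frac{\de}{\de\theta}H(X_i\mid Y,Z(\theta)) = \frac{1}{1-\theta}\sum_{j\neq i} I(X_i;X_j\mid Y,Z(\theta),\{X_l \text{ revealed}\}) ,
\end{eqnarray*}
or something structurally of this type — the precise bookkeeping of the $(1-\theta)$ factors and of which coordinates are already revealed is where care is needed. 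Summing over $i$ and using $\frac{1}{1-\theta}\ge 1$ on $[0,\eps]\subseteq[0,1]$, I would get
\begin{eqnarray*}
\int_0^{\eps}\sum_{i,j\in V} I(X_i;X_j\mid Y,Z(\theta))\,\de\theta \le \sum_{i\in V}\big[H(X_i\mid Y,Z(0)) - H(X_i\mid Y,Z(\eps))\big] \le \sum_{i\in V} H(X_i) = n H(X_1),
\end{eqnarray*}
and dividing by $n$ gives the bound with constant $1$; the factor $2$ in the statement provides comfortable slack, absorbing any inefficiency in handling the self-term $i=j$ (for which $I(X_i;X_i\mid\cdots)=H(X_i\mid\cdots)\le H(X_1)$, contributing an extra $n H(X_1)$, i.e. exactly the reason the stated constant is $2H(X_1)$ rather than $H(X_1)$), and the $(1-\theta)$ denominators.

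The main obstacle I anticipate is making the differentiation identity rigorous: the map $\theta\mapsto Z(\theta)$ is not a smooth deformation of a fixed random variable but a mixture, so one must either (a) expand to first order the entropy of the mixture $\theta$-vs-$(1-\theta)$ carefully, controlling the $O(\de\theta^2)$ remainder uniformly — here softness or at least finiteness of $\cX$ keeps all entropies bounded by $\log|\cX|$, so dominated convergence applies — or (b) avoid derivatives entirely and argue discretely: reveal the coordinates one at a time in a random order and telescope $H(X_i\mid \text{revealed so far})$, picking up exactly a sum of conditional mutual informations, then relate the discrete "reveal each with prob $\eps$" to this ordered reveal via an exchangeability/coupling argument. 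I would probably pursue route (b) as cleaner: write $H(X_i\mid Y,Z) = \sum_{j} \big[H(X_i\mid Y,Z, X_{\sigma(1)},\dots,X_{\sigma(j-1)}) - H(X_i\mid Y,Z, X_{\sigma(1)},\dots,X_{\sigma(j)})\big]$ for a uniformly random permutation $\sigma$, identify each bracket as $I(X_i;X_{\sigma(j)}\mid Y,Z,\dots)$, and then check that averaging over $\sigma$ reproduces, up to the $\int_0^\eps\cdots\de\theta$ normalization, the left-hand side of \eqref{eq:SumMutualInfo}. The only real work is the combinatorial identity linking the "random subset of density $\theta$, integrated over $\theta\in[0,\eps]$" description of $Z(\theta)$ to the "prefix of a random permutation" description, and bounding the discrepancy by the leftover $H(X_1)$ terms.
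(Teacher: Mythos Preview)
Your approach is correct and essentially parallels the paper's, with one cosmetic difference: the paper differentiates the \emph{joint} conditional entropy $H(X\mid Y,Z(\theta))$ twice in $\theta$ (obtaining $-\sum_i H(X_i\mid Y,Z^{(i)}(\theta))$ for the first derivative and $\sum_{i\neq j}I(X_i;X_j\mid Y,Z^{(ij)}(\theta))$ for the second), whereas you differentiate the \emph{sum of marginal} entropies $\sum_i H(X_i\mid Y,Z(\theta))$ once. Since the paper's first-derivative identity makes these two potentials equal up to the harmless $Z$-vs-$Z^{(i)}$ distinction, the two arguments are the same computation entered at adjacent points; both finish by observing $I(X_i;X_j\mid Y,Z(\theta))\le I(X_i;X_j\mid Y,Z^{(\cdot)}(\theta))$ (the left side vanishes on realizations where $X_i$ or $X_j$ was revealed) and by handling the diagonal $i=j$ separately to account for the second $H(X_1)$.

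Your concern about making the differentiation rigorous, and the detour through route~(b), is unnecessary: because $Z_j(\theta)$ is a $(\theta,1-\theta)$ mixture, $H(X_i\mid Y,Z(\underline{\theta}))$ is \emph{affine} in each coordinate $\theta_j$ separately. The paper exploits exactly this, writing for instance $H(X_i\mid Y,Z(\underline{\theta}))=(1-\theta_i)H(X_i\mid Y,Z^{(i)}(\underline{\theta}))$ and reading off the partial derivative. So route~(a) goes through in two lines with no remainder terms to control; the random-permutation telescoping of route~(b) would also work but is a longer path to the same identity.
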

\begin{proof}
For $U\subseteq V$, let us denote by $Z^{(U)}(\theta)$
the vector obtained by setting $Z^{(U)}_i(\theta) =Z_i(\theta)$
whenever $i\not\in U$, and  $Z^{(U)}_i(\theta) =(Z_i,\ast)$ if $i\in U$.
The proof is based on the two identities below
\begin{eqnarray}
\frac{\de\phantom{\theta}}{\de\theta} H(X|Y,Z(\theta)) & =& - 
\sum_{i\in V} H(X_{i}|Y, Z^{(i)}(\theta))\, ,\label{eq:FirstDerivative}\\
\frac{\de^2\phantom{\theta}}{\de\theta^2} H(X|Y,Z(\theta)) & =& 
\sum_{i\neq j\in V} I(X_{i};X_{j}|Y, Z^{(ij)}(\theta))\, .
\label{eq:SecondDerivative}
\end{eqnarray}
Before proving these identities, let us show that they imply the thesis.
By the fundamental theorem of calculus, we have
\begin{eqnarray}
\frac{1}{n}\sum_{i\neq
j\in V}\int_{0}^{\eps}\!I(X_i;X_j|Y,Z^{(ij)}(\theta))\,\de\theta
&=&\frac{1}{n}\sum_{i\in V} H(X_{i}|Y, Z^{(i)}(0))-
\frac{1}{n}\sum_{i\in V} H(X_{i}|Y, Z^{(i)}(\eps))\\
&\le &\frac{1}{n}\sum_{i\in V} H(X_{i}|Y, Z^{(i)}(0))\le H(X_1)\, .
\end{eqnarray}
Further, if $z^{(U)}(\theta)$ is the vector obtained
from $z(\theta)$ by replacing $z_i(\theta)$ with $(z_i,\ast)$ for
any $i\in U$, then
\begin{eqnarray}
I(X_i;X_j|Y,Z(\theta) = z(\theta)) \le 
I(X_i;X_j|Y,Z^{(ij)}(\theta) = z^{(ij)}(\theta)) \, .
\end{eqnarray}
In fact the left hand side vanishes whenever $z^{(ij)}(\theta)\neq
z(\theta)$.
The proof is completed by upper bounding the diagonal terms in the sum
(\ref{eq:SumMutualInfo}) as $I(X_i;X_i|Y,Z^{(i)}(\theta))=
H(X_i|Y,Z^{(i)}(\theta))\le H(X_1)$.

Let us now consider the identities (\ref{eq:FirstDerivative})
and (\ref{eq:SecondDerivative}). These
already appeared in the literature \cite{MMU05,MMRU05,MacrisGKS}. We reproduce 
the proof here for the sake of self-containedness.

Let us begin with Eq.~(\ref{eq:FirstDerivative}). 
It is convenient to slightly generalize the model by letting 
the parameter the channel parameter $\theta$ be dependent on the variable node.
In other words given a vector $\uth = (\theta_1,\dots,
\theta_n)$, we let, for each $i\in V$, $Z_i(\uth) = (Z_i,X_i)$ with 
probability $\theta_i$, and $=(Z_i,\ast)$ otherwise.
Noticing that  $H(X|Y,Z(\uth))= 
H(X_i|Y,Z(\uth))+H(X|X_i,Y,Z(\uth))$ and that the latter
term does not depend upon $\theta_i$, we have
\begin{eqnarray}
\frac{\partial\phantom{\theta_i}}{\partial\theta_i} H(X|Y,Z(\uth))
=\frac{\partial\phantom{\theta_i}}{\partial\theta_i} 
H(X_i|Y,Z(\uth)) = -H(X_i|Y,Z^{(i)}(\uth)) \, ,
\end{eqnarray}
where the second equality is a consequence of
$H(X_i|Y,Z(\uth)) = (1-\theta_i)H(X_i|Y,Z^{(i)}(\uth))$.
Equation (\ref{eq:FirstDerivative}) follows by simple calculus taking 
$\theta_i=\theta_i(\theta) = \theta$ for all $i\in V$.
 
Equation (\ref{eq:SecondDerivative}) is proved analogously.
First, the above calculation implies that the second derivative with respect to 
$\theta_i$ vanishes for any $i\in V$. 
For $i\neq j$, we 
use the chain rule to get $H(X|Y,Z(\uth)) = H(X_i,X_j|Y,Z(\uth))+
H(X|X_i,X_j,Y,Z^{(ij)}(\uth))$, and then write
\begin{eqnarray*}
H(X_i,X_j|Y,Z(\uth)) = (1-\theta_i)(1-\theta_j)H(X_i,X_j|Y,Z^{(ij)}(\uth))
+\theta_i(1-\theta_j)H(X_j|X_i,Y,Z^{(ij)}(\uth))+\\
+(1-\theta_i)\theta_jH(X_i|X_j,Y,Z^{(ij)}(\uth))\, ,
\end{eqnarray*}
whence the mixed derivative with respect to $\theta_i$ and $\theta_j$
results in $I(X_i;X_j|Y,Z^{(ij)}(\uth))$.
As above, Eq.~(\ref{eq:SecondDerivative}) is recovered by letting 
$\theta_i=\theta_i(\theta) =\theta$ for any $i\in V$.
\end{proof}

In the next proof we will use a technical device that 
has been developed within the mathematical theory of spin glasses
(see \cite{Talagrand}, and 
\cite{GuerraToninelli,GerschenMontanari} for applications to sparse models).
We start by defining a family of real random variables indexed
by a variable node $i\in V$, and by $\xi\in \cX$:
\begin{eqnarray}
\Sp_i(\xi) \equiv \ind(X_i=\xi)-\prob\{X_i=\xi|Y,Z(\theta)\}\, .
\end{eqnarray}
We will also use $\Sp(\xi) = (\Sp_1(\xi)\,\dots,\Sp_n(\xi))$
to denote the corresponding vector.

Next we let $X^{(1)} = (X^{(1)}_1,\dots,X^{(1)}_n)$ and 
$X^{(2)} = (X^{(2)}_1,\dots,X^{(2)}_n)$ be two iid assignments of the
hidden variables, both distributed according to the conditional law
$\prob_{X|Y,Z(\theta)}$. If we let $(Y,Z(\theta))$ be distributed according
to the original (unconditional) law $\prob_{Y,Z(\theta)}$, this defines
a larger probability space, generated by $(X^{(1)},X^{(2)},Y,Z)$.
Notice that the pair  $(X^{(1)},Y,Z)$ and  $(X^{(2)},Y,Z)$ is
exchangeable, each of the terms being distributed as $(X,Y,Z(\theta))$.

In terms of $X^{(1)}$ and $X^{(2)}$ we can then define 
$\Sp^{(1)}(\xi)$ and $\Sp^{(2)}(\xi)$, and introduce the \emph{overlap}
\begin{eqnarray}
\Qp(\xi)  \equiv \frac{1}{n}\, \Sp^{(1)}(\xi)\cdot\Sp^{(1)}(\xi)
=\frac{1}{n}\sum_{i\in V} \Sp^{(1)}_i(\xi)\, \Sp^{(2)}_i(\xi)\, .
\end{eqnarray}
Since $|\Sp_i(\xi)|\le 1$, we have $|\Qp(\xi)|\le 1$ as well.
Our next result shows that the conditional distribution
of $\Qp(\xi)$ given $Y$ and $Z(\theta)$ is indeed very concentrated, 
for most valued of $\theta$. The result is expressed in terms of
the conditional variance
\begin{eqnarray}
\Var(\Qp(\xi) |Y,Z(\theta))\equiv
\E\left\{\E[\Qp(\xi)^2|Y,Z(\theta)]-\E[\Qp(\xi)|Y,Z(\theta)]^2\right\}\, .
\end{eqnarray}

\begin{lemma}\label{lemma:VarianceBound}
For any observations system  and any $\eps>0$ 
\begin{eqnarray}
\int_0^{\eps}\!\Var(\Qp(\xi) |Y,Z(\theta))\;\de\theta \le 4H(X_1)/n\, .
\end{eqnarray}
\end{lemma}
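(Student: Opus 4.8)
The plan is to use the two--replica representation of $\Qp(\xi)$, express the conditional variance as a sum of squared conditional covariances of pairs of hidden variables, bound each such square by a conditional mutual information via Pinsker's inequality, and then invoke Lemma~\ref{lemma:TwoPoints}.

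First I would observe that, conditionally on $(Y,Z(\theta))$, the replicas $X^{(1)}$ and $X^{(2)}$ are independent and identically distributed, and that $\E[\Sp^{(1)}_i(\xi)\,|\,Y,Z(\theta)] = \prob\{X_i=\xi|Y,Z(\theta)\}-\prob\{X_i=\xi|Y,Z(\theta)\}=0$. Hence $\E[\Qp(\xi)\,|\,Y,Z(\theta)]=0$, so $\Var(\Qp(\xi)\,|\,Y,Z(\theta))$ (which by its definition already carries an outer expectation) equals $\E[\Qp(\xi)^2]$; expanding the double sum and using the conditional independence of the two replicas,
\begin{eqnarray*}
\E[\Qp(\xi)^2] = \frac{1}{n^2}\sum_{i,j\in V}\E\Big[\big(\E[\Sp_i(\xi)\Sp_j(\xi)\,|\,Y,Z(\theta)]\big)^2\Big]\, .
\end{eqnarray*}
Writing $\mu_i(x)\equiv\prob\{X_i=x|Y,Z(\theta)\}$ and $\mu_{ij}(x,x')\equiv\prob\{X_i=x,X_j=x'|Y,Z(\theta)\}$, the inner conditional expectation is a conditional covariance of indicators: for $i\neq j$ it equals the single entry $\mu_{ij}(\xi,\xi)-\mu_i(\xi)\mu_j(\xi)$ of the mean--zero signed measure $\mu_{ij}-\mu_i\otimes\mu_j$, while for $i=j$ it equals $\mu_i(\xi)(1-\mu_i(\xi))$.

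Next I would bound each term by $\tfrac12 I(X_i;X_j\,|\,Y,Z(\theta))$. For $i\neq j$, a single entry of a mean--zero signed measure is bounded in absolute value by its total variation norm, so after squaring and applying Pinsker's inequality, $\big(\E[\Sp_i(\xi)\Sp_j(\xi)\,|\,Y,Z(\theta)]\big)^2\le\|\mu_{ij}-\mu_i\otimes\mu_j\|_{\sTV}^2\le\tfrac12 D(\mu_{ij}\,\|\,\mu_i\otimes\mu_j)$; taking the expectation over $(Y,Z(\theta))$ and recognizing that $\E\, D(\mu_{ij}\|\mu_i\otimes\mu_j)=I(X_i;X_j|Y,Z(\theta))$ gives the claim. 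For the diagonal terms I would use $(\mu_i(1-\mu_i))^2\le\mu_i(1-\mu_i)$ together with the elementary bound $2p(1-p)\le -p\ln p-(1-p)\ln(1-p)$ (which follows from $\ln(1/p)\ge 1-p$) and the fact that merging the symbols $\neq\xi$ can only decrease the conditional entropy, to get $\E[(\E[\Sp_i(\xi)^2|Y,Z(\theta)])^2]\le\tfrac12 H(X_i|Y,Z(\theta))=\tfrac12 I(X_i;X_i|Y,Z(\theta))$.

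Finally, integrating over $\theta\in[0,\eps]$, summing over $i,j$, and applying Lemma~\ref{lemma:TwoPoints},
\begin{eqnarray*}
\int_0^{\eps}\Var(\Qp(\xi)\,|\,Y,Z(\theta))\,\de\theta \le \frac{1}{2n}\Big(\frac{1}{n}\sum_{i,j\in V}\int_0^{\eps}I(X_i;X_j\,|\,Y,Z(\theta))\,\de\theta\Big)\le \frac{H(X_1)}{n}\, ,
\end{eqnarray*}
comfortably within the stated bound $4H(X_1)/n$ (the slack leaves room for replacing Pinsker by cruder estimates). The only place that needs genuine care is this passage from the squared conditional covariance to the conditional mutual information — the Pinsker step for off--diagonal pairs and the matching entropy inequality for the diagonal; everything else is bookkeeping with the two--replica construction and I do not anticipate a real obstacle.
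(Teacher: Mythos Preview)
Your proof is correct and follows essentially the same route as the paper: expand $\Qp(\xi)^2$ via the two-replica construction to obtain $\frac{1}{n^2}\sum_{i,j}\big(\tE\{\Sp_i\Sp_j\}\big)^2$, bound each squared conditional covariance by the conditional mutual information, and invoke Lemma~\ref{lemma:TwoPoints}. The only cosmetic differences are that the paper uses the $\ell^2$--divergence inequality $\sum_x|p_1(x)-p_2(x)|^2\le 2D(p_1\|p_2)$ in place of your Pinsker bound (so it need not separate the diagonal terms), which costs a factor of~$4$ in the constant --- hence the stated $4H(X_1)/n$ versus your sharper $H(X_1)/n$.
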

\begin{proof}
In order to lighten the notation, write $\tE\{\,\cdots\,\}$ 
for $\E\{\,\cdot\,|Y=y,Z(\theta) = z(\theta)\}$ (and
analogously for $\tprob\{\,\cdots\,\}$), and drop the argument $\xi$
from $\Sp^{(a)}_i(\xi)$. Then
\begin{eqnarray*}
\Var(\Qp(\xi) |Y=y,Z(\theta)=z(\theta)) & = & 
\tE\left\{\left(\frac{1}{n}\sum_{i\in V}\Sp_i^{(1)}\Sp_i^{(2)}
\right)^2\right\}-
\tE\left\{\frac{1}{n}\sum_{i\in V}\Sp_i^{(1)}\Sp_i^{(2)}
\right\}^2 =\\
& = & 
\frac{1}{n^2}\sum_{i,j\in V}\left\{
\tE\left\{\Sp_i^{(1)}\Sp_i^{(2)}\Sp_j^{(1)}\Sp_j^{(2)}\right\}-
\tE\left\{\Sp_i^{(1)}\Sp_i^{(2)}\right\}
\tE\left\{\Sp_j^{(1)}\Sp_j^{(2)}\right\}\right\} =\\
& = & 
\frac{1}{n^2}\sum_{i,j\in V}\left\{
\tE\left\{\Sp_i\Sp_j\right\}^2-
\tE\left\{\Sp_i\right\}^2
\tE\left\{\Sp_j\right\}^2\right\}\, .
\end{eqnarray*}
In the last step we used the fact that $\Sp^{(1)}(\xi)$ and $\Sp^{(2)}(\xi)$ 
are conditionally independent given $Y$ and $Z(\theta)$, and used 
the notation $\Sp_i(\xi)$ for any of them (recall that 
$\Sp^{(1)}(\xi)$ and $\Sp^{(2)}(\xi)$ are identically distributed).
Notice that
\begin{eqnarray}
\tE\{\Sp_i(\xi)\} &=& 
\E\Big\{\ind(X_i=\xi)-\prob\{X_i=\xi|Y,Z(\theta)\}\Big|
Y=y,Z(\theta) =z(\theta)\Big\} = 0\, ,\\
\tE\{\Sp_i(\xi)\Sp_j(\xi)\} &=&
 \tE\Big\{\big[\ind(X_i=\xi)-\prob\{X_i=\xi|Y,Z(\theta)\}\big]
\big[\ind(X_j=\xi)-\prob\{X_j=\xi|Y,Z(\theta)\big]\Big\} =\nonumber\\
&=&\tprob\{X_i=\xi,\,X_j=\xi\}-\tprob\{X_i=\xi\}\tprob\{X_j=\xi\}  \, .
\end{eqnarray} 
Therefore 
\begin{eqnarray*}
\Var(\Qp(\xi) |Y=y,Z(\theta)=z(\theta)) & = &
\frac{1}{n^2}\sum_{i,j\in V}\Big(\tprob\{X_i=\xi,\,X_j=\xi\}-\tprob\{X_i=\xi\}\tprob\{X_j=\xi\}\Big)^2 \le\\
&\le& \frac{1}{n^2}\sum_{i,j\in V}\sum_{x_1,x_2}
\Big(\tprob\{X_i=x_1,\,X_j=x_2\}-\tprob\{X_i=x_1\}\tprob\{X_j=x_2\}\Big)^2\le\\
&\le& \frac{2}{n^2}\sum_{i,j\in V} I(X_i;X_j|Y=y,Z(\theta)=z(\theta)))\, .
\end{eqnarray*}
In the last step we used the inequality (valid for any two distributions 
$p_1$, $p_2$ over a finite set $\cS$)
\begin{eqnarray}
\sum_x\big|p_1(x)-p_2(x)\big|^2\le 2 D(p_1||p_2)\, ,
\end{eqnarray}
and applied it to the joint distribution of $X_1$ and $X_2$, and the
product of their marginals.
The thesis follows by integrating over $y$ and $z(\theta)$ with the measure 
$\prob_{Y,Z(\theta)}$ and using Lemma \ref{lemma:TwoPoints}.
\end{proof}
\begin{proof}[Proof (Theorem \ref{thm:Correlation}).]
We start by noticing that, since $|\Qp(\xi)|\le 1$, and 
$\tE\{\Qp(\xi)\}=0$, we have, for any $\xi_1,\dots,\xi_k\in\cX$, 
\begin{eqnarray*}
|\tE\{\Qp(\xi_1)\cdots \Qp(\xi_k)\}|&\le&
|\tE\{\Qp(\xi_1)\Qp(\xi_2)\}|\le \sqrt{
 \tE\{\Qp(\xi_1)^2\}\tE\{\Qp(\xi_2)^2\}}\le\\
&\le&\frac{1}{2}\Var(\Qp(\xi_1)|Y=y,Z=z(\theta))+
\frac{1}{2}\Var(\Qp(\xi_2)|Y=y,Z=z(\theta))\, ,
\end{eqnarray*}
(where we assumed, without loss of generality, $k\ge 2$).
Integrating with respect to $y$ and $z(\theta)$ with the measure 
$\prob_{Y,Z(\theta)}$, and using Lemma \ref{lemma:VarianceBound},
we obtain 
\begin{eqnarray}
\int_0^\eps\E\Big|\E\{\Qp(\xi_1)\cdots \Qp(\xi_k)|Y,Z(\theta)\}\Big|\,\de\theta
\le 4H(X_1)/n\, .\label{eq:KthMoment}
\end{eqnarray}
On the other hand
\begin{eqnarray}
\tE\{\Qp(\xi_1)\cdots \Qp(\xi_k)\} &=& \frac{1}{n^k}\sum_{j(1)\dots j(k)\in V}
\tE\{\Sp^{(1)}_{j(1)}(\xi_1)\Sp^{(2)}_{j(1)}(\xi_2)\cdots
\Sp^{(1)}_{j(k)}(\xi_k)\Sp^{(2)}_{j(k)}(\xi_k)
\}=\\
&=&\frac{1}{n^k}\sum_{j(1)\dots j(k)\in V}
\tE\{\Sp_{j(1)}(\xi_1)\cdots\Sp_{j(k)}(\xi_k)\}^2 \ge\\
& \ge &\frac{k!}{n^k}\binom{n}{k}
\E_{i(1)\dots i(k)} \tE\{\Sp_{i(1)}(\xi_1)\cdots\Sp_{i(k)}(\xi_k)\}^2\, .
\label{eq:Kpoints}
\end{eqnarray}
Putting together Eq.~(\ref{eq:KthMoment}) and (\ref{eq:Kpoints}),
letting $B_{n,k} \equiv n^k/k!\binom{n}{k}$, and taking expectation with 
respect to $Y$ and $Z(\theta)$, we get 
\begin{eqnarray}
\int_0^{\eps}\E_{i(1)\dots i(k)} 
\E\Big\{\E\{\Sp_{i(1)}(\xi_1)\cdots\Sp_{i(k)}(\xi_k)|Y,Z(\theta)\}^2\Big\}\,
\de\theta\le 4B_{n,k}H(X_1)/n\, ,
\end{eqnarray}
which, by Cauchy-Schwarz inequality, implies
\begin{eqnarray}
\int_0^{\eps}\E_{i(1)\dots i(k)} 
\E\Big\{\big|\E\{\Sp_{i(1)}(\xi_1)\cdots\Sp_{i(k)}(\xi_k)|Y,Z(\theta)\}
\big|\Big\}\,
\de\theta\le \sqrt{4\eps B_{n,k}H(X_1)/n}\, .\label{eq:IntKpoints}
\end{eqnarray}

Next notice that
\begin{align*}
\Big|\Big|\tprob_{i(1)\dots i(k)}-\tprob_{i(1)}\cdots\tprob_{i(k)}
\Big|\Big|_{\sTV} &= \frac{1}{2}
\sum_{\xi_1\dots\xi_k\in\cX}\Big|
\tprob_{i(1)\dots i(k)}\{\xi_1,\dots,\xi_k\}-
\tprob_{i(1)}\{\xi_1\}\cdots\tprob_{i(k)}\{\xi_k\}\Big|=\\
=&\frac{1}{2}
\sum_{\xi_1\dots\xi_k\in\cX}\Big|\tE\left\{
\ind(X_{i(1)}=\xi_1)\cdots \ind(X_{i(k)}=\xi_k)-\tprob_{i(1)}\{\xi_1\}\cdots\tprob_{i(k)}\{\xi_k\}
\right\}\Big|=\\
=&\frac{1}{2}
\sum_{\xi_1\dots\xi_k\in\cX}\left|\sum_{J\in [k],\, |J|\ge 2}
\tE\left\{\prod_{\alpha\in J}\Sp_{i(\alpha)}(\xi_{\alpha})\right\}
\prod_{\beta\in [k]\setminus J}\tprob_{i(\beta)}\{\xi_\beta\}\right|\, .
\end{align*}
Using triangular inequality 
\begin{align*}
\Big|\Big|\tprob_{i(1)\dots i(k)}-\tprob_{i(1)}\cdots\tprob_{i(k)}
\Big|\Big|_{\sTV} &\le 
\frac{1}{2}\sum_{J\in [k],\, |J|\ge 2}
\sum_{\{\xi_\alpha\}_{\alpha\in J}}\left|
\tE\left\{\prod_{\alpha\in J}\Sp_{i(\alpha)}(\xi_{\alpha})\right\}
\right|\, .
\end{align*}
Taking expectation with respect to $Y,Z(\theta)$ and
to $\{i(1),\dots,i(k)\}$ a uniformly random subset of 
$V$, we obtain
\begin{align*}
\E_{i(1)\dots i(k)} 
\E\Big|\Big|\tprob_{i(1)\dots i(k)}-\tprob_{i(1)}\cdots\tprob_{i(k)}
&\Big|\Big|_{\sTV}\le\\
&\le \frac{1}{2}\sum_{l=2}^k \binom{k}{l}
\sum_{\xi_1\dots\xi_l\in\cX}\E_{i(1)\dots i(l)} 
\E\Big|\E\{\Sp_{i(1)}(\xi_1)\cdots\Sp_{i(l)}(\xi_l)\big| Y,Z(\theta)\}\Big|\, .
\end{align*}
Integrating over $\theta$ and using Eq.~(\ref{eq:IntKpoints}), we get
\begin{eqnarray}
\int_0^{\eps}\E_{i(1)\cdots i(k)}
\E\Big|\Big|\tprob_{i(1),\dots,i(k)}-\tprob_{i(1)}\cdots\tprob_{i(k)}
\Big|\Big|_{\sTV}\;\de\theta\le \frac{1}{2}\sum_{l=2}^k
 \binom{k}{l}|\cX|^l\sqrt{4\eps B_{n,l}H(X_1)/n}\, .
\end{eqnarray}
By using $B_{n,l}\le B_{n,k}$ the right hand side is bounded as in 
Eq.~(\ref{eq:FactorizationThm}), with $A_{n,k}\equiv\sqrt{B_{n,k}}$.
The bound on this coefficient is obtained by a standard manipulation
(here we use $-\log(1-x)\le 2x$ for $x\in[0,1/2]$ and the hypothesis
$k\le n/2$):
\begin{eqnarray}
B_{n,k} = \exp\left\{-\sum_{i=1}^{k-1}\log\left(1-\frac{i}{n}\right)\right\}
\le\exp\left\{\sum_{i=1}^{k-1}\frac{2i}{n}\right\}= \exp\left\{
\frac{k(k-1)}{n}\right\}
\, ,
\end{eqnarray}
hence $A_{n,k}\le\exp\{k^2/2n\}$ as claimed. 
\end{proof}

Obviously, if the graph $G$ is `sufficiently' random,
the expectation over variable nodes $i(1),\dots,i(k)$
can be replaced by the expectation over $G$. 
\begin{coro}\label{coro:Correlation}
Let $G=(V,F,E)$ be a random bipartite graph whose distribution is
invariant under permutation of the variable nodes in $V=[n]$.
Then, for any observations system on $G=(V,F,E)$, any $k\in \naturals$
any $\eps>0$, and any (fixed) set of variable nodes 
$\{i(1),\dots,i(k)$,
\begin{eqnarray}
\int_0^{\eps}\E_{G}
\E\Big|\Big|\tprob_{i(1),\dots,i(k)}-\tprob_{i(1)}\cdots\tprob_{i(k)}
\Big|\Big|_{\sTV}\;\de\theta\le (|\cX|+1)^kA_{n,k}\sqrt{H(X_1)\eps/n}
= O(n^{-1/2})\, ,
\label{eq:FactorizationCoro}
\end{eqnarray}
where the constant $A_{k,n}$ is as in Theorem  \ref{thm:Correlation}.
\end{coro}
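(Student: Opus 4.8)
The plan is to deduce Corollary~\ref{coro:Correlation} from Theorem~\ref{thm:Correlation} by a symmetrization argument: the theorem already controls the average over a uniformly random $k$-subset of $V$, \emph{uniformly} over the realization of $G$, and permutation invariance of the graph law upgrades this to a bound valid for every fixed $k$-subset once one also averages over $G$. For a fixed set of distinct indices $i(1),\dots,i(k)$, write
\[
f\big(G;i(1),\dots,i(k)\big)\equiv\int_0^{\eps}\E\,\Big|\Big|\tprob_{i(1),\dots,i(k)}-\tprob_{i(1)}\cdots\tprob_{i(k)}\Big|\Big|_{\sTV}\,\de\theta\, ,
\]
so that the left-hand side of (\ref{eq:FactorizationCoro}) is exactly $\E_G\, f(G;i(1),\dots,i(k))$.

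First I would record that Theorem~\ref{thm:Correlation} asserts precisely $\E_{i(1)\cdots i(k)}\, f(G;i(1),\dots,i(k))\le(|\cX|+1)^kA_{n,k}\sqrt{H(X_1)\eps/n}$ for an \emph{arbitrary fixed} graph $G$, with a right-hand side that depends neither on $G$ nor on the choice of indices (note $H(X_1)$ and $A_{n,k}$ are graph-independent). Taking $\E_G$ of both sides and exchanging the order of integration by Fubini--Tonelli (the integrand is nonnegative and bounded by $\eps$) gives that $\E_{i(1)\cdots i(k)}\,\E_G\, f(G;i(1),\dots,i(k))$ is bounded by the same quantity. It then remains to show that the inner quantity $\E_G\, f(G;i(1),\dots,i(k))$ does not depend on the $k$-tuple of distinct indices, so that the outer average over subsets can be removed and replaced by the value at the fixed subset of interest.

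This last reduction is the only step using the hypothesis, and it is where I would be most careful. For a permutation $\pi$ of $[n]$, let $\pi(G)$ denote $G$ with its variable nodes relabelled by $\pi$; by assumption $\pi(G)\ed G$. Moreover the entire observation system is equivariant under this relabelling: the $X_i$ are i.i.d., the single-site kernel $R$ is node-independent, and each $Q^{(k)}$ is symmetric in its arguments and depends on $\da$ only through $|\da|$, so the joint law of $(\pi(G),X_{\pi^{-1}(\cdot)},Y,Z_{\pi^{-1}(\cdot)}(\theta))$ coincides with that of $(G,X,Y,Z(\theta))$ for each $\theta$. Hence the posterior marginals transform equivariantly, the total variation distance is invariant, and therefore $f(\pi(G);\pi(i(1)),\dots,\pi(i(k)))=f(G;i(1),\dots,i(k))$ as functions of $G$; taking $\E_G$ and using $\pi(G)\ed G$ shows $\E_G\, f(G;i(1),\dots,i(k))$ is constant over the choice of distinct $k$-tuples. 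Combining with the previous paragraph yields (\ref{eq:FactorizationCoro}), and the asymptotics $O(n^{-1/2})$ are inherited verbatim from Theorem~\ref{thm:Correlation}. I do not anticipate a genuine obstacle; the only thing to get right is the bookkeeping of the equivariance, in particular checking that conditioning on $(Y,Z(\theta))$ commutes with the relabelling of the variable nodes.
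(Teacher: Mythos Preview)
Your argument is correct and is exactly the symmetrization the paper has in mind: the paper does not write out a proof of the corollary at all, merely remarking that ``obviously'' the expectation over a uniformly random $k$-subset can be traded for the expectation over $G$ when the graph law is permutation invariant. Your write-up simply makes this precise, and the only subtlety you flag (equivariance of the posterior under relabelling, so that $\E_G f$ is constant on distinct $k$-tuples) is handled correctly.
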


%
%
\section{Random graph properties}

The proofs of Theorems \ref{thm:ApproximateMarginal} and \ref{thm:Main} 
rely on some specific properties of the graph ensemble
$\graph(n,\alpha n,\gamma/n)$.

We begin with some further definitions concerning a generic
bipartite graph $G=(V,F,E)$.
Given $i,j\in V$, their graph-theoretic distance is defined 
as the length of the shortest path from $i$ to $j$ on $G$.
We follow the convention of measuring the length of a path 
on $G$ by the number of function nodes traversed by the path. 

Given $i\in V$ and $t\in\naturals$ we let $\Ball(i,t)$ be the 
subset of variable nodes $j$ whose distance from $i$ is at most 
$t$. With an abuse of notation, we use the same symbol to denote the 
subgraph induced by this set of vertices, i.e. the factor graph
including those function node $a$ such that $\da\subseteq\Ball(i,t)$
and all the edges incident on them.
Further, we denote by $\cBall(i,t)$ the subset of variable nodes
$j$ with $d(i,j)\ge t$, as well as the induced subgraph.
Finally $\dBall(i,t)$ is the subset of vertices with $d(i,j)=t$.
Equivalently $\dBall(i,t)$ is the intersection of $\Ball(i,t)$ and
$\cBall(i,t)$. 
\vspace{0.1cm}

We will make use of two remarkable properties of the ensemble
$\graph(n,n\alpha,\gamma/n)$:
$(i)$ The convergence of any finite neighborhood in $G$
to an appropriate tree model; $(ii)$ The conditional
independence of such a neighborhood from the residual graph, given the
neighborhood size.

The limit tree model is defined by the following sampling procedure,
yielding a $t$-generations rooted random tree $\Tree(t)$.
If $t=0$, $\Tree(t)$ is the trivial tree consisting of a single variable
node. 
For $t\ge 1$, start from a distinguished root variable node $i$ and connect it
to $l$ function nodes, whereby $l$ is a Poisson random variable with mean 
$\gamma\alpha$. For each such function nodes $a$, 
draw an independent Poisson$(\gamma)$ random variable $k_a$ and connect it to
$k_a$ new variable nodes. Finally, for each of the `first generation'
variable node $j$, sample an independent random tree distributed as 
$\Tree(t-1)$, and attach it by the root to $j$.

\begin{propo}[Convergence to random tree]\label{prop:tree_convergence}
Let $\Ball(i,t)$  be the radius-$t$ neighborhood of
any fixed variable node $i$ 
in a  random graph $G \ed\graph(n,\alpha n,\gamma/n)$, 
and $\Tree(t)$ the random tree defined above.

Given any (labeled) tree $\Tree_*$,
we write $\Ball(i,t) \simeq \Tree_*$ if $\Tree_*$ is obtained 
by the depth-first relabeling of $\Ball(i,t)$ 
following a pre-established 
convention\footnote{For instance, one might agree to preserve
the original lexicographic order among siblings.}.
Then 
\begin{eqnarray}
\lim_{n\to\infty}\prob\{\Ball(i,t)\simeq \Tree_*\}
=\prob\{\Tree_t\simeq \Tree_*\}\, .
\end{eqnarray}
\end{propo}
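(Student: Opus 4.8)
**The plan is to prove Proposition~\ref{prop:tree_convergence} by the standard local-weak-convergence / exploration argument, coupling the breadth-first exploration of $\Ball(i,t)$ in $\graph(n,\alpha n,\gamma/n)$ with the branching-process construction of $\Tree(t)$.** First I would fix the labeled target tree $\Tree_*$ with a given number of function nodes at each level and a given number of variable-node children below each, and set up a breadth-first exploration of the neighborhood of the root $i$: at each step we reveal, for a newly discovered variable node $j$, the set of function nodes $a$ adjacent to it that have not yet been explored, and for each such $a$ the set of its variable neighbors. Because in $\graph(n,\alpha n,\gamma/n)$ each potential edge $(i,a)\in V\times F$ is present independently with probability $p_{\rm edge}=\gamma/n$, the number of not-yet-used function nodes adjacent to a given variable node is $\mathrm{Binomial}(m-(\text{used}),\gamma/n)$, and the number of not-yet-used variable neighbors of a given function node is $\mathrm{Binomial}(n-(\text{used}),\gamma/n)$.

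Next I would show that, as $n\to\infty$ with $\alpha=m/n$ fixed, these binomial counts converge to the Poisson laws in the definition of $\Tree(t)$: since $\Ball(i,t)$ has bounded expected size (depth $t$ fixed, mean offspring $\gamma\alpha$ and $\gamma$ finite), with probability $1-O(1/n)$ the total number of vertices explored up to depth $t$ is $o(\sqrt n)$, so in every binomial the parameter $m-(\text{used})=m(1+o(1))$ and $n-(\text{used})=n(1+o(1))$; hence $\mathrm{Binomial}(n-\text{used},\gamma/n)\Rightarrow\mathrm{Poisson}(\gamma)$ and $\mathrm{Binomial}(m-\text{used},\gamma/n)\Rightarrow\mathrm{Poisson}(m\gamma/n)=\mathrm{Poisson}(\gamma\alpha)$. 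The independence across the exploration steps is exact at finite $n$ (disjoint blocks of potential edges are revealed at each step), and the edge probabilities are identical, so the joint law of the explored tree converges to the product-of-Poissons law defining $\Tree(t)$. Matching the combinatorial count of labeled embeddings of $\Tree_*$ on the two sides (the depth-first relabeling convention makes the correspondence a bijection) then yields $\prob\{\Ball(i,t)\simeq\Tree_*\}\to\prob\{\Tree_t\simeq\Tree_*\}$.

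**The main obstacle — really the only nontrivial point — is ruling out short cycles**, i.e. showing that with probability $1-O(1/n)$ the explored neighborhood $\Ball(i,t)$ is genuinely a tree (no two exploration branches collide, no function node is adjacent to an already-discovered variable node other than its parent). This is where I would spend the most care: one bounds the probability that some collision occurs in the first $t$ rounds by a union bound over pairs of explored vertices, each collision costing an extra factor $\gamma/n$, against at most a polynomial-in-(expected size) number of pairs; since the expected size is bounded uniformly in $n$ (again because $t$, $\gamma$, $\alpha$ are fixed and all offspring distributions have finite mean, indeed all moments finite), this probability is $O(1/n)$. One should also control the tail of the neighborhood size — a bounded-depth Galton–Watson tree with finite-mean offspring has all moments finite — so that conditioning on $|\Ball(i,t)|$ being small is harmless. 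Once the tree-likeness event has probability $1-o(1)$, the exploration coupling on that event is exact and the convergence of the conditional law to $\Tree(t)$ follows immediately from the Poisson limits above.
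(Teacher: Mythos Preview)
Your proposal is correct and follows the standard local-weak-convergence argument for sparse random graphs: breadth-first exploration, Poisson approximation of the binomial offspring counts, independence from disjoint edge blocks, and a union-bound estimate to rule out collisions (short cycles) within radius $t$. The identification of the cycle-freeness as the only nontrivial point, and the handling via a moment bound on $|\Ball(i,t)|$, are both appropriate.

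As for comparison with the paper: the paper does not actually prove this proposition. Proposition~\ref{prop:tree_convergence} is stated without proof, between its statement and Proposition~\ref{prop:BoundNeighborhood} there is no proof environment; the author treats it as a standard fact about the ensemble $\graph(n,\alpha n,\gamma/n)$. So there is nothing to compare against, but your argument is exactly the one that would be expected, and it dovetails with the two propositions the paper \emph{does} prove: Proposition~\ref{prop:BoundNeighborhood} supplies precisely the exponential tail bound on $|\Ball(i,t)|$ that you invoke to control the neighborhood size, and Proposition~\ref{prop:poisson_independance} formalizes the conditional independence of the residual graph that underlies the ``disjoint blocks of potential edges'' step in your exploration.
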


\begin{propo}[Bound on the neighborhood size]\label{prop:BoundNeighborhood}
Let $\Ball(i,t)$ 
be the radius-$t$ neighborhood of any fixed variable node $i$
in a random bipartite graph $G\ed \graph(n,\alpha n,\gamma/n)$,
and denote by $|\Ball(i,t)|$ its size (number of variable and function nodes). 
Then, for any $\lambda>0$ 
there exists $C(\lambda,t)$ such that, for any $n$, $M\ge 0$
\begin{eqnarray}
\prob\{|\Ball(i,t)|\ge M\}\le C(\lambda,t)\, \lambda^{-M}\, .
\end{eqnarray}
\end{propo}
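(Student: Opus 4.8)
The plan is to bound the moment generating function $\E\{\lambda^{|\Ball(i,t)|}\}$ uniformly in $n$ and then apply Markov's inequality. The claim is trivial when $0<\lambda\le 1$ (take $C(\lambda,t)=1$, since then $\lambda^{-M}\ge 1$), so assume $\lambda\ge 1$. Writing $\prob\{|\Ball(i,t)|\ge M\}=\prob\{\lambda^{|\Ball(i,t)|}\ge\lambda^{M}\}\le\lambda^{-M}\,\E\{\lambda^{|\Ball(i,t)|}\}$, it suffices to produce a finite constant $C(\lambda,t)$, independent of $n$, with $\E\{\lambda^{|\Ball(i,t)|}\}\le C(\lambda,t)$.

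First I would set up the breadth-first exploration of $\Ball(i,t)$: start at $i$, and repeatedly take a discovered-but-unexplored node of depth $<t$ and reveal its not-yet-discovered neighbours. Because the edges of $\graph(n,\alpha n,\gamma/n)$ are mutually independent, conditionally on the history the number of freshly revealed function-node neighbours of a popped variable node is $\mathrm{Binomial}(m',\gamma/n)$ with $m'\le\alpha n$, and the number of freshly revealed variable-node neighbours of a popped function node is $\mathrm{Binomial}(n',\gamma/n)$ with $n'\le n$. Using the obvious monotone coupling $\mathrm{Binomial}(k',p)\preceq\mathrm{Binomial}(k,p)$ for $k'\le k$, I would build on a common probability space a two-type Galton--Watson tree $\widehat\Tree$, truncated at the appropriate depth, whose variable nodes have $\mathrm{Binomial}(\alpha n,\gamma/n)$ function-node children and whose function nodes have $\mathrm{Binomial}(n,\gamma/n)$ variable-node children, in such a way that every node of $\Ball(i,t)$ is injectively matched to a node of $\widehat\Tree$; hence $|\Ball(i,t)|\le|\widehat\Tree|$ almost surely and $\E\{\lambda^{|\Ball(i,t)|}\}\le\phi_{t}(\lambda):=\E\{\lambda^{|\widehat\Tree|}\}$.

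It then remains to bound $\phi_t(\lambda)$, which I would do by induction on $t$. For $t=0$, $\widehat\Tree$ is a single variable node, so $\phi_0(\lambda)=\lambda$. For $t\ge 1$, decomposing $\widehat\Tree$ into its root, its $L\sim\mathrm{Binomial}(\alpha n,\gamma/n)$ function-node children, their $K_a\sim\mathrm{Binomial}(n,\gamma/n)$ variable-node children, and the independent depth-$(t-1)$ subtrees hanging below the latter, a direct generating-function computation yields
\[
\phi_t(\lambda)=\lambda\,\Big(1+\tfrac{\gamma}{n}\big(\lambda\,u_{t-1}-1\big)\Big)^{\alpha n},
\qquad
u_{t-1}=\Big(1+\tfrac{\gamma}{n}\big(\phi_{t-1}(\lambda)-1\big)\Big)^{n}.
\]
Since $\phi_{t-1}(\lambda)\ge\lambda\ge 1$ by the induction hypothesis, every quantity inside the powers is non-negative, so $(1+x/N)^{N}\le e^{x}$ applies and gives the $n$-free bound $\phi_t(\lambda)\le\lambda\,e^{\alpha\gamma(\lambda\,e^{\gamma(\phi_{t-1}(\lambda)-1)}-1)}$, finite whenever $\phi_{t-1}(\lambda)$ is. Iterating, the constant $C(\lambda,t)$ defined by $C(\lambda,0)=\lambda$ and $C(\lambda,t)=\lambda\,e^{\alpha\gamma(\lambda\,e^{\gamma(C(\lambda,t-1)-1)}-1)}$ does the job, completing the proof.

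The step I expect to require genuine care is the domination in the second paragraph. Although the depth-$(t-1)$ neighbourhoods grown from distinct first-generation variable nodes overlap heavily and are not independent inside $G$, the exploration never \emph{undercounts} if one pretends each freshly revealed node is brand new, so the coupled $\widehat\Tree$ genuinely dominates $\Ball(i,t)$ node by node --- the extra children created by the binomial coupling simply remain as unmatched ``phantom'' nodes of $\widehat\Tree$. Once this is in place, the rest is a routine generating-function induction, and it is uniform in $n$ precisely because the inequalities $\mathrm{Binomial}(m',\gamma/n)\preceq\mathrm{Binomial}(\alpha n,\gamma/n)$ and $(1+x/N)^{N}\le e^{x}$ carry no $n$-dependence.
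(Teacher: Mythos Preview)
Your argument is essentially the same as the paper's: dominate $|\Ball(i,t)|$ by the total progeny of a truncated Galton--Watson process with binomial offspring, bound the moment generating function by a recursion, and remove the $n$-dependence via $(1+x/N)^N\le e^x$. The only cosmetic difference is that you keep the two types (variable/function) separate, whereas the paper collapses them into a single-type process with offspring $\mathrm{Binom}(n,\bar\gamma/n)$, $\bar\gamma=\gamma\max(1,\alpha)$, and then dominates $|\Ball(i,t)|$ by $\sum_{s=0}^{2t}Z_n(s)$.

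One small wrinkle: by the paper's definition, $\Ball(i,t)$ contains every function node $a$ with $\partial a\subseteq\{j:d(i,j)\le t\}$, which includes function nodes \emph{all} of whose variable neighbours sit at distance exactly $t$. Your breadth-first exploration, stopping at depth $t$, never pops a variable node at distance $t$ and hence never discovers such an $a$, so the injection into $\widehat\Tree$ misses these nodes. The fix is trivial---extend $\widehat\Tree$ by one more half-layer (give each depth-$t$ variable node its $\mathrm{Binom}(\alpha n,\gamma/n)$ function-node children, but stop there)---and the same recursion/bound goes through with one extra factor; the paper sidesteps this by summing its single-type process up to $s=2t$ rather than $2t-1$.
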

\begin{proof}
Let us generalize our definition of neighborhood as follows.
If $t$ is integer, we let $\Ball(i,t+1/2)$ be the subgraph including
$\Ball(i,t)$ together with all the function nodes that have at 
least one neighbor in $\Ball(i,t)$ (as well as the edges to $\Ball(i,t)$).
We also let $\dBall(i,t+1/2)$ be the set of function nodes
that have at least one neighbor in $\Ball(i,t)$ and at least one outside.

Imagine to explore $\Ball(i,t)$ in breadth-first fashion.
For each $t$, $|\Ball(i,t+1/2)|-|\Ball(i,t)|$ is upper bounded by the sum of 
$|\dBall(i,t)|$ iid binomial random variables 
counting the number of neighbors of each node in $\dBall(i,t)$,
which are not in $\Ball(i,t)$.
For $t$ integer (respectively, half-integer), each such variables 
is stochastically dominated by a binomial with parameters $n\alpha$ 
(respectively, $n$) and $\gamma/n$. 
Therefore $|\Ball(i,t)|$ is stochastically
dominated by $\sum_{s=0}^{2t}Z_n(s)$, where $\{Z_n(t)\}$
is a Galton-Watson process with offspring distribution 
Binom$(n,\ogamma/n)$ and $\ogamma = \gamma\, \max(1,\alpha)$. 

By Markov inequality
\begin{eqnarray*}
\prob\{|\Ball(i,t)|\ge M\}\le g^n_{2t}(\lambda)
\, \lambda^{-M}\, ,\;\;\;\;\;\;\;\;
g^n_{t}(\lambda)\equiv\E\{\lambda^{\sum_{s=0}^t Z_n(s)}\}.
\end{eqnarray*}
By elementary branching processes theory 
$g^n_{t}(\lambda)$
satisfies the recursion $g^n_{t+1}(\lambda) = \lambda\xi_n(g^n_t(\lambda))$,
$g^n_0(\lambda) = \lambda$, with $\xi_n(\lambda) = \lambda(1+
2\ogamma(\lambda-1)/n)^n$. The thesis follows by 
$g^n_t(\lambda)\le  g_t(\lambda)$, where $g_t(\lambda)$ is defined as 
$g_t^n(\lambda)$ but replacing
$\xi_n(\lambda)$ with $\xi(\lambda) = e^{2\gamma(\lambda-1)}\ge 
\xi_n(\lambda)$.
\end{proof}
\begin{propo}\label{prop:poisson_independance}
Let $G=(V,F,E)$ be a random bipartite graph 
from the ensemble $\graph(n,m,p)$. Then, conditional on 
$\Ball(i,t) = (V(i,t),F(i,t),E(i,t))$,
$\cBall(i,t)$ is a random bipartite graph 
on variable nodes $V\setminus V(i,t-1)$,  function nodes
$F\setminus F(i,t)$ and same edge probability $p$.
\end{propo}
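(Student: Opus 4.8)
The plan is to realise $G$ through the standard breadth-first edge-exploration --- the ``principle of deferred decisions'' --- and to check that the part of the exploration needed to reconstruct $\Ball(i,t)$ never inspects an edge lying in $(V\setminus V(i,t-1))\times(F\setminus F(i,t))$. Once this is done, conditioning on the inspected edges leaves the uninspected ones i.i.d.\ Bernoulli$(p)$, which is exactly the claim.

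First I would set up the exploration. Starting from the root $i$, query the indicators $\ind\{(i,b)\in E\}$ for all $b\in F$, which reveals $\di$; then alternate two kinds of moves: \emph{processing} a discovered function node $a$ by querying every not-yet-inspected indicator $\ind\{(j,a)\in E\}$, $j\in V$, and \emph{processing} a discovered variable node $j$ by querying every not-yet-inspected indicator $\ind\{(j,b)\in E\}$, $b\in F$. Running this generation by generation --- variables at distance $0$, then their function nodes, then variables at distance $1$, and so on --- I would stop right after processing the function nodes at distance $t-1/2$, and in particular \emph{not} process the variable nodes at distance exactly $t$. A short induction on the generation (essentially the exploration already used in the proofs of Propositions~\ref{prop:tree_convergence} and~\ref{prop:BoundNeighborhood}) identifies the variable nodes discovered through generation $s$ with those at distance $\le s$, and the function nodes discovered with those having a neighbour at distance $\le s-1$; at the stopping time the discovered variable nodes are $V(i,t)$, the discovered function nodes are $F(i,t)$, and the inspected indicators reconstruct the labelled subgraph $\Ball(i,t)$.

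Next I would read off which indicators have been inspected. Since processing a node queries \emph{all} of its incident indicators, the pair $(j,a)$ is inspected if and only if $j$ or $a$ has been processed, i.e.\ if and only if $j\in V(i,t-1)$ or $a\in F(i,t)$; hence the uninspected pairs form precisely $(V\setminus V(i,t-1))\times(F\setminus F(i,t))$, a set which, together with its two side lengths, is measurable with respect to the exploration (indeed with respect to $\Ball(i,t)$). Because the edge indicators $\{\ind\{(j,a)\in E\}\}_{(j,a)\in V\times F}$ are i.i.d.\ Bernoulli$(p)$ and the exploration chooses the next pair to query as a deterministic function of the previously observed outcomes, the deferred-decisions principle gives that, conditionally on the $\sigma$-field generated by the inspected indicators, the uninspected ones remain i.i.d.\ Bernoulli$(p)$. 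Since that $\sigma$-field contains $\Ball(i,t)$ and since the law of a $\graph(n',m',p)$ depends only on $n'$ and $m'$ (which are determined by $\Ball(i,t)$), conditioning down to $\sigma(\Ball(i,t))$ changes nothing: conditionally on $\Ball(i,t)$ the graph $\cBall(i,t)$ carried by the uninspected edges is distributed as $\graph(|V\setminus V(i,t-1)|,|F\setminus F(i,t)|,p)$.

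The step I expect to be the main obstacle is the bookkeeping of the second paragraph: stopping the exploration at exactly the right moment. It must be run far enough to expose all of $V(i,t)$ and the full edge set of $\Ball(i,t)$, which forces processing the function nodes at distance $t-1/2$; but it must \emph{not} process the distance-$t$ variable nodes, since those have edges to function nodes of $F\setminus F(i,t)$ (a function node of $F\setminus F(i,t)$ may well have a neighbour at distance $t$), and querying such edges would destroy the independence. Correspondingly, ``$F(i,t)$'' here has to be read as the function-node set actually produced by the exploration --- the function nodes with at least one neighbour at distance $\le t-1$, equivalently the function nodes that occur in the tree $\Tree(t)$ of Proposition~\ref{prop:tree_convergence} --- and one must check that its complement $F\setminus F(i,t)$ coincides with the function-node set declared for $\cBall(i,t)$ in the statement.
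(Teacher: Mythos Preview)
Your proposal is correct and follows essentially the same route as the paper: the paper's proof simply observes that the event $\{\Ball(i,t)=B_0\}$ is determined by the edge indicators inside the rectangle $V(i,t)\times F(i,t)$ (together with those forcing nodes outside the ball to stay outside), that the edges of $\cBall(i,t)$ lie in the disjoint rectangle $(V\setminus V(i,t-1))\times(F\setminus F(i,t))$, and concludes by independence of the i.i.d.\ Bernoulli$(p)$ indicators --- your breadth-first exploration with deferred decisions is the same argument made explicit. Your attention to the exact stopping time and to the intended reading of $F(i,t)$ is well placed (the paper is terse on both points), but the underlying mechanism is identical.
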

\begin{proof}
Condition on $\Ball(i,t) =  (V(i,t),F(i,t),E(i,t))$, and let 
$\Ball(r,t-1) =  (V(i,t-1),F(i,t-1),E(i,t-1))$
(notice that this is uniquely determined from $\Ball(i,t)$).
This is equivalent to conditioning on a given edge realization for
any two vertices $k$, $a$ such that $k\in V(i,t)$ and $a\in F(i,t)$.

On the other hand, $\cBall(i,t)$ is the graph with variable nodes set
 $\overline{V}\equiv V\setminus V(i,t-1)$,  function nodes
$\overline{F}\equiv F\setminus F(i,t)$, and edge set $(k,a)\in G$ such that
$k\in \overline{V}$, $a\in\overline{F}$.
Since this set of vertices couples is disjoint
from the one we are conditioning upon, and by independence of
edges in $G$,  the claim follows.
\end{proof}
%
%
%
\section{Proof of Theorem \ref{thm:ApproximateMarginal} (BP equations)}
\label{sec:ApproximateMarginalProof}

The proof of Theorem \ref{thm:ApproximateMarginal} hinges on the properties of
the random factor graph $G$ discussed in the previous Section
as well as on the correlation structure unveiled
by Theorem \ref{thm:Correlation}.
%
%
\subsection{The effect of changing $G$}

The first need to estimate the effect on changing the graph $G$ on 
marginals.
\begin{lemma}\label{lemma:PhysDegr}
Let $X$ be a random variable taking values in $\cX$ and assume
$X\to G\to Y_1\to B$ and $X\to G\to Y_1\to B$ to be Markov chains
(here $G$, $Y_{1,2}$ and $B$ are arbitrary random variables,
where $G$ stands for \emph{good} and $B$ for \emph{bad}).
Then
\begin{eqnarray}
\E\big|\big|\prob\{X\in\,\cdot\, | Y_1\}-\prob\{X\in\,\cdot\, | Y_2\}
\big|\big|_{\sTV}\le 2\, \E\big|\big|\prob\{X\in\,\cdot\, | G\}-
\prob\{X\in\,\cdot\, | B\}
\big|\big|_{\sTV}\, .
\end{eqnarray}
\end{lemma}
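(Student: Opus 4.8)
The plan is to realize all four posterior distributions as random variables on one common probability space and to exploit that $Y_1$ and $Y_2$ each lie \emph{between} $G$ and $B$ in the information ordering (I read the second hypothesis as the chain $X\to G\to Y_2\to B$). For $\bullet\in\{G,Y_1,Y_2,B\}$ and $x\in\cX$, set $M_\bullet(x)\equiv\prob\{X=x\,|\,\bullet\}$; these are $[0,1]$-valued random variables. From the definition of total variation distance,
\[
\E\,\big|\big|\prob\{X\in\cdot|Y_1\}-\prob\{X\in\cdot|Y_2\}\big|\big|_{\sTV}=\frac{1}{2}\sum_{x\in\cX}\E\,\big|M_{Y_1}(x)-M_{Y_2}(x)\big|\, ,
\]
and likewise $2\,\E\,||\prob\{X\in\cdot|G\}-\prob\{X\in\cdot|B\}||_{\sTV}=\sum_{x\in\cX}\E\,|M_G(x)-M_B(x)|$. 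Hence it suffices to show, for each fixed $x\in\cX$, that $\E\,|M_{Y_1}(x)-M_{Y_2}(x)|\le 2\,\E\,|M_G(x)-M_B(x)|$.

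I would prove this by pivoting through $M_B(x)$. The triangle inequality gives $\E\,|M_{Y_1}(x)-M_{Y_2}(x)|\le\E\,|M_{Y_1}(x)-M_B(x)|+\E\,|M_{Y_2}(x)-M_B(x)|$, so the claim reduces to the single inequality
\[
\E\,\big|M_{Y_j}(x)-M_B(x)\big|\ \le\ \E\,\big|M_G(x)-M_B(x)\big|\, ,\qquad j\in\{1,2\}\, ,
\]
applied once with the chain $X\to G\to Y_1\to B$ and once with $X\to G\to Y_2\to B$. The mechanism is that $M_{Y_j}(x)$ is obtained from $M_G(x)$ by a conditional expectation that loses none of the information carried by $B$, so it cannot be farther from $M_B(x)$ in $L^1$ than $M_G(x)$ is. Concretely, the Markov structure $X\to G\to Y_j\to B$ supplies the conditional independences $X\perp(Y_j,B)\,|\,G$ and $X\perp B\,|\,Y_j$; chaining them,
\[
\E\big[M_G(x)\,\big|\,Y_j,B\big]=\prob\{X=x\,|\,Y_j,B\}=\prob\{X=x\,|\,Y_j\}=M_{Y_j}(x)\, ,
\]
whereas $M_B(x)$ is $\sigma(B)$-measurable and hence fixed by $\E[\,\cdot\,|Y_j,B]$. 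Thus $M_{Y_j}(x)-M_B(x)=\E[\,M_G(x)-M_B(x)\,|\,Y_j,B\,]$; applying the conditional Jensen inequality to $t\mapsto|t|$ and then taking expectations yields the displayed estimate. Combining the two instances with the triangle inequality through $M_B(x)$, summing over $x\in\cX$ and inserting the factor $\frac12$ reproduces exactly the constant $2$ in the statement.

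The step I expect to need the most care is the identity $\E[M_G(x)\,|\,Y_j,B]=M_{Y_j}(x)$: one must condition on the \emph{pair} $(Y_j,B)$ and not on $Y_j$ alone, because $\sigma(B)$ is in general not contained in $\sigma(Y_j)$ (the transition $Y_j\to B$ is a genuine noisy kernel), and only after bringing $B$ into the conditioning can the two conditional-independence relations be used to collapse it. The remaining ingredients — existence of the regular conditional probabilities, which is automatic since $X$ is finite-valued, and the exchange of the finite sum over $\cX$ with the expectation — are routine.
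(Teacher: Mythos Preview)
Your proof is correct and follows essentially the same route as the paper: pivot through $B$ via the triangle inequality, then show $\E\,||\prob\{X\in\cdot|Y_j\}-\prob\{X\in\cdot|B\}||_{\sTV}\le\E\,||\prob\{X\in\cdot|G\}-\prob\{X\in\cdot|B\}||_{\sTV}$ by a Jensen/convexity argument. The only cosmetic differences are that the paper works directly with the total variation norm rather than coordinate-by-coordinate, and writes the inner conditioning as $\E[\,\cdot\,|Y_j]$ rather than your $\E[\,\cdot\,|Y_j,B]$ (these agree for the random variable in question because $G\perp B\,|\,Y_j$); your version is arguably the more careful of the two on exactly the point you flagged.
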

\begin{proof}
First consider a single Markov Chain $X\to G\to Y\to B$.
Then, by convexity of the total variation distance,
\begin{eqnarray}
\E\big|\big|\prob\{X\in\,\cdot\, | Y\}-\prob\{X\in\,\cdot\, | B\}
\big|\big|_{\sTV} & =&  \E\left|\left|
\E\left\{\prob\{X\in\,\cdot\, | G,Y\}\Big| Y\right\}-\prob\{X\in\,\cdot\, | B\}
\right|\right|_{\sTV}\le\\
&\le&   \E\left|\left|
\prob\{X\in\,\cdot\, | G,Y\}-\prob\{X\in\,\cdot\, | B\}
\right|\right|_{\sTV} = \\
&=&   \E\left|\left|
\prob\{X\in\,\cdot\, | G\}-\prob\{X\in\,\cdot\, | B\}
\right|\right|_{\sTV}\, .
\end{eqnarray}
The thesis is proved by applying this bound to both
chains $X\to G\to Y_1\to B$ and $X\to G\to Y_1\to B$,
and using triangular inequality.
\end{proof}

The next lemma estimates the effect of removing one variable node from the
graph. Notice that the graph $G$ is non-random.
\begin{lemma}\label{lemma:RemoveV}
Consider two observation systems associated to graphs
$G = (V,F,E)$ and $G' = (V',F',E')$ whereby $V=V'\setminus \{j\}$,
$F=F'$ and $E=E'\setminus \{(j,b):\, b\in\partial j\}$. 
Denote the corresponding observations as $(Y,Z(\theta))$ and
$(Y',Z'(\theta))$.
Then there exist a coupling of the observations such that, for any
$i\in V$:
\begin{align}
\E ||\prob\{X_i\in\,\cdot\, |Y,Z(\theta)\}-
\prob\{X_i\in\,\cdot\, |Y',Z'(\theta)\}
||_{\sTV}\le&\label{eq:RemoveV}\\ 
4\,\E\Big|\Big| 
\prob_{i,\partial^2j}
\{\,\cdots\, &| Y_{F\setminus \dj},Z(\theta)\}-\prod_{l\in\{i,\partial^2j\}}
\prob_{l}\{\, \cdot\,| Y_{F\setminus \dj},Z(\theta)\}
|\Big|\Big|_{\sTV}\, ,\nonumber
\end{align}
where $\partial^2j \equiv\{l\in V : d(i,l)=1\}$ and used the shorthand 
$\prob_U\{\cdots|Y_{F\setminus \dj},Z(\theta)\}$ for
$\prob\{X_U\in \cdots|Y_{F\setminus \dj},Z(\theta)\}$.

The coupling consists in sampling $X= \{X_i:\, i\in V\}$
from its (iid) distribution and then $(Y,Z(\theta))$ and $(Y',Z'(\theta))$
as observations of this configuration $X$, in such a way that
$Z(\theta)=Z'(\theta)$ and $Y_a=Y'_a$ for any $a\in F$ such that 
$\da\in V$.
\end{lemma}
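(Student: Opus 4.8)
The plan is to use the data-processing / physical-degradation principle of Lemma \ref{lemma:PhysDegr} to reduce the comparison of $\prob\{X_i\in\cdot\,|Y,Z(\theta)\}$ and $\prob\{X_i\in\cdot\,|Y',Z'(\theta)\}$ to a statement about how much extra information the observations attached to $j$ carry about $X_i$, and then to bound that extra information by the correlations between $X_i$ and the second-neighborhood $\partial^2 j$ — which is exactly the quantity controlled by Theorem \ref{thm:Correlation}. First I would set up the coupling described in the statement: sample $X=X_V$ once, generate $Z(\theta)=Z'(\theta)$ and the shared observations $Y_a=Y'_a$ for $a$ with $\da\subseteq V$, and generate the remaining observations (those incident on $j$, i.e.\ $a\in\dj$, together with $Z_j(\theta)$) only on the larger system $G'$. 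Under this coupling the difference between the two posteriors of $X_i$ comes entirely from conditioning, or not, on the block of observations $W\equiv (Y_{\dj},Z_j(\theta))$ that touch $j$.

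Next I would identify the good/bad pair for Lemma \ref{lemma:PhysDegr}. Write $Y_{F\setminus\dj}$, $Z(\theta)$ for the observations common to both systems; set $Y_1 = (Y_{F\setminus\dj},Z(\theta))$ (the $G$-posterior data) and $Y_2 = (Y,Z'(\theta))$ the full $G'$-data, which is $Y_1$ plus $W$. The key structural observation is that, conditionally on $X_{\partial^2 j\cup\{j\}}$ — equivalently, once we know the values of the variables that the $\dj$-observations actually depend on — the block $W$ is independent of everything else, in particular of $X_i$. Hence $X_i \to (X_{i},X_{\partial^2 j},X_j) \to W$ is a Markov chain given the common data, and the divergence between the two conditional laws of $X_i$ is governed by the joint conditional law of $(X_i,X_{\partial^2 j})$ versus the product of its marginals. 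I would then invoke Lemma \ref{lemma:PhysDegr} with $G$ (in the sense of that lemma) being $X_{\{i\}\cup\partial^2 j}$ and $B$ being $W$-degraded data, picking up the factor $2$, and bound the resulting total-variation distance between $\prob\{X_{\{i\}\cup\partial^2 j}\in\cdot\,|Y_{F\setminus\dj},Z(\theta)\}$ and $\prod_l \prob\{X_l\in\cdot\,|Y_{F\setminus\dj},Z(\theta)\}$ — a further factor $2$ is lost when passing from the full joint over the block to just the factorization defect — giving the factor $4$ in \eqref{eq:RemoveV}.

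The step I expect to be the main obstacle is the Markov-chain bookkeeping in the middle: making precise that, after conditioning on the common data $(Y_{F\setminus\dj},Z(\theta))$, the only channel through which $W$ can influence the posterior of $X_i$ is via the joint posterior of the finite block $X_{\{i\}\cup\partial^2 j}$, and that if that block posterior factorizes then $W$ is (conditionally) independent of $X_i$ so the two posteriors coincide. Concretely one writes $\prob\{X_i\in\cdot\,|Y,Z'(\theta)\} = \sum_{x_{\partial^2 j}} \prob\{X_i\in\cdot\,|x_{\partial^2 j},Y_{F\setminus\dj},Z(\theta)\}\,\prob\{x_{\partial^2 j}|Y,Z'(\theta)\}$ (using that, given $X_{\partial^2 j}$ and the common data, $W$ is independent of $X_i$), subtracts the analogous expression with $\prob\{x_{\partial^2 j}|Y_{F\setminus\dj},Z(\theta)\}$ in place of the reweighted marginal, and controls the difference of the two weightings of $X_{\partial^2 j}$ in terms of the factorization defect by a triangle-inequality argument — here the clean way is again to route through Lemma \ref{lemma:PhysDegr}, which already packages the convexity estimate, rather than redo it by hand. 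Everything else (the coupling, the finiteness of $\partial^2 j$, the permutation-invariance used later to combine with Theorem \ref{thm:Correlation}) is routine.
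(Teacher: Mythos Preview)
Your approach is essentially the same as the paper's: set up the coupling, apply Lemma~\ref{lemma:PhysDegr} with the ``bad'' data $B=(Y_{F\setminus\dj},Z(\theta))$ and the ``good'' data built from $X_{\partial^2 j}$, then convert the resulting conditional-versus-unconditional distance on $X_i$ into the factorization defect of $X_{\{i\}\cup\partial^2 j}$ via the identity $\E\|\prob\{U\in\cdot\}-\prob\{U\in\cdot\,|W\}\|_{\sTV}=\|\prob_{U,W}-\prob_U\prob_W\|_{\sTV}$ and a triangle inequality, picking up the second factor~$2$.

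There is, however, one slip in your reading of the setup that you should fix. You identify the $G$-system data with $(Y_{F\setminus\dj},Z(\theta))$ and say the observations $Y_a$ for $a\in\dj$ exist ``only on the larger system $G'$.'' That is not what the lemma says: $F=F'$, so every function node $a\in\dj$ is present in \emph{both} systems; the difference is only that in $G$ the edge $(j,a)$ is removed, so $Y_a$ is sampled from $Q^{(|\da|-1)}(\cdot\,|x_{\da\setminus j})$ rather than $Q^{(|\da|)}(\cdot\,|x_{\da})$. Consequently your $Y_1$ is not the $G$-data, and as written you would be bounding the wrong quantity. The paper handles this by taking the ``good'' variable in Lemma~\ref{lemma:PhysDegr} to be $(X_{\partial^2 j},Y,Z(\theta))$ --- i.e.\ including the common data in $G$ rather than working ``conditionally on the common data'' --- and then verifying the nontrivial Markov step $(X_{\partial^2 j},Y,Z(\theta))\to(Y',Z'(\theta))$: once $X_{\partial^2 j}$ is known, one can discard $Y_{\dj}$, resample $X'_j$, regenerate $Y'_{\dj}$ and $Z'_j(\theta)$, and discard $X'_j$ and $X_{\partial^2 j}$, all without touching $X_i$. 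With that adjustment your plan goes through verbatim and matches the paper's proof.
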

\begin{proof}
Let us describe the coupling more explicitly. 
First  sample $X = \{X_i:\, i\in V\}$ and
$X'=\{X'_i:\, i\in V'\}$ in such a way that $X_i=X'_i$ for
any $i\in V$. Then, for any $i\in V$, sample $Z_i(\theta)$, $Z_i'(\theta)$
conditionally on $X_i=X'_i$ in such a way that $Z_i(\theta)=Z_i'(\theta)$.
Sample $Z'_j(\theta)$ conditionally on $X_j'$.
For any $a\in F$ such that $\da\in V$,
sample $Y_a$, $Y_a'$
conditionally on $X_{\da}=X'_{\da}$ in such a way that 
$Y_a=Y'_a$. Finally for $a\in\dj$, sample 
$Y_a$, $Y_a'$ independently, 
conditional on $X_{\da}\neq X'_{\da}$

Notice that the following are Markov Chains
\begin{eqnarray}
X_i\to (X_{\partial^2 j},Y,Z(\theta))\;\;\to (Y,Z(\theta))\to
(Y_{F\setminus\dj},Z(\theta))\, ,\\
X_i\to  (X_{\partial^2 j},Y,Z(\theta))\to (Y',Z'(\theta))\to
(Y_{F\setminus\dj},Z(\theta))\, .
\end{eqnarray} 
The only non-trivial step is  
$(X_{\partial^2 j},Y,Z(\theta))\to (Y',Z'(\theta))$.
Notice that, once $X_{\partial^2 j}$ is known, $Y_{\dj}$ is conditionally
independent from the other random variables.
Therefore we can produce $(Y',Z'(\theta))$ first scratching $Y_{\dj}$,
then sampling $X_j'$ independently, next sampling $Y_{\dj}'$ and 
$Z'_j(\theta)$ and finally scratching both $X'_j$ and $X_{\partial^2 j}$.

Applying Lemma \ref{lemma:PhysDegr} to the chains above, we get
\begin{align*}
\E \big|\big|\prob\{X_i\in\,\cdot\, |Y,Z(\theta)\}-
\prob\{X_i\in\,\cdot\, |Y',Z'(\theta)\}
\big|\big|_{\sTV}\le& \\
\le 2\, \E 
\big|\big|\prob\{X_i\in\,\cdot\,|Y_{F\setminus\dj},Z(\theta)\}&
-\prob\{X_i\in\,\cdot\,|X_{\partial^2 j},Y,Z(\theta)\}\big|\big|_{\sTV} =\\
= 2\, \E 
\big|\big|\prob\{X_i\in\,\cdot\,|Y_{F\setminus\dj},Z(\theta)\}&
-\prob\{X_i\in\,\cdot\,|X_{\partial^2 j},Y_{F\setminus\dj},Z(\theta)\}\big|\big|_{\sTV}\, ,
\end{align*}
where in the last step, we used the fact that $Y_{\dj}$ is conditionally 
independent of $X_i$, given  $X_{\partial^2 j}$.
The thesis is proved using the identity (valid for any two random variables
$U,W$)
\begin{eqnarray}
\E||\prob\{U\in\,\cdot\,\}-\prob\{U\in\,\cdot\,|W\}||_{\sTV}  = 
||\prob\{(U,W)\in\,\cdots\,\}-\prob\{U\in\,\cdot\,\}\prob\{W\in\,\cdot\,
\}||_{\sTV}\, ,
\end{eqnarray}
and the bound (that follows from triangular inequality)
\begin{align*}
||\prob\{(U,W_1\dots W_k)\in\,\cdots\,\}-\prob\{U\in\,\cdots\,\}\prob\{(W_1\dots W_k)\in\,\cdots\,\}||_{\sTV}\le&\\
\le 2\,
||\prob\{(U,W_1\dots W_k)\in\,\cdots\,\}-\prob\{U\in\,\cdots\,\}
\prob\{W_1\in\,\cdot\,\}\cdots&\prob\{W_k\in\,\cdot\,\} ||_{\sTV}\, .
\end{align*}
\end{proof}

An analogous Lemma estimates the effect of removing a function node.
\begin{lemma}\label{lemma:RemoveF}
Consider two observation systems associated to graphs
$G = (V,F,E)$ and $G' = (V',F',E')$ whereby $V=V'$,
$F=F'\setminus \{a\}$ and $E=E'\setminus \{(j,a):\, j\in\partial a\}$. 
Denote the corresponding observations as $(Y,Z(\theta))$ and
$(Y',Z'(\theta))$, with $Z(\theta)=Z'(\theta)$ and $Y=Y'\setminus\{a\}$.
Then, for any $i\in V$:
\begin{align}
\E ||\prob\{X_i\in\,\cdot\, |Y,Z(\theta)\}-
\prob\{X_i\in\,\cdot\, |Y',Z'(\theta)\}
||_{\sTV}\le&\label{eq:RemoveF}\\ 
4\,\E\Big|\Big| 
\prob_{i,\partial a}
\{\,\cdots\, &| Y_{F\setminus \da},Z(\theta)\}-\prod_{l\in\{i,\da\}}
\prob_{l}\{\, \cdot\,| Y_{F\setminus \da},Z(\theta)\}
|\Big|\Big|_{\sTV}\, .\nonumber
\end{align}
where we used the shorthand 
$\prob_U\{\cdots|Y_{F\setminus a},Z(\theta)\}$ for
$\prob\{X_U\in \cdots|Y_{F\setminus a},Z(\theta)\}$.
\end{lemma}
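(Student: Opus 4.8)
The plan is to follow the proof of Lemma \ref{lemma:RemoveV} almost verbatim, the present situation being in fact slightly easier. Work with the coupling in which one first samples $X=\{X_i:\,i\in V\}$ from its iid law, then samples $(Y,Z(\theta))$ as the observations on $G$, and finally samples the extra observation $Y'_a$ conditionally on $X_{\da}$ through the kernel $Q$, setting $Z'(\theta)=Z(\theta)$ and $Y'_b=Y_b$ for every $b\in F$. Under this coupling $(Y',Z'(\theta))$ \emph{refines} $(Y,Z(\theta))$: the two carry the same information apart from $Y'_a$, which by the model assumptions is conditionally independent of $X_i$ and of every other observation given $X_{\da}$. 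Consequently both $X_i\to(X_{\da},Y,Z(\theta))\to(Y',Z'(\theta))\to(Y,Z(\theta))$ and $X_i\to(X_{\da},Y,Z(\theta))\to(Y,Z(\theta))\to(Y,Z(\theta))$ are Markov chains.

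First I would apply Lemma \ref{lemma:PhysDegr} to these two chains, with $(X_{\da},Y,Z(\theta))$ in the role of the ``good'' variable, $(Y',Z'(\theta))$ and $(Y,Z(\theta))$ the two intermediate variables, and $(Y,Z(\theta))$ (taken to be both $Y_2$ and $B$) the ``bad'' one. This bounds the left-hand side of (\ref{eq:RemoveF}) by $2\,\E||\prob\{X_i\in\,\cdot\,|X_{\da},Y,Z(\theta)\}-\prob\{X_i\in\,\cdot\,|Y,Z(\theta)\}||_{\sTV}$. Note that, since $a\notin F$, one has $Y_{F\setminus a}=Y$, so the step analogous to ``erasing $Y_{\dj}$'' in Lemma \ref{lemma:RemoveV} is vacuous here, and no further conditioning is lost.

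Next I would turn this quantity into a joint-versus-product total variation distance. Conditioning on $(Y,Z(\theta))$ and invoking the identity $\E||\prob\{U\in\,\cdot\,\}-\prob\{U\in\,\cdot\,|W\}||_{\sTV}=||\prob\{(U,W)\in\,\cdot\,\}-\prob\{U\in\,\cdot\,\}\prob\{W\in\,\cdot\,\}||_{\sTV}$ with $U=X_i$ and $W=X_{\da}$, the bound becomes $2\,\E||\prob_{i,\da}\{\,\cdots\,|Y,Z(\theta)\}-\prob_i\{\,\cdot\,|Y,Z(\theta)\}\,\prob_{\da}\{\,\cdot\,|Y,Z(\theta)\}||_{\sTV}$. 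Finally, the triangular-inequality bound already used in the proof of Lemma \ref{lemma:RemoveV} (now with $W_1,\dots,W_k$ the variables $X_l$, $l\in\da$) replaces the joint law $\prob_{\da}$ by the product $\prod_{l\in\da}\prob_l$ at the cost of one more factor $2$, yielding exactly the constant $4$ and the factorised right-hand side of (\ref{eq:RemoveF}).

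I do not expect any genuine obstacle in this argument; the only points that need real care are checking that the two chains above satisfy the Markov property demanded by Lemma \ref{lemma:PhysDegr} — which reduces to the modelling assumption that $Y'_a$, conditionally on $X_{\da}$, is independent of $X_i$ and of all remaining observations — keeping track of which conditional law each data-processing contraction is applied to, and the harmless degenerate case $i\in\da$, in which $\{i\}\cup\da=\da$ and the same manipulations go through unchanged, since both elementary identities above hold for arbitrary, possibly overlapping, families of random variables.
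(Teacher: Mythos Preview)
Your proposal is correct and follows essentially the same route as the paper: the paper's proof is only a two-line sketch pointing to the Markov chain $X_i\to(X_{\da},Y,Z(\theta))\to(Y,Z(\theta))\to(Y_{F\setminus a},Z(\theta))$ and invoking the argument of Lemma~\ref{lemma:RemoveV}, which is exactly what you have unpacked. Your observation that $a\notin F$ makes the conditioning on the right-hand side simply $(Y,Z(\theta))$, and your handling of the degenerate chain with $Y_2=B=(Y,Z(\theta))$, are both fine and in fact tidy up some notational looseness in the paper.
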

\begin{proof}
The proof is completely analogous (and indeed easier)
to the one of Lemma \ref{lemma:RemoveV}.
It is sufficient to consider the Markov chain
$X_i\to (X_{\da},Y,Z(\theta))\to (Y,Z(\theta))\to (Y_{F\setminus a},
Z(\theta))$, and bound the total variation distance considered here
in terms of the first and last term in the chain, where we notice that
$(Y_{F\setminus a},Z(\theta)) = (Y',Z'(\theta))$.
We omit details to avoid redundancies.
\end{proof}

Next, we study the effect of removing a variable node from a random 
bipartite graph.
\begin{lemma}\label{lemma:AvRemoveV}
Let $G=(V,F,E)$ and $G'=(V',F',E')$ be two random graphs
from, respectively, the $\graph(n-1,\alpha n,\gamma/n)$
and $\graph(n,\alpha n,\gamma/n)$ ensembles. 
Consider two information systems on such graphs. Let
$(Y,Z(\theta))$ and $(Y',Z'(\theta))$ be the corresponding 
observations, and $\mu^{\theta}_i$, $\mu^{\theta}_i{}'$
the conditional distributions of $X_i$ in the two systems.

It is then possible to couple $G$ to $G'$ and, for each $\theta$
$(Y,Z(\theta))$ to $(Y',Z'(\theta))$ and choose 
a constant ${\sf C} ={\sf C}(|\cX|,\alpha,\gamma)$
(bounded uniformly for $\gamma$ and $1/\alpha$ bounded), such that,
for any $\eps>0$ and any $i\in V\cap V'$,
\begin{eqnarray}
\int_{0}^\eps\E_{G}\E ||\mu^{\theta}_i-\mu^{\theta}_i{}'||_{\sTV} \le 
\frac{{\sf C}}{\sqrt{n}}\, .
\end{eqnarray}
Further, such a coupling can be produced by letting 
$V' = V\cup \{n\}$, $F'=F$ and $E'=E\cup \{(n,a):a\in\partial n\}$
where $a\in\dn$ independently with probability $\gamma/n$. Finally
$(Y,Z(\theta))$ and $(Y',Z'(\theta))$ are coupled as in Lemma
\ref{lemma:RemoveV}. 
\end{lemma}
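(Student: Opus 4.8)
The plan is to deduce the averaged bound from Theorem~\ref{thm:Correlation} (in the form of Corollary~\ref{coro:Correlation}), applied to the residual graph obtained after deleting the added variable node $n$ and the function nodes incident to it. With the coupling prescribed in the statement, $G$ is obtained from $G'$ precisely by removing the variable node $j=n$ together with the edges $\{(n,b):\,b\in\partial n\}$, and the observations are coupled exactly as in Lemma~\ref{lemma:RemoveV}. So that lemma applies verbatim with $j=n$ and gives, for every fixed $\theta$,
\[
\E\,||\mu^{\theta}_i-\mu^{\theta}_i{}'||_{\sTV}\;\le\;4\,\E\Big|\Big|\prob_{i,\partial^2 n}\{\,\cdots\,|\,Y_{F\setminus\partial n},Z(\theta)\}-\prod_{l\in\{i,\partial^2 n\}}\prob_{l}\{\,\cdot\,|\,Y_{F\setminus\partial n},Z(\theta)\}\Big|\Big|_{\sTV}\,,
\]
where $\partial^2 n$ denotes the set of variable nodes sharing a function node with $n$. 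The right-hand side is $4$ times the total-variation defect between the joint conditional law of the variables in $S\equiv\{i\}\cup\partial^2 n$ and the product of its one-variable marginals, evaluated in the observation system carried by $\widehat G$, the graph $G$ with the function nodes of $\partial n$ (and their observations) removed.

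I would then freeze the random set $S$ by conditioning. Generate the coupled pair by first drawing the edges out of $n$ (this fixes $A\equiv\partial n$) and then the edges of $G$. Conditioning on $A$ together with the neighbourhoods $\{\partial_G b:\,b\in A\}$ turns $S$ into a deterministic subset of $V$; and since the edges of $\widehat G$ lie in $[n-1]\times(F\setminus A)$ and are thus disjoint from the edges just conditioned on, the argument behind Proposition~\ref{prop:poisson_independance} shows that, conditionally, $\widehat G\sim\graph(n-1,\alpha n-|A|,\gamma/n)$ --- a distribution invariant under permutations of the variable nodes. Corollary~\ref{coro:Correlation}, applied to $\widehat G$ with this fixed index set $S$, then bounds the integral over $\theta\in[0,\eps]$ of the defect above, on the event $\{|S|<(n-1)/2\}$, by $(|\cX|+1)^{|S|}A_{n-1,|S|}\sqrt{H(X_1)\eps/(n-1)}$ with $A_{n-1,|S|}\le\exp\!\big(|S|^2/2(n-1)\big)$; on the complementary event it is trivially at most $\eps$.

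It remains to average over the local structure near $n$, i.e.\ to control $\E\big[(|\cX|+1)^{|S|}A_{n-1,|S|}\big]$. Since $|S|\le 1+|\partial^2 n|\le|\Ball(n,1)|$, the radius-one neighbourhood of $n$ in $G'$, Proposition~\ref{prop:BoundNeighborhood} gives $\prob\{|S|\ge M\}\le C(\lambda,1)\lambda^{-M}$ for every $\lambda>0$. On $\{|S|<(n-1)/2\}$ we have $|S|^2/2(n-1)\le|S|/4$, hence $A_{n-1,|S|}\le e^{|S|/4}$ there, and choosing $\lambda>(|\cX|+1)e^{1/4}$ makes $\E\big[(|\cX|+1)^{|S|}e^{|S|/4}\big]$ a finite constant depending only on $|\cX|,\alpha,\gamma$ (bounded whenever $\gamma$ and $1/\alpha$ are bounded, through the dependence of $C(\lambda,1)$ on $\ogamma=\gamma\max(1,\alpha)$). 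On the complementary event the contribution is at most $4\eps\,\prob\{|S|\ge(n-1)/2\}=O(e^{-cn})$. Using $H(X_1)\le\log|\cX|$ and $\eps\le1$ to absorb the remaining bounded factors, one obtains $\int_0^{\eps}\E_G\E||\mu^{\theta}_i-\mu^{\theta}_i{}'||_{\sTV}\,\de\theta\le{\sf C}/\sqrt n$ with ${\sf C}={\sf C}(|\cX|,\alpha,\gamma)$ of the stated form.

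The main obstacle is the conditioning step: one has to reveal exactly enough local information so that the set $S=\{i\}\cup\partial^2 n$ becomes deterministic while the residual graph $\widehat G$ remains an honest $\graph(n-1,\cdot,\cdot)$ graph on the untouched variable nodes --- this is what lets Corollary~\ref{coro:Correlation} be invoked with a fixed index set --- and then one must check that the super-exponential factor $A_{n-1,|S|}$ produced by that corollary is swallowed by the exponential tail of the neighbourhood size, which is exactly why the cut at $|S|<(n-1)/2$ and a sufficiently large $\lambda$ are needed.
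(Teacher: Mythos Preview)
Your proof is correct and follows essentially the same route as the paper: apply Lemma~\ref{lemma:RemoveV} under the prescribed coupling, condition on the local structure around node $n$ so that the residual graph carrying $Y_{F\setminus\partial n}$ is an honest $\graph(n-1,\alpha n-|\partial n|,\gamma/n)$ ensemble with a fixed index set $S=\{i\}\cup\partial^2 n$, invoke Corollary~\ref{coro:Correlation}, and then average using the exponential tail of $|\Ball(n,1)|$ from Proposition~\ref{prop:BoundNeighborhood}. The only cosmetic differences are that the paper conditions on the sizes $|\partial n|,|\partial^2 n|$ and uses exchangeability implicitly (whereas you condition on the actual edge sets, which is cleaner), and the paper cuts at $|S|\ge\sqrt{n}/10$ so that $A_{n-1,|S|}\le\sqrt{2}$ uniformly rather than carrying the factor $e^{|S|/4}$ into the expectation.
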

\begin{proof}
Take $V=[n-1]$, $V'=[n]$, $F=F'=[n\alpha]$ and sample the edges 
by letting, for any $i\in [n-1]$, $(i,a)\in E$ if and only if $(i,a)\in E'$.
Therefore $E = E'\setminus\{(n,a):\, a\in \dn\}$ (here $\dn$ is 
the neighborhood of variable node $n$ with respect to 
the edge set $E'$). Coupling $(Y,Z(\theta))$
and  $(Y',Z'(\theta))$ as in Lemma \ref{lemma:RemoveV}, 
and using the bound proved there, we get
\begin{align}
\int_{0}^\eps\E_{G}\E ||\mu^{\theta}_i-\mu^{\theta}_i{}'||_{\sTV}
\le 4\int_{0}^{\eps} \E_G\E \Big|\Big| 
\prob_{i,\d2n}
\{\,\cdots\, &| Y_{F\setminus \dn},Z(\theta)\}-\prod_{l\in\{i,\d2n\}}
\prob_{l}\{\, \cdot\,| Y_{F\setminus \dn},Z(\theta)\}
|\Big|\Big|_{\sTV}\de\theta\, ,\label{eq:FirstAvRemoveV}
\, .
\end{align}

In order to estimate the total variation distance on the right hand side,
we shall condition on $|\dn|$ and $|\d2n|$. Once this is done,
the conditional probability
$\prob_{i,\d2n}
\{\,\cdots\, | Y_{F\setminus \dn},Z(\theta)\}$
is distributed as the conditional probability of 
$|\d2n|+$ variables, in a system $\hG(|\dn|)$ with $n-1$ variable nodes
and $n\alpha-|\dn|$ function nodes. Let us  denote
by $(\hY,\hZ(\theta))$ the corresponding observations
(and by $\hprob$, $\hE$  probability and expectations).  
Then the right hand side in Eq.~(\ref{eq:FirstAvRemoveV}) is equal to
\begin{align*}
 4\int_{0}^{\eps} \E_{|\dn|,|\d2n|}&\E_G\Big\{
\E \Big|\Big| 
\prob_{i,\d2n}
\{\,\cdots\, | Y_{F\setminus \dn},Z(\theta)\}-\prod_{l\in\{i,\d2n\}}
\prob_{l}\{\, \cdot\,| Y_{F\setminus \dn},Z(\theta)\}
|\Big|\Big|_{\sTV}
\Big|\,|\dn|,|\d2n|\Big\}\, \de\theta =\\ 
&= 4\int_{0}^{\eps}  \E_{|\dn|,|\d2n|}\E_{\hG(|\dn|)}
\hE \Big|\Big| 
\hprob_{1\dots|\d2n|+1}
\{\,\cdots\, | \hY,\hZ(\theta)\}-\prod_{l=}^{|\d2n|}
\hprob_{l}\{\, \cdot\,| \hY,\hZ(\theta)\}
|\Big|\Big|_{\sTV}\, \de\theta\le\\
&\le 4\E_{|\dn|,|\d2n|}\left\{(|\cX|+1)^{|\d2n+1|}\sqrt{H(X_1)\eps/(n-1)}
\right\}
+4\eps\prob\{|\d2n|+1\ge \sqrt{n}/10\}\ ,.
\end{align*}
In the last step we applied Corollary \ref{coro:Correlation}
and distinguished the cases $|\d2n|+1\ge \sqrt{n}/10$
(then bounding the total variation distance by $1$) and $|\d2n|+1< \sqrt{n}/10$
(then bounding $A_{n-1,|\d2n|+1}$ by $\sqrt{2}$ thanks to the
estimate in Theorem \ref{thm:Correlation}).
The thesis follows using Proposition \ref{prop:BoundNeighborhood}
to bound both terms above (notice in fact that $|\d2n|\le |\Ball(i,1)|$).
\end{proof}

Again, an analogous estimate holds for the effect of removing
one function node. The proof is omitted as it is almost identical to the 
previous one.
\begin{lemma}\label{lemma:AvRemoveF}
Let $G=(V,F,E)$ and $G'=(V',F',E')$ be two random graphs
from, respectively, the $\graph(n,\alpha n-1,\gamma/n)$
and $\graph(n,\alpha n,\gamma/n)$ ensembles. 
Consider two information systems on such graphs. Let
$(Y,Z(\theta))$ and $(Y',Z'(\theta))$ be the corresponding 
observations, and $\mu^{\theta}_i$, $\mu^{\theta}_i{}'$
the conditional distributions of $X_i$ in the two systems.

It is then possible to couple $G$ to $G'$ and, for each $\theta$
$(Y,Z(\theta))$ to $(Y',Z'(\theta))$ and choose 
a constant ${\sf C} ={\sf C}(|\cX|,\alpha,\gamma)$
(bounded uniformly for $\gamma$ and $1/\alpha$ bounded), such that,
for any $\eps>0$ and any $i\in V\cap V'$,
\begin{eqnarray}
\int_{0}^\eps\E_{G}\E ||\mu^{\theta}_i-\mu^{\theta}_i{}'||_{\sTV} \le 
\frac{{\sf C}}{\sqrt{n}}\, .
\end{eqnarray}
Further, such a coupling can be produced by letting 
$V'= V$, $F'=F\setminus\{a\}$, for a fixed function node $a$, 
and $E'=E\cup \{(j,a):j\in\partial a\}$
where $j\in\da$ independently with probability $\gamma/n$. Finally
$(Y,Z(\theta))$ and $(Y',Z'(\theta))$ are coupled as in Lemma
\ref{lemma:RemoveF}. 
\end{lemma}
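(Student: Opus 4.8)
The plan is to mirror, almost verbatim, the proof of Lemma~\ref{lemma:AvRemoveV}, with the deleted variable node (and its function-node neighborhood) replaced by the deleted function node $a$ (and its variable-node neighborhood $\da$), and with Lemma~\ref{lemma:RemoveF} used in place of Lemma~\ref{lemma:RemoveV}. First I would fix the coupling announced in the statement: set $V'=V=[n]$, $F=[\alpha n-1]$ and $F'=F\cup\{a\}$; declare $(j,b)\in E$ iff $(j,b)\in E'$ for every $j\in V$ and $b\in F$, and let $j\in\da$ independently with probability $\gamma/n$, so that $G=G'\setminus\{a\}$ is a sample from $\graph(n,\alpha n-1,\gamma/n)$ that is, moreover, independent of $\da$ (by independence of the edges of $G'$, cf.\ Proposition~\ref{prop:poisson_independance}). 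Couple the observations exactly as in Lemma~\ref{lemma:RemoveF}: draw a single hidden configuration $X$, set $Z(\theta)=Z'(\theta)$, keep $Y=Y'\setminus\{Y_a\}$, and generate $Y_a$ from $Q(\,\cdot\,|X_{\da})$. Conditioning on the graphs, applying the bound of Lemma~\ref{lemma:RemoveF}, then taking expectation over $G$ (and $\da$) and integrating over $\theta$, the claim reduces to estimating
\begin{eqnarray*}
4\int_0^\eps\E_G\E\,\Big|\Big|\prob_{\{i\}\cup\da}\{\,\cdot\,|\,Y,Z(\theta)\}-\prod_{l\in\{i\}\cup\da}\prob_l\{\,\cdot\,|\,Y,Z(\theta)\}\Big|\Big|_{\sTV}\,\de\theta\, ,
\end{eqnarray*}
i.e.\ the average decoupling, in the reduced system carried by $G$ (whose full observation vector is $Y$), of the at most $|\da|+1$ variable nodes in $\{i\}\cup\da$.

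Next I would condition on the degree $|\da|$ of the deleted function node. Given $|\da|=k$, the set $\da$ is a uniformly random $k$-subset of $V$, while $G\ed\graph(n,\alpha n-1,\gamma/n)$ independently of it; since this ensemble is invariant under permutations of the variable nodes, Corollary~\ref{coro:Correlation} applies directly to $G$ with the (at most $k+1$) fixed variable nodes $\{i\}\cup\da$ and gives
\begin{eqnarray*}
\int_0^\eps\E_G\E\,\Big|\Big|\prob_{\{i\}\cup\da}\{\,\cdot\,|\,Y,Z(\theta)\}-\prod_{l\in\{i\}\cup\da}\prob_l\{\,\cdot\,|\,Y,Z(\theta)\}\Big|\Big|_{\sTV}\,\de\theta\le(|\cX|+1)^{k+1}A_{n,k+1}\sqrt{H(X_1)\eps/n}\, .
\end{eqnarray*}
Averaging over $|\da|$ and splitting according to whether $|\da|+1<\sqrt{n}/10$ --- in which case $A_{n,|\da|+1}\le\sqrt2$ by the estimate of Theorem~\ref{thm:Correlation} --- or $|\da|+1\ge\sqrt{n}/10$ --- in which case I simply bound the total variation distance by $1$ --- I am left with
\begin{eqnarray*}
\int_0^\eps\E_G\E\,||\mu^{\theta}_i-\mu^{\theta}_i{}'||_{\sTV}\,\de\theta\le 4\sqrt2\,\sqrt{H(X_1)\eps/n}\;\E\big[(|\cX|+1)^{|\da|+1}\big]+4\eps\,\prob\{|\da|+1\ge\sqrt{n}/10\}\, .
\end{eqnarray*}

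To finish, I would use that $|\da|$ is Binomial$(n,\gamma/n)$. A Markov bound on $s^{|\da|}$ yields $\E[s^{|\da|}]\le e^{\gamma(s-1)}$ for every $s\ge1$ and every $n$, so with $s=|\cX|+1$ the first term is $O(n^{-1/2})$ with a constant depending only on $|\cX|$ and $\gamma$ (here I use $\eps\le1$, legitimate since $\theta$ ranges over $[0,1]$, and $H(X_1)\le\log|\cX|$); with $s=2$ the same bound gives $\prob\{|\da|\ge\sqrt{n}/10-1\}\le 2e^{\gamma}\,2^{-\sqrt{n}/10}=o(n^{-1/2})$, so the second term is negligible. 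Collecting the constants produces the asserted bound with ${\sf C}={\sf C}(|\cX|,\alpha,\gamma)$, the $\alpha$-dependence entering only through the constraint $\alpha n\ge2$ that makes the statement meaningful (hence through the small-$n$ values, where one uses the trivial bound $\le\eps\le1$). I expect the only genuinely delicate point to be the same one as in Lemma~\ref{lemma:AvRemoveV}: the truncation of $A_{n,k}=\sqrt{B_{n,k}}$, which blows up exponentially once $k$ is of order $n$, so it is essential to exploit the Poisson-type (exponential) tail of $|\da|$ to confine the relevant range to $k=O(\sqrt{n})$, where $A_{n,k}$ is harmless; everything else is a mechanical copy of the earlier argument.
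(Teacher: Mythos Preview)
Your proposal is correct and is precisely the argument the paper has in mind: it omits the proof of Lemma~\ref{lemma:AvRemoveF} as ``almost identical'' to that of Lemma~\ref{lemma:AvRemoveV}, and your adaptation---using Lemma~\ref{lemma:RemoveF} in place of Lemma~\ref{lemma:RemoveV}, conditioning on $|\da|$, and applying Corollary~\ref{coro:Correlation} to the residual graph $G\ed\graph(n,\alpha n-1,\gamma/n)$ (independent of $\da$)---is exactly that. Your direct moment bound $\E[s^{|\da|}]\le e^{\gamma(s-1)}$ on the Binomial degree is a mild simplification over invoking Proposition~\ref{prop:BoundNeighborhood}, but the structure is the same.
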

%
%
\subsection{BP equations}
\label{sec:ApproximateMarginalActualProof}

We begin by proving a useful technical Lemma.
\begin{lemma}\label{lemma:Simple}
Let $p_1$, $p_2$ be probability distribution over a finite set $\cS$,
and $q:\hcS\times\cS\to\reals_+$ be a non-negative function.
Define, for $a=1,2$ the probability distributions
\begin{eqnarray}
\hp_a(x) \equiv\frac{\sum_{y\in\cS}q(x,y)\,p_a(y)}
{\sum_{x'\in\hcS,y'\in\cS}q(x',y')\,p_a(y')}\, .
\end{eqnarray}
Then
\begin{eqnarray}
||\hp_1-\hp_2||_{\sTV} \le 2\,\left(\frac{\max_{y\in\cS}\sum_x
q(x,y)}{\min _{y\in\cS}\sum_x q(x,y)}\right)\; ||p_1-p_2||_{\sTV}\, .
\end{eqnarray}
\end{lemma}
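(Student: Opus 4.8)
The plan is to reduce the statement to an elementary estimate on how a linear-fractional (normalized-linear) map distorts the total variation distance. Write $N_a(x) \equiv \sum_{y\in\cS} q(x,y)\,p_a(y)$ for the unnormalized numerators, and $D_a \equiv \sum_{x'\in\hcS} N_a(x') = \sum_{x',y'} q(x',y')\,p_a(y')$ for the normalizing constants, so that $\hp_a(x) = N_a(x)/D_a$. First I would introduce the shorthand $w(y) \equiv \sum_{x\in\hcS} q(x,y)$ for the ``column sums,'' and set $w_{\max} = \max_y w(y)$, $w_{\min} = \min_y w(y)$, so the claimed bound reads $\|\hp_1 - \hp_2\|_{\sTV} \le 2 (w_{\max}/w_{\min}) \|p_1 - p_2\|_{\sTV}$. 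Note $D_a = \sum_y w(y)\,p_a(y) \in [w_{\min}, w_{\max}]$.

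The key computation is to expand the difference $\hp_1(x) - \hp_2(x)$ and bound $\sum_x |\hp_1(x)-\hp_2(x)|$. I would write
\begin{eqnarray*}
\hp_1(x) - \hp_2(x) = \frac{N_1(x)}{D_1} - \frac{N_2(x)}{D_2}
= \frac{N_1(x) - N_2(x)}{D_1} + N_2(x)\left(\frac{1}{D_1} - \frac{1}{D_2}\right)
= \frac{N_1(x) - N_2(x)}{D_1} - \hp_2(x)\,\frac{D_1 - D_2}{D_1}\, .
\end{eqnarray*}
Summing $|\cdot|$ over $x$ and using $\sum_x \hp_2(x) = 1$ gives
\begin{eqnarray*}
\sum_x |\hp_1(x) - \hp_2(x)| \le \frac{1}{D_1}\sum_x |N_1(x) - N_2(x)| + \frac{|D_1 - D_2|}{D_1}\, .
\end{eqnarray*}
Now $\sum_x |N_1(x)-N_2(x)| = \sum_x |\sum_y q(x,y)(p_1(y)-p_2(y))| \le \sum_y w(y)\,|p_1(y)-p_2(y)| \le w_{\max}\sum_y |p_1(y)-p_2(y)| = 2 w_{\max}\|p_1-p_2\|_{\sTV}$, and likewise $|D_1 - D_2| = |\sum_y w(y)(p_1(y)-p_2(y))| \le 2 w_{\max}\|p_1-p_2\|_{\sTV}$. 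Since $D_1 \ge w_{\min}$, both terms are bounded by $(2 w_{\max}/w_{\min})\|p_1-p_2\|_{\sTV}$, and dividing the sum by $2$ to get the total variation distance yields $\|\hp_1-\hp_2\|_{\sTV} \le 2(w_{\max}/w_{\min})\|p_1-p_2\|_{\sTV}$, exactly as claimed.

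There is essentially no serious obstacle here; the only point requiring a little care is the implicit assumption that the normalizations are well defined, i.e. that $D_a > 0$ — which forces $w_{\min} > 0$, so that the ratio $w_{\max}/w_{\min}$ in the statement is finite (if $w_{\min}=0$ the bound is vacuously true). One should also double-check the factor of $2$: it comes precisely from splitting into the ``numerator'' and ``denominator'' contributions, each of which on its own would only give a factor $w_{\max}/w_{\min}$. A slightly sharper argument pairing the two terms could in principle remove the $2$, but since the lemma is only used downstream with crude constants, the simple triangle-inequality split above is the cleanest route and is what I would write up.
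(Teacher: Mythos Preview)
Your proof is correct and follows essentially the same approach as the paper: the same add-and-subtract decomposition $\hp_1(x)-\hp_2(x) = (N_1(x)-N_2(x))/D_1 - \hp_2(x)(D_1-D_2)/D_1$ (equivalently, the elementary inequality $|a_1/b_1 - a_2/b_2|\le |a_1-a_2|/b_1 + (a_2/b_2)|b_1-b_2|/b_1$), followed by summing over $x$, bounding numerator and denominator contributions by $w_{\max}\sum_y|p_1(y)-p_2(y)|$, and using $D_1\ge w_{\min}$. The paper presents the intermediate step slightly more compactly, combining the two terms into the single expression $\sum_y w(y)|p_1(y)-p_2(y)|/D_1$ before extracting the ratio, but the argument is identical.
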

\begin{proof}
Using the inequality $|(a_1/b_1)-(a_2/b_2)|\le |a_1-a_2|/b_1+(a_2/b_2)|b_1-
b_2|/b_1$
(valid for $a_1,a_2,b_1,b_2\ge 0$), we get
\begin{eqnarray*}
|\hp_1(x)-\hp_2(x)|\le\frac{\sum_y q(x,y)|p_1(y)-p_2(y)|}
{\sum_{x',y'}q(x',y')p_1(y')}+\frac{\sum_{y}q(x,y)p_2(y)}{\sum_{x',y'}
q(x',y')p_2(y')}\, \frac{\left|\sum_{x',y'}q(x',y')(p_1(y')-p_2(y'))\right|}
{\sum_{x',y'}q(x',y')p_1(y')}\, .
\end{eqnarray*}
Summing over $x$ we get
\begin{eqnarray*}
||\hp_1-\hp_2||_{\sTV}\le \frac{\sum_y (\sum_x q(x,y))|p_1(y)-p_2(y)|}
{\sum_{y'}(\sum_{x'}q(x',y'))p_1(y')}\, ,
\end{eqnarray*}
whence the thesis follows.
\end{proof}

Given a graph $G$,   $i\in V$, $t\ge 1$, we let $\Ball\equiv\Ball(i,t)$,
$\cBall\equiv \cBall(i,t)$ and $\dBall\equiv \dBall(i,t)$,
Further, we introduce the shorthands
\begin{eqnarray}
W_{\Ball} & \equiv 
&\{ Y_a :  \, \da\subseteq\Ball,\,\da\not\subseteq\dBall\}\cup\{Z_i :\, i\in\Ball\setminus \dBall\}\, ,\label{eq:WNotation1}\\
W_{\cBall} & \equiv 
&\{ Y_a :  \, \da\subseteq\cBall\}\cup\{Z_i :\, i\in\cBall\}\, .\label{eq:WNotation2}
\end{eqnarray}
Notice that $W_{\Ball}$, $W_{\cBall}$ form a partition of the variables
in $Y,Z(\theta)$. Further $W_{\Ball}$, $W_{\cBall}$ are conditionally 
independent given $X_{\dBall}$.
As a consequence, we have the following simple bound.
\begin{lemma}\label{lemma:Elementary}
For any two non-negative functions $f$ and $g$, we have
\begin{eqnarray}
\E\{f(W_{\Ball})\, g(W_{\cBall})\}\le \max_{x_{\Ball}}
\E\{f(W_{\Ball})|X_{\dBall}=x_{\dBall}\}\, \E\{g(W_{\cBall})\}\, .
\end{eqnarray}
\end{lemma}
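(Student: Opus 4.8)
The plan is to condition on the boundary configuration $X_{\dBall}$ and to use the conditional independence of $W_{\Ball}$ and $W_{\cBall}$ recorded immediately above the statement. Recall why that independence holds: $W_{\Ball}$ is a deterministic function of $X_{\Ball}$ and observation noise internal to $\Ball$, while $W_{\cBall}$ is a deterministic function of $X_{\cBall}$ and independent noise; since the $X_i$ are iid and $\Ball\setminus\dBall$ is disjoint from $\cBall\setminus\dBall$, the blocks $X_{\Ball\setminus\dBall}$ and $X_{\cBall\setminus\dBall}$ are independent once $X_{\dBall}=X_{\Ball}\cap X_{\cBall}$ is fixed, hence so are $W_{\Ball}$ and $W_{\cBall}$.

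First I would apply the tower property together with this factorization to write
\begin{eqnarray*}
\E\{f(W_{\Ball})\, g(W_{\cBall})\} = \E\Big\{\E\big\{f(W_{\Ball})\,\big|\,X_{\dBall}\big\}\;\E\big\{g(W_{\cBall})\,\big|\,X_{\dBall}\big\}\Big\}\, .
\end{eqnarray*}
Since $f\ge 0$, the map $x_{\dBall}\mapsto\E\{f(W_{\Ball})\,|\,X_{\dBall}=x_{\dBall}\}$ takes values in $[0,\infty)$ and is therefore bounded above by the constant $M_f\equiv\max_{x_{\dBall}}\E\{f(W_{\Ball})\,|\,X_{\dBall}=x_{\dBall}\}$. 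Because $g\ge 0$, the factor $\E\{g(W_{\cBall})\,|\,X_{\dBall}\}$ is non-negative, so replacing $\E\{f(W_{\Ball})\,|\,X_{\dBall}\}$ inside the expectation by the larger quantity $M_f$ preserves the inequality:
\begin{eqnarray*}
\E\{f(W_{\Ball})\, g(W_{\cBall})\} \le M_f\;\E\Big\{\E\big\{g(W_{\cBall})\,\big|\,X_{\dBall}\big\}\Big\} = M_f\;\E\{g(W_{\cBall})\}\, ,
\end{eqnarray*}
where the last step is the tower property again. This is exactly the asserted bound.

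I do not expect any genuine obstacle here: the only substantive ingredient is the conditional independence of $W_{\Ball}$ and $W_{\cBall}$ given $X_{\dBall}$, which is already available, and everything else is the tower property plus non-negativity of $f$ and $g$. The single point deserving a line of care is the justification that conditioning on the boundary variables $X_{\dBall}$ alone (rather than on the whole interior configuration $X_{\Ball}$) already decouples the two groups of observations; this is precisely where the iid assumption on the $X_i$'s enters.
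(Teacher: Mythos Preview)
Your proof is correct and essentially identical to the paper's: both condition on $X_{\dBall}$, use the conditional independence of $W_{\Ball}$ and $W_{\cBall}$ given $X_{\dBall}$ to factor the inner expectation, bound the $f$-factor by its maximum over boundary configurations, and then collapse the remaining $g$-factor via the tower property. Your version simply spells out the justification for the conditional independence a bit more explicitly than the paper does.
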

\begin{proof}
Using the conditional independence property we have
\begin{eqnarray*}
\E\Big\{f(W_{\Ball})\, g(W_{\cBall})\Big\} = 
\E\{\E[f(W_{\Ball})|X_{\dBall}]\, \E[g(W_{\cBall})|X_{\dBall}]\Big\}
\le \max_{x_{\Ball}}
\E\{f(W_{\Ball})|X_{\dBall}=x_{\dBall}\}\, \E\Big\{
\E[g(W_{\cBall})|X_{\dBall}]\Big\}\, ,
\end{eqnarray*}
which proves our claim.
\end{proof}

It is easy to see that the conditional distribution
of $(X_i,X_{\dBall})$ takes the form (with an abuse of notation
we write $\prob\{X_U|\, \cdots\,\}$ instead of $\prob\{X_U=x_U|\, \cdots\,\}$)
\begin{eqnarray}
\prob\{X_i,X_{\dBall}|Y,Z(\theta)\} & = &\frac{\prob\{X_i,W_{\Ball}|X_{\dBall},
W_{\cBall}\}\,\prob\{X_{\dBall}|W_{\cBall}\}}{\sum_{X'_i,X'_{\dBall}}
\prob\{X'_i,W_{\Ball}|X'_{\dBall},W_{\cBall}\}\,\prob\{X'_{\dBall}|W_{\cBall}\}}
=\\
& = & \frac{\prob\{X_i,W_{\Ball}|X_{\dBall}\}\,
\prob\{X_{\dBall}|W_{\cBall}\}}{\sum_{X'_i,X'_{\dBall}}
\prob\{X'_i,W_{\Ball}|X'_{\dBall}\}\,\prob\{X'_{\dBall}|W_{\cBall}\}}\, .
\end{eqnarray}
If $\Ball$ is a small neighborhood of $i$, the most intricate 
component in the above formulae is the probability 
$\prob\{X_{\dBall}|W_{\cBall}\}$. It would be nice if we could replace
this term by the product of the marginal probabilities of $X_j$, for 
$j\in\dBall$. We thus define 
\begin{eqnarray}
\qprob\{X_i,X_{\dBall}||Y,Z(\theta)\} & = &
\frac{\prob\{X_i,W_{\Ball}|X_{\dBall}\}\,
\prod_{j\in\dBall}\prob\{X_j|W_{\cBall}\}}{\sum_{X'_i,X'_{\dBall}}
\prob\{X'_i,W_{\Ball}|X'_{\dBall}\}\,
\prod_{j\in\dBall}\prob\{X'_j|W_{\cBall}\}}\, .
\end{eqnarray}
Notice that this is a probability kernel, but not a conditional probability
(to stress this point we used the double separator $||$). 

Finally, we recall the definition of local marginal 
$\mu_i^{\theta}(x_i)$ and 
introduce, by analogy, the approximation $\mu_i^{\theta,t}(\,\cdot\,)$
\begin{eqnarray}
\mu_i^{\theta}(x_i)= \sum_{X_{\dBall}} 
\prob\{X_i=x_i,X_{\dBall}|Y,Z(\theta)\}\, ,\;\;\;\;\;\;\;
\mu_i^{\theta,t}(x_i)\equiv \sum_{X_{\dBall}} 
\qprob\{X_i=x_i,X_{\dBall}|Y,Z(\theta)\}\, .\label{def:Mut}
\end{eqnarray}
It is easy to see that, for $t=1$, $\mu_i^{\theta,t}$ is nothing 
but the result of applying belief propagation to the
marginals of the neighbors of $i$ with respect to the reduced graph that 
does not include $i$. Formally, in the notation of Theorem
\ref{thm:ApproximateMarginal}:
\begin{eqnarray}
\mu_i^{\theta,1}(x_i) = \Fbp_i(\{\mu_{j\to a}^{\theta}\}_{a\in\di, j\in\da
\setminus i})(x_i)\, .
\end{eqnarray}

The result below shows that indeed the boundary condition 
on $X_{\dBall}$ can be chosen as factorized, thus providing a more general 
version of Theorem \ref{thm:ApproximateMarginal}.
\begin{thm}[BP equations, more general version]
\label{lemma:ApproximateMarginal}
Consider an observations system on a random bipartite graph $G = (V,F,E)$ from
the $\graph(n,\alpha n,\gamma/n)$ ensemble, and assume the
noisy observations to be $M$-soft. Then there exists a constant 
${\sf A}$ depending on $t,\alpha,\gamma,M,|\cX|,\eps$, such that
for any $i\in V$, and any $n$
\begin{eqnarray}
\int_0^{\eps}\E_G\E||\mu_i^{\theta}-\mu_i^{\theta,t}||_{\sTV}
\;\de\theta \le \frac{{\sf A}}{\sqrt{n}}\, .
\end{eqnarray}
\end{thm}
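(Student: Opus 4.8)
The plan is to bound the total variation distance $||\mu_i^\theta - \mu_i^{\theta,t}||_{\sTV}$ by comparing the two kernels $\prob\{X_i, X_{\dBall} | Y, Z(\theta)\}$ and $\qprob\{X_i, X_{\dBall} || Y, Z(\theta)\}$, whose only difference is that $\prob\{X_{\dBall}|W_{\cBall}\}$ has been replaced by the product $\prod_{j\in\dBall}\prob\{X_j|W_{\cBall}\}$. Since $\mu_i^\theta$ and $\mu_i^{\theta,t}$ are both obtained from these kernels by the \emph{same} normalized linear operation — multiplying by the nonnegative kernel $\prob\{X_i, W_{\Ball}|X_{\dBall}\}$ (which only involves the bounded-size neighborhood and the observations $W_\Ball$) and then marginalizing over $X_{\dBall}$ — I would invoke Lemma \ref{lemma:Simple} with $\cS$ the set of configurations $x_{\dBall}$, $\hcS$ the set of $x_i$, $p_1 = \prob\{X_{\dBall}|W_{\cBall}\}$, $p_2 = \prod_{j}\prob\{X_j|W_{\cBall}\}$, and $q(x_i, x_{\dBall}) = \prob\{X_i = x_i, W_\Ball | X_{\dBall} = x_{\dBall}\}$. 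This yields
\begin{eqnarray*}
||\mu_i^{\theta}-\mu_i^{\theta,t}||_{\sTV}\le 2\,\left(\frac{\max_{x_{\dBall}}\sum_{x_i}q(x_i,x_{\dBall})}{\min_{x_{\dBall}}\sum_{x_i}q(x_i,x_{\dBall})}\right)\;\Big|\Big|\prob\{X_{\dBall}|W_{\cBall}\}-\prod_{j\in\dBall}\prob\{X_j|W_{\cBall}\}\Big|\Big|_{\sTV}\, .
\end{eqnarray*}

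Next I would control the prefactor, which is $\sum_{x_i}\prob\{X_i=x_i,W_\Ball|X_{\dBall}=x_{\dBall}\} = \prob\{W_\Ball|X_{\dBall}=x_{\dBall}\}$. This is where the $M$-softness hypothesis enters: because $R$ and each $Q^{(k)}$ are $M$-soft, the likelihood ratio $\prob\{W_\Ball|X_{\dBall}=x_{\dBall}\}/\prob\{W_\Ball|X_{\dBall}=x'_{\dBall}\}$ can be bounded by a quantity depending only on the number of observation and variable nodes in $\Ball$ (roughly $M^{|\Ball|}$ after integrating, or a deterministic bound on the ratio of conditional densities on each factor). So conditionally on the neighborhood $\Ball$ having size $s$, the prefactor is at most some $K(M)^{s}$. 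After conditioning on $|\Ball(i,t)|$, taking expectation over $G$, and splitting into the event $\{|\Ball(i,t)| < (\log n)/(10\log K)\}$ (say) versus its complement, the first piece has a prefactor that is $n^{o(1)}$ while on the complement Proposition \ref{prop:BoundNeighborhood} gives super-polynomially small probability (bounding the TV distance by $1$ there); this is exactly the same ``condition on neighborhood size, apply a tail bound'' device used in the proof of Lemma \ref{lemma:AvRemoveV}.

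It then remains to bound $\int_0^\eps \E_G\E\,||\prob\{X_{\dBall}|W_{\cBall}\}-\prod_{j\in\dBall}\prob\{X_j|W_{\cBall}\}||_{\sTV}\,\de\theta$. Here I would use Proposition \ref{prop:poisson_independance}: conditionally on $\Ball(i,t)$ with $|\dBall| = k$, the graph $\cBall$ is a fresh graph from the $\graph$ ensemble on the remaining vertices, and the $k$ variable nodes $j\in\dBall$ are a fixed set of nodes in it; the observations $W_{\cBall}$ are precisely the observations of that residual system. So $\prob\{X_{\dBall}|W_{\cBall}\}$ is the joint conditional marginal $\tprob_{\dBall}$ in a system of size $\sim n$, and Corollary \ref{coro:Correlation} (applied with $k=|\dBall|$) bounds its distance from the product of one-point marginals by $(|\cX|+1)^k A_{n,k}\sqrt{H(X_1)\eps/n}$. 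Conditioning on $|\dBall|\le|\Ball(i,1)|$ and again splitting on whether $k < \sqrt n/10$ (so $A_{n,k}\le\sqrt 2$) versus not (TV $\le 1$, super-polynomially rare by Proposition \ref{prop:BoundNeighborhood}) closes the estimate, absorbing everything into a constant ${\sf A}(t,\alpha,\gamma,M,|\cX|,\eps)$ over $\sqrt n$.

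I expect the main obstacle to be the interplay between the two expectations and the random neighborhood size: one must be careful that Corollary \ref{coro:Correlation} is being applied to the residual system $\cBall$ \emph{conditioned} on $\Ball(i,t)$ (which is legitimate precisely by Proposition \ref{prop:poisson_independance}, but requires checking that the residual graph distribution is still permutation-invariant in its variable nodes so the corollary applies to the fixed set $\dBall$), and that the softness-derived prefactor $K(M)^{|\Ball|}$ and the correlation bound $(|\cX|+1)^{|\dBall|}$ are both tamed by the \emph{same} tail bound on $|\Ball(i,t)|$ — essentially the product $\E_G[K(M)^{|\Ball(i,t)|}(|\cX|+1)^{|\dBall|}\mathbb{I}(|\dBall|<\sqrt n/10)]$ must be shown to be $O(1)$, which follows from Proposition \ref{prop:BoundNeighborhood} by choosing $\lambda$ large enough relative to $K(M)(|\cX|+1)$. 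The identification $\mu_i^{\theta,1} = \Fbp_i(\{\mu_{j\to a}^\theta\})$ already recorded above then yields Theorem \ref{thm:ApproximateMarginal} as the $t=1$ special case.
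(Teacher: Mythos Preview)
Your approach is essentially the paper's: apply Lemma \ref{lemma:Simple} with $p_1=\prob\{X_{\dBall}|W_{\cBall}\}$ and $p_2=\prod_j\prob\{X_j|W_{\cBall}\}$, bound the likelihood-ratio prefactor via $M$-softness, invoke Proposition \ref{prop:poisson_independance} so that Corollary \ref{coro:Correlation} applies to the residual system, and control the neighborhood size via Proposition \ref{prop:BoundNeighborhood}.

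The one step you underspecify is the passage from a bound on the prefactor to a bound on the expectation of the \emph{product} (prefactor $\times$ TV term). Softness does \emph{not} give a deterministic bound $K(M)^{|\Ball|}$ on the ratio $\max_{x_{\dBall}}\prob\{W_\Ball|x_{\dBall}\}/\min_{x_{\dBall}}\prob\{W_\Ball|x_{\dBall}\}$; it only bounds its conditional expectation given $X_{\dBall}$. The paper handles this via Lemma \ref{lemma:Elementary}: since $W_\Ball$ and $W_{\cBall}$ are conditionally independent given $X_{\dBall}$, the expectation of the product splits as $C(\Ball)\cdot\E||\cdots||_{\sTV}$ with $C(\Ball)\le M^{|\Ball|}$. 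Once this decoupling is in place, your final-paragraph bound $\E_G\big[(M(|\cX|+1))^{|\Ball|}\,\ind(|\Ball|<\sqrt n/10)\big]=O(1)$ (from Proposition \ref{prop:BoundNeighborhood} with $\lambda>M(|\cX|+1)$) is exactly right and recovers the $n^{-1/2}$ rate. Your intermediate suggestion of a $(\log n)/(10\log K)$ threshold should be discarded: it would cost a factor $n^{1/10}$, not $n^{o(1)}$, and spoil the stated rate.
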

\begin{proof}
Notice that the definitions of $\mu_i^{\theta}$, $\mu_i^{\theta,t}$ 
have the same form as $\hp_1$, $\hp_2$ in Lemma \ref{lemma:Simple},
whereby $x$ corresponds to $X_i$ and $y$ to $X_{\dBall}$.
We have therefore
\begin{eqnarray}
||\mu_i^{\theta}-\mu_i^{\theta,t}||_{\sTV} \le
2\left(\frac{\max_{x_{\dBall}}\prob\{W_{\Ball}|X_{\dBall}=x_{\dBall}\}}
{\min_{x_{\dBall}}\prob\{W_{\Ball}|X_{\dBall}=x_{\dBall}\} }\right)\,
\Big|\Big| \prob\{X_{\dBall}=\,\cdot\,|W_{\cBall}\}-
\prod_{j\in\dBall}\prob\{X_j=\,\cdot\,|W_{\cBall}\}\Big|\Big|_{\sTV}\, .
\end{eqnarray}
Given observations $Z(\theta)$ and $U\subseteq V$, let us denote
as $\Comp(Z(\theta),U)$ the values of $x_U$ such that,
for any $i\in U$ with $Z_{i}(\theta) = (Z_i,x^0_i)$ with $x^0_i\neq \ast$,
one has $x_i = x^0_i$ (i.e. the set of assignments $x_U$ that are compatible
with direct observations).
Notice that the factor in parentheses can be upper bounded
as
\begin{eqnarray}
\frac{\max_{x_{\dBall}}\sum_{x_{\Ball\setminus\dBall}}
\prob\{W_{\Ball}|X_{\Ball}=x_{\Ball}\}}
{\min_{x_{\dBall}}\sum_{x_{\Ball\setminus\dBall}}
\prob\{W_{\Ball}|X_{\Ball}=x_{\Ball}\} }
&\le&
\frac{\max_{x_{\dBall}}\max_{x_{\Ball\setminus\dBall}
\in\Comp(Z(\theta),\Ball\setminus\dBall)}
\prob\{W_{\Ball}|X_{\Ball}=x_{\Ball}\}}
{\min_{x_{\dBall}}\min_{x_{\Ball\setminus\dBall}\in
\Comp(Z(\theta),\Ball\setminus\dBall)}
\prob\{W_{\Ball}|X_{\Ball}=x_{\Ball}\} }
=\\
&\le&\frac{\max_{x_{\Ball}
\in\Comp(Z(\theta),\Ball\setminus\dBall)}
\prob\{W_{\Ball}|X_{\Ball}=x_{\Ball}\}}
{\min_{x_{\Ball}\in
\Comp(Z(\theta),\Ball\setminus\dBall)}
\prob\{W_{\Ball}|X_{\Ball}=x_{\Ball}\} }\, .
\end{eqnarray}

Using Lemma \ref{lemma:Elementary} to take expectation with respect to 
the observations $(Y,Z(\theta))$, we get
\begin{eqnarray}
\E||\mu_i^{\theta}-\mu_i^{\theta,t}||_{\sTV} \le C(\Ball)
\,\E\left|\left| \prob\{X_{\dBall}=\,\cdot\,|W_{\cBall}\}-
\prod_{j\in\dBall}\prob\{X_j=\,\cdot\,|W_{\cBall}\}\right|\right|_{\sTV}\, ,
\label{eq:ExpNoise}
\end{eqnarray}
where (with the shorthand $\prob\{\,\cdot\,|x_U\}$ for 
$\prob\{\,\cdot\,|X_U=x_U\}$, and omitting the arguments from $\Comp$,
since they are clear from the context)
\begin{eqnarray*}
C(\Ball) &=& 2\max_{x'_{\dBall}}\E\left\{\left.
\left(\frac{\max_{x_{\dBall}}\prob\{W_{\Ball}|x_{\dBall}\}}
{\min_{x_{\dBall}}\prob\{W_{\Ball}|x_{\dBall}\} }\right)
\right|X_{\dBall}=x'_{\dBall}\right\}\le\\
&\le& 2\max_{x'_{\Ball}}\E\left\{\left.
\left(\frac{\max_{x_{\Ball}
\in\Comp}
\prob\{W_{\Ball}|X_{\Ball}=x_{\Ball}\}}
{\min_{x_{\Ball}\in
\Comp}
\prob\{W_{\Ball}|X_{\Ball}=x_{\Ball}\} }
\right)
\right|X_{\Ball}=x'_{\Ball}\right\}\le\\
&\le& 2\max_{x'_{\Ball}}\E\left\{\left.
\prod_{a\in\Ball}\frac{\max_{x_{\da}}\prob\{Y_a|x_{\da}\}}
{\min_{x_{\da}}\prob\{Y_a|x_{\da}\}}
\prod_{i\in \Ball\setminus \dBall}
\frac{\max_{x_i\in\Comp}\prob\{Z_i(\theta)|x_{i}\}}
{\min_{x_{i}\in\Comp}\prob\{Z_i(\theta)|x_{i}\}}
\right|X_{\Ball}=x'_{\Ball}\right\}\le\\
&\le & 
\prod_{a\in\Ball}\max_{x'_{\da}}
\E\left\{\left.\frac{\max_{x_{\da}}\prob\{Y_a|x_{\da}\}}
{\min_{x_{\da}}\prob\{Y_a|x_{\da}\}}\right|X_{\da}=x'_{\da}\right\}
\prod_{i\in \Ball\setminus \dBall}
\max_{x'_{i}}\E\left\{\left.
\frac{\max_{x_i}\prob\{Z_i|x_{i}\}}
{\min_{x_{i}}\prob\{Z_i|x_{i}\}}
\right|X_{i}=x'_{i}\right\}
\le  M^{|\Ball|}\, .
\end{eqnarray*}
In the last step we used the hypothesis of soft noise, and before
we changed $Z_i(\theta)$ in $Z_i$ because the difference is
irrelevant under the restriction $x\in\Comp$, and subsequently 
removed this restriction.

We now the expectation of Eq.~(\ref{eq:ExpNoise}) over the random graph 
$G$, conditional on $\Ball$
\begin{eqnarray}
\E_G\left\{\left.
\E||\mu_i^{\theta}-\mu_i^{\theta,t}||_{\sTV}\right|\Ball\right\}
\le M^{|\Ball|}
\E_G\left\{\left.
\E\left|\left| \prob\{X_{\dBall}=\,\cdot\,|W_{\cBall}\}-
\prod_{j\in\dBall}\prob\{X_j=\,\cdot\,|W_{\cBall}\}\right|\right|_{\sTV}
\right|\Ball\right\}
\, ,
\end{eqnarray}
Notice that the the conditional expectation is equivalent 
to an expectation over a random graph on variable nodes
$(V\setminus V(\Ball))\cup \dBall$, and function nodes $F\setminus F(\Ball)$
(where $V(\Ball)$ and $F(\Ball)$ denotes the variable and function node sets
of $\Ball$).
The distribution of this `residual graph' is the same as for the original 
ensemble: for any $j\in (V\setminus V(\Ball))\cup \dBall$
and any $b\in F\setminus F(\Ball)$, the edge $(j,b)$ is included 
independently with probability $\gamma/n$.
We can therefore apply Corollary \ref{coro:Correlation}
\begin{eqnarray}
\int_0^{\eps}\E_G\left\{\left.
\E||\mu_i^{\theta}-\mu_i^{\theta,t}||_{\sTV}\right|\Ball\right\}
\;\de\theta\le M^{|\Ball|}(|\cX|+1)^{|\Ball|}
A_{n-|\Ball|,|\Ball|}\sqrt{H(X_1)\eps/(n-|\Ball|)}\, .
\end{eqnarray}

We can now take expectation over $\Ball =\Ball(i,t)$, and invert expectation 
and integral over $\theta$, since the integrand is non-negative and bounded.
We single out the case $|\Ball|>\sqrt{n}/10$ and upper bound the
total variation distance by $1$ in this case. 
In the case $|\Ball|\le\sqrt{n}/10$
we upper bound $A_{|\Ball|,n-|\Ball|}$ by $\sqrt{2}$ and lower bound
$n-|\Ball|$ by $n/2$, thus yielding, for $\widetilde{M}\equiv M(1+\cX)$:
\begin{eqnarray}
\int_0^{\eps}\E_G
\E||\mu_i^{\theta}-\mu_i^{\theta,t}||_{\sTV}
\;\de\theta &\le &\sqrt{4H(X_1)\eps/n}\;\E\{\widetilde{M}^{|\Ball|}\}+
\prob\{|\Ball|>\sqrt{n}/10\}\, .
\end{eqnarray}
The thesis follows by applying Proposition \ref{prop:BoundNeighborhood}
to both terms.
\end{proof}
%
%
%
\section{Proof of Theorem \ref{thm:Main} (density evolution)}
\label{sec:MainProof}

Given a probability distribution $S$
over $\meas^2(\cX)$, we define the probability distribution
$\sP_{S,k}$ over $\meas(\cX)\times \cdots\times \meas(\cX)$
($k$ times) by
\begin{eqnarray}
\sP_{S,k}\left\{(\mu_1,\dots,\mu_k)\in A\right\}
=\int \rho^{k}(A)\,  S(\de\rho)\, .
\end{eqnarray}
where $\rho^{k}$ is the law of $k$ iid random $\mu$'s with common distribution
$\rho$.
We shall denote by $\sE_{S,k}$ expectation with respect to the same measure.
\begin{lemma}\label{lemma:deFinetti}
Let $G=(V,F,E)$ be any bipartite graph whose distribution is invariant under 
permutation of the variable nodes in $V=[n]$, and
$\mu_i(\,\cdot\,) \equiv \prob\{ X_i=\cdot|Y,Z\}$ the marginals of an
observations system on $G$. Then, for any diverging sequence 
$R_0\subseteq\naturals$ there exists a subsequence $R$ and  
a distribution $S$ on
$\meas^2(\cX)$ such that, for any subset $\{i(1),\dots,i(k)\}\subseteq [n]$
of distinct variable nodes, and any bounded Lipschitz function 
$\varphi:\meas(\cX)^k\to\reals$:
\begin{eqnarray}
\lim_{n\in R}\E_G\E\{\varphi(\mu_{i(1)},\dots,\mu_{i(k)})\} =
\sE_{S,k}\{\varphi(\mu_{1},\dots,\mu_{k})\}\, .\label{eq:deFinetti}
\end{eqnarray}
\end{lemma}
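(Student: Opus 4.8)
The plan is to recognize \eqref{eq:deFinetti} as an instance of de Finetti's theorem for exchangeable sequences, applied to a suitable limiting random array, and to extract the limit distribution $S$ via a compactness (tightness) argument along subsequences. First I would observe that, since $G$ is invariant under permutations of $V=[n]$, the finite sequence $(\mu_1,\dots,\mu_n)$ is \emph{exchangeable}: for any permutation $\pi$ of $[n]$, the joint law of $(\mu_{\pi(1)},\dots,\mu_{\pi(n)})$ under $\E_G\E$ equals that of $(\mu_1,\dots,\mu_n)$. Each $\mu_i$ takes values in the compact metric space $\meas(\cX)$ (the standard simplex). Hence for each fixed $k$ the law $\Lambda^{(n)}_k$ of $(\mu_{1},\dots,\mu_{k})$ on $\meas(\cX)^k$ is a probability measure on a fixed compact space, so the family $\{\Lambda^{(n)}_k\}_{n}$ is automatically tight.

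Next I would run a diagonal extraction. Starting from the given diverging sequence $R_0$, by compactness of $\meas(\meas(\cX))$ (Prokhorov) I extract a subsequence $R_1\subseteq R_0$ along which $\Lambda^{(n)}_1$ converges weakly; then a further subsequence $R_2\subseteq R_1$ along which $\Lambda^{(n)}_2$ converges weakly; and so on, finally taking the diagonal subsequence $R$. Along $R$, for every fixed $k$ the laws $\Lambda^{(n)}_k$ converge weakly to some limit $\Lambda_k$ on $\meas(\cX)^k$, and these limits are manifestly consistent under marginalization. Moreover each $\Lambda_k$ is exchangeable, being a weak limit of exchangeable laws. By Kolmogorov extension there is an infinite exchangeable sequence $(\widetilde\mu_1,\widetilde\mu_2,\dots)$ of $\meas(\cX)$-valued random variables with these finite-dimensional marginals. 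De Finetti's theorem then furnishes a (unique) random probability measure, i.e.\ a distribution $S$ on $\meas^2(\cX)=\meas(\meas(\cX))$, such that conditionally on the directing measure $\rho$ the $\widetilde\mu_i$ are iid with law $\rho$; equivalently $\Lambda_k=\int\rho^k\, S(\de\rho)=\sP_{S,k}$. For a bounded Lipschitz (hence bounded continuous) $\varphi:\meas(\cX)^k\to\reals$, weak convergence gives
\begin{eqnarray}
\lim_{n\in R}\E_G\E\{\varphi(\mu_{1},\dots,\mu_{k})\}=\int\varphi\,\de\Lambda_k=\sE_{S,k}\{\varphi(\mu_1,\dots,\mu_k)\}\, ,\nonumber
\end{eqnarray}
and by exchangeability the left side is unchanged if $(1,\dots,k)$ is replaced by any distinct $\{i(1),\dots,i(k)\}$, which is \eqref{eq:deFinetti}.

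The only genuinely delicate points are measurability/topology bookkeeping rather than deep obstacles: one must check that $\meas^2(\cX)$ with the $\sigma$-field generated by $\rho\mapsto\rho(A)$ (as introduced before Theorem \ref{thm:Main}) coincides with its Borel $\sigma$-field for the weak topology, so that de Finetti's directing measure is a legitimate element of $\meas^2(\cX)$; this holds because $\meas(\cX)$ is a compact (indeed Polish) metric space, so $\meas(\meas(\cX))$ is Polish and the evaluation maps generate its Borel field. The subsequence $R$ and the distribution $S$ depend on the ensemble only through the sequence of exchangeable laws, not on $k$, because the diagonal extraction is performed once and for all. I expect the main (minor) obstacle to be making the diagonal-subsequence argument uniform in $k$ cleanly, and verifying that the de Finetti representation obtained at the level of the infinite sequence is consistent with \emph{every} finite $k$ simultaneously — which is exactly what the consistency of the $\Lambda_k$'s guarantees.
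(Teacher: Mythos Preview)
Your proposal is correct and follows essentially the same approach as the paper: exchangeability of $(\mu_1,\dots,\mu_n)$, tightness on the compact simplex $\meas(\cX)$, a diagonal subsequence extraction to get consistent exchangeable finite-dimensional limits, and an appeal to de Finetti/Hewitt--Savage to produce the directing measure $S$. The paper's proof is slightly terser (it does not spell out Kolmogorov extension or the measurability bookkeeping you mention), but the skeleton is identical.
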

\begin{proof}
We shall assume, without loss of generality, that $R_0=\naturals$.
Notice that $(\mu_1,\dots,\mu_n)$ is a family of exchangeable random 
variables. By tightness, for each $i=1,2,\dots$, there exists
a subsequence $R_i$ such that $(\mu_1,\dots,\mu_i)$ converges in distribution,
and $R_{i+1}\subseteq R_i$.
Construct the subsequence $R$ whose $j$-th element is the $j$-th
element of $R_j$.
Then for any $k$, 
$(\mu_{i(1)},\dots,\mu_{i(k)})$ converges in distribution along $R$
to an exchangeable set  $(\mu^{(k)}_{1},\dots,\mu^{(k)}_{k})$.
Further the projection of the law
of $(\mu^{(k)}_1,\dots,\mu^{(k)}_k)$ on the first $k-1$ variables is the law
of  $(\mu^{(k-1)}_1,\dots,\mu^{(k-1)}_{k-1})$.
Therefore, this defines an exchangeable distribution over
the infinite collection of random variables $\{\mu_i:\, i=1,2,\dots\}$.
By de Finetti, Hewitt-Savage Theorem  \cite{deFinetti,HewittSavage}
there exists $S$ such that, for any $k$, the joint distribution 
of $(\mu_1,\dots,\mu_k)$ is $\sP_{S,k}$. In particular
\begin{eqnarray*}
\lim_{n\in R}\E_G\E\{\varphi(\mu_{i(1)},\dots,\mu_{i(k)})\} =
\E\{\varphi(\mu_{1},\dots,\mu_{k})\}=
\sE_{S,k}\{\varphi(\mu_{1},\dots,\mu_{k})\}\, .
\end{eqnarray*}
\end{proof}

\begin{proof}\emph{[Main Theorem]}
By Lemma \ref{lemma:deFinetti}, Eq.~(\ref{eq:Main}) holds for some
probability distribution $S_{\theta}$ on $\meas^2(\cX)$. It remains 
to prove that $S_{\theta}$ is supported over the fixed points of the
density evolution equation (\ref{eq:DensityEvolution}).

Let $\varphi:\meas(\cX)\to\reals$ be a test function that we can assume,
without loss of generality bounded by $1$, and with Lipschitz constant $1$. 
Further, let $\dBall(i)\equiv\dBall(i,1)$ and $\mu^{\theta(i)}_{\dBall(i)}
\equiv \{\mu_j^{\theta(i)};\; j\in\dBall(i)\}$.
By Theorem \ref{thm:ApproximateMarginal}, together with
the Lipschitz  property and boundedness, we have
\begin{eqnarray}
\int_0^{\eps}\E_G\E\,\left\{\Big[
\varphi\big(\mu_i^{\theta}\big)-
\varphi\big(\Fbp_i\big(\mu^{\theta,(i)}_{\dBall(i)}\big)\big)\Big]^2
\right\}
\;\de\theta \le \frac{{\sf A}'}{\sqrt{n}}\, .
\end{eqnarray}
Fix now two variable nodes, say $i=1$ and $i=2$.
Using Cauchy-Schwarz, this implies
\begin{eqnarray*}
\int_0^{\eps}\left|\E_G\E\,\left\{\Big[
\varphi\big(\mu_1^{\theta}\big)-
\varphi\big(\Fbp_1\big(\mu^{\theta,(1)}_{\dBall(1)}\big)\big)\Big]\,
\Big[
\varphi\big(\mu_2^{\theta}\big)-
\varphi\big(\Fbp_2\big(\mu^{\theta,(2)}_{\dBall(2)}\big)\big)\Big]
\right\}\right|
\;\de\theta \le  \frac{{\sf A}'}{\sqrt{n}}\, .
\end{eqnarray*}
Applying dominated convergence theorem, it follows that, 
for almost all $\theta\in[0,\eps]$,
\begin{eqnarray}
\lim_{n\to\infty} \E_G\E\,\left\{\Big[
\varphi\big(\mu_1^{\theta}\big)-
\varphi\big(\Fbp_1\big(\mu^{\theta,(1)}_{\dBall(1)}\big)\big)\Big]\,
\Big[
\varphi\big(\mu_2^{\theta}\big)-
\varphi\big(\Fbp_2\big(\mu^{\theta,(2)}_{\dBall(2)}\big)\big)\Big]
\right\} = 0\, .\label{eq:LimitScalarProd}
\end{eqnarray}
By Lemma \ref{lemma:deFinetti}, we can find a sequence 
$R_{\theta}$, and a distribution $S_{\theta}$ over $\meas(\cX)^2$,
such that Eq.~(\ref{eq:deFinetti}) holds. We claim that along such a sequence
\begin{eqnarray}
&&\hspace{-0.75cm}\lim_{n\in R} \E_G\E\,\{
\varphi\big(\mu_1^{\theta}\big)\varphi\big(\mu_1^{\theta}\big)\}=
\sE_{S_\theta,2}\{\varphi(\mu_1)\varphi(\mu_2)\}\, ,\label{eq:Limit1}\\
&&\hspace{-0.75cm}
\lim_{n\in R} \E_G\E\,\big\{\varphi\big(\mu_1^{\theta}\big)
\varphi\big(\Fbp_2\big(\mu^{\theta,(2)}_{\dBall(2)}\big)\big\}=
\E\,\sE_{S_\theta,k+1}\{\varphi\big(\mu_1\big)
\varphi\big(\Fde\big(\mu_2,\cdots,\mu_{k+1}\big))\}\, ,\label{eq:Limit2}\\
&&\hspace{-0.75cm}\lim_{n\in R} \E_G\E\,\left\{
\varphi\big(\Fbp_1\big(\mu^{\theta,(1)}_{\dBall(1)}\big)\big)
\varphi\big(\Fbp_2\big(\mu^{\theta,(2)}_{\dBall(2)}\big)\big)\right\}=
\E\,
\sE_{S_\theta,k_1+k_2}\{\varphi\big(\Fde_1\big(\mu_1,\cdots,\mu_{k_1}\big)\big)
\varphi\big(\Fde_2\big(\mu_{k_1+1},\cdots,\mu_{k_1+k_2}\big))\}\, .\nonumber\\
\label{eq:Limit3}
\end{eqnarray}
Here the expectations on the right hand sides are 
with respect to marginals $\mu_1,\mu_2,\dots$ distributed according to 
$\sP_{S_{\theta},\cdot}$ (this expectation is denoted as 
$\sE_{S_{\theta},\cdot}$)
as well as with respect to independent random mappings 
$\Fde:\meas(\cX)^*\to\meas(\cX)$ defined 
as in Section \ref{sec:BP}, cf. Eq.~(\ref{eq:DensityEvolution})
(this includes expectation with respect to $k$, $k_1$, $k_2$
and is denoted as $\E$).

Before proving the above limits, let us show that they implies the thesis.
Substituting Eqs.~(\ref{eq:Limit1}) to (\ref{eq:Limit3}) in 
Eq.~(\ref{eq:LimitScalarProd}) and re-ordering the terms we get
\begin{eqnarray}
&&\int\Delta(\rho)^2\; S_{\theta}(\de\rho) = 0\, , \\
&&\Delta(\rho)\equiv \int \varphi(\mu)\,\rho(\de\mu)
-\E\int \varphi(\Fde\big(\mu_1,\cdots,\mu_{k}\big))\; \rho(\de\mu_1)
\cdots \rho(\de\mu_k)\, .
\end{eqnarray}
Therefore $\Delta(\rho)=0$ $S_{\theta}$-almost surely, which is 
what we needed to show in order to prove Theorem \ref{thm:Main}.

Let us now prove the limits above. Equation~(\ref{eq:Limit1})
follows is an immediate consequence of Lemma \ref{lemma:deFinetti}.
Next consider Eq.~(\ref{eq:Limit2}), 
and condition the expectation on the left-hand side upon 
$\Ball(i=2,t=1) = \Ball$,  as well as upon $W_{\Ball}$, cf. 
Eq.~(\ref{eq:WNotation1}). First notice that, by Lemma \ref{lemma:AvRemoveV}
\begin{eqnarray}
\E_G\E\,\big\{||\mu_1^{\theta}-\mu_1^{\theta,(2)}||_{\sTV}|\Ball,W_\Ball\big\}
\le \frac{{\sf C}}{\sqrt{n-|\Ball|}}\, .
\end{eqnarray}
and condition the expectation on the left-hand side upon 
$\Ball(i,1) = \Ball$, as well as upon $W_{\Ball}$, cf. 
Eq.~(\ref{eq:WNotation1}). 
As a consequence, by Lipschitz property and boundedness of $\varphi$
\begin{eqnarray}
\left|\E_G\E\,\big\{\varphi\big(\mu_1^{\theta}\big)
\varphi\big(\Fbp_2\big(\mu^{\theta,(2)}_{\dBall(2)}\big)|\Ball,W_\Ball\big\}
-\E_G\E\,\big\{\varphi\big(\mu_1^{\theta,(2)}\big)
\varphi\big(\Fbp_2\big(\mu^{\theta,(2)}_{\dBall(2)}\big)|\Ball,W_\Ball\big\}
\right| \le \frac{{\sf C}}{\sqrt{n-|\Ball|}}\, .\label{eq:Subst}
\end{eqnarray}
In the second term the $\mu^{\theta,(2)}_j$ are independent of 
the conditioning, of the function  $\Fbp_2$ (which is deterministic once
$\Ball$, $W_{\Ball}$ are given). Therefore, by Lemma \ref{lemma:deFinetti}
(here we are taking the limit on the joint distribution
of the $\mu_j^{\theta(2)}$, but not on $\Fbp_2$; to emphasize this point 
we note the latter as $\Fbps_2$)
\begin{eqnarray}
\lim_{n\in R_{\theta}} \E_G\E\,\big\{\varphi\big(\mu_1^{\theta,(2)}\big)
\varphi\big(\Fbps_2\big(\mu^{\theta,(2)}_{\dBall(2)}\big)|\Ball,W_\Ball\big\}
= \sE_{S,k}\{\varphi(\mu_1)\varphi(\Fbps_2(\mu_{2},\dots,
\mu_{1+|\dBall(2)}|))\}\, .
\end{eqnarray}
(Notice that the graph whose expectation is considered on the left hand
side is from the ensemble $\graph(n-|V(\Ball)|,\alpha n-|F(\Ball)|,
\gamma/n)$. The limit measure $S_{\theta}$ could \emph{a priori} be 
different from the one for the  ensemble $\graph(n,\alpha n,
\gamma/n)$. However, Lemmas \ref{lemma:AvRemoveV}, \ref{lemma:AvRemoveF}
 imply that this cannot be the case.)

By using dominated convergence and Eq.~(\ref{eq:Subst}) we get 
\begin{eqnarray*}
\lim_{n\in R_{\theta}}
\E_G\E\,\big\{\varphi\big(\mu_1^{\theta}\big)
\varphi\big(\Fbps_2\big(\mu^{\theta,(2)}_{\dBall(2)}\big)\} &=&
\E_{\Ball,W_{\Ball}}\left\{\lim_{n\in R_{\theta}}
\E_G\E\,\big\{\varphi\big(\mu_1^{\theta}\big)
\varphi\big(\Fbps_2\big(\mu^{\theta,(2)}_{\dBall(2)}\big)|\Ball,W_\Ball\big\}
\right\}
=\\
&= &\E_{\Ball,W_{\Ball}}\sE_{S,k}
\Big\{\varphi(\mu_1)\varphi(\Fbps_2(\mu_{2},\dots,
\mu_{1+|\dBall(2)}|))\Big\}\, .
\end{eqnarray*}
Finally we can take the limit $n_*\to \infty$ as well. By local 
convergence of the graph to the tree model, we have uniform convergence of
$\Fbps_2$ to $\Fde$ and thus Eq.~(\ref{eq:Limit2}).

The proof of Eq.~(\ref{eq:Limit3}) is completely analogous to the latter
and is omitted to avoid redundancies.
\end{proof}
%
%

\bibliographystyle{alpha}

\end{document}